\documentclass[10pt,a4paper]{amsart}

\usepackage{latexsym,amssymb,amsmath,amsthm,amsfonts,enumerate,verbatim,xspace,
exscale}
\usepackage{graphicx}
\usepackage{color,amsbsy,textcomp}
\usepackage{enumerate}
\usepackage{float} 

\usepackage{orcidlink} 

\usepackage{hyperref}
\hypersetup{
    colorlinks=true,
    linkcolor=blue,
    filecolor=magenta,      
    urlcolor=cyan,
}
%\usepackage{hyperref}
%\definecolor{darkblue}{rgb}{0,0,.5}
%\hypersetup{pdftex=true, colorlinks=true, breaklinks=true,inkcolor=darkblue, menucolor=darkblue, pagecolor=darkblue,urlcolor=darkblue,citecolor=darkblue}
%\hypersetup{urlcolor=blue, citecolor=red}

\input xy 
\xyoption{all} 
\CompileMatrices
\UseComputerModernTips

%\newenvironment{fminipage}%
%{\begin{Sbox}\begin{minipage}}%
%{\end{minipage}\end{Sbox}\fbox{\TheSbox}}

%\parindent 0mm
%\parskip 1mm 
%\usepackage{graphicx}

%\addtolength{\textheight}{40mm} 
%\addtolength{\textwidth}{42mm}  %<<<<<<<
\addtolength{\textwidth}{45mm} 
                                
%%\addtolength{\voffset}{-18mm} 
%\addtolength{\hoffset}{-22mm}  %<<<<<<<<
\addtolength{\hoffset}{-23mm}  %<<<<<<<<

\theoremstyle{plain}
\newtheorem{theorem}{Theorem}[section]
\newtheorem{lemma}[theorem]{Lemma}

\theoremstyle{definition}
\newtheorem{definition}[theorem]{Definition}
\newtheorem{remark}[theorem]{Remark}

\theoremstyle{definition}
\newtheorem{ass}[theorem]{Assumption}
\newtheorem{mr}[theorem]{Rule}

\setcounter{secnumdepth}{2}
\setcounter{tocdepth}{1}

\setcounter{MaxMatrixCols}{16}

\def\R{\mathbb{R}}

\def\N{\mathbb{N}}

%Generalities

\newcommand{\di}{\mbox{$\,\textup{d}$}}

\newcommand{\up}{\upshape}

%\newcommand{\x}{$\hfill\Box$}

%Arrows

\newcommand{\longto}{\longrightarrow}

%\newcommand{totoleft}{\twoheadleftarrow}

%Useful expressions

\def\vv<#1>{\langle#1\rangle}

\newcommand{\id}{\mbox{$\text{\up{id}}\,$}}

%Greek
\newcommand{\om}{\omega}
\newcommand{\Om}{\Omega}

\newcommand{\eps}{\varepsilon}
\newcommand{\lam}{\lambda}

%Math accents

%Basics
%\newcommand{\N}{\mbox{$\bb{N}$}}
%\newcommand{\Z}{\mbox{$\bb{Z}$}}
%\newcommand{\Q}{\mbox{$\bb{Q}$}}
%\newcommand{\mathbb{R}}{\mbox{$\bb{R}$}}
%\newcommand{\C}{\mbox{$\bb{C}$}}
%\newcommand{\K}{\mbox{$\bb{K}$}}

%\newcommand{\u}{\mbox{$\mathfrak{u}$}}
%\newcommand{\O}{\mbox{$\textup{O}$}}
%\newcommand{\o}{\mbox{$\mathfrak{o}$}}

%TODO COMMAND
\newcommand{\todo}[1]{\phantom{u}\vspace{5 mm}\par \noindent
\marginpar{\textsc{ToDo}} \framebox{\begin{minipage}[c]{0.95
\textwidth}\raggedright \tt #1 \end{minipage}}\vspace{5 mm}\par}

% empty TODO COMMAND
%\newcommand{\todo}[1]{}

%\def\vv<#1>{\langle#1\rangle}

%Include R Code 
%\DeclareRobustCommand{\rcode}[1]{
%\vspace{2cm}
%\begin{center}
%\texttt{R-Code}
%\end{center}
%{ #1 } 
%}

%\newenvironment{rcode}
%{
  %\vspace{0.5cm}
  %\begin{center}
   % \texttt{R-Code}
  %\end{center}
  %\begin{verbatim}
 % \verbatim
%}
%{ 
%  %\end{verbatim} 
%  \vspace{0.5cm}
%}

%% ODER so was in der Art:
% \renewenvironment{rcode}{\begin{comment}}{\end{comment}}

%REVISION
\newcommand{\revise}[1]{#1}%{{\color{cyan} #1}} % {#1}%{{\color{red} #1}}

%\newcommand{\S}[1]{{\color{green} #1}} % {#1}%{{\color{red} #1}}

%{{\color{blue} #1}}

\usepackage[normalem]{ulem}
\usepackage{xcolor}

\newcommand\deleteF{\bgroup\markoverwith{\textcolor{blue}{\rule[0.8ex]{2pt}{0.9pt}}}\ULon}
\newcommand\deleteS{\bgroup\markoverwith{\textcolor{green}{\rule[0.8ex]{2pt}{0.9pt}}}\ULon}

%\newcommand{\deleteF}[1]{\sout{#1}} % {#1}%{{\color{red} #1}}
%\newcommand{\deleteS}[1]{\sout{#1}} 

%Numerical Asset Liability Management modelling in life insurance\\ 
%\& \\
%Numerical validation of analytic $FDB$ estimation

\usepackage[foot]{amsaddr}

\title[MF-LMM and long term guarantees]%
{ 
Mean-field Libor market model\\ 
and\\ 
valuation of long term guarantees 
}

\author[Florian Gach, Simon Hochgerner, Eva Kienbacher, Gabriel Schachinger]{Florian Gach\,\orcidlink{0000-0003-2835-8226}${}^1$, Simon Hochgerner\,\orcidlink{0000-0002-3978-3706}${}^{*,1}$, Eva Kienbacher${}^2$, Gabriel Schachinger${}^1$ 
}

\address{%
${}^1$ Austrian Financial Market Authority (FMA),  Otto-Wagner Platz 5, A-1090 Vienna;
} 

\address{${}^2$ % 
%TU Wien, Karlsplatz 13, 1040 Vienna (Alumna); 
Oberösterreichische Versicherung AG, Gruberstraße 32, A-4020 Linz; 
} 

\address{${}^*$ Corresponding author;}

\email{Florian.Gach@gmail.com}
\email{Simon.Hochgerner@fma.gv.at}
\email{GabrielSchachinger@gmail.com}
\email{e.kienbacher@ooev.at}

\thanks{\emph{Disclaimer.} 
The opinions expressed in this article are those of the authors and do not necessarily reflect the official position of the Austrian Financial Market Authority. } 
%\thanks{$^{\ast}$ Corresponding author}
\date{\revise{March 5, 2025}}
\keywords{Solvency II, Future Discretionary Benefits, Asset Liability Management, Market Consistent Valuation, Libor Market Model}
%\subjclass[2010]{XXX} 

\begin{document}

\begin{abstract}
Existence and uniqueness of solutions to the multi-dimensional mean-field Libor market model (introduced by \cite{MFLMM}) is shown. This is used as the basis for a numerical asset-liability management (ALM) model capable of calculating future discretionary benefits in accordance with Solvency~II regulation. This ALM model is complimented with aggregated life insurance data to perform a realistic numerical study.   
This yields numerical evidence for  heuristic assumptions which allow to derive estimators of lower and upper bounds for future discretionary benefits. These estimators are applied to publicly available life insurance data. 
\end{abstract}

\maketitle

\tableofcontents

\section*{Introduction}
This paper is about the market consistent valuation of life insurance with profit participation. These types of products are characterized by a minimum guarantee rate and a profit declaration mechanism that lets the policyholder participate in the company's net profits. The task of assigning a market consistent value to profit participating life insurance liabilities is non-trivial for many reasons, major challenges in designing an appropriate (numerical) model are: 
\begin{itemize}
\item 
Profit is defined in terms of locally generally accepted accounting principles (local GAAP) which implies that assets have to be modelled not only with their market values but also with their book values. 
\item 
Profit declaration is not completely regulated by legislature, thus management rules have to be taken into account. 
\item 
The above points mean that any such model will consist of many subroutines, and there is no realistic hope for a closed formula solution for market consistent valuation. Hence an economic scenario generator (consisting, in particular, of a choice of an interest rate model) has to be fixed, and the valuation procedure depends on Monte Carlo methods. 
\item 
Not least, life insurance is a long term business, and an an appropriate model may have a projection horizon of $60$ years, or even more. 
\end{itemize} 

In the present work we consider the market consistent value to be defined in accordance with Solvency~II which is the relevant regulation for insurers operating in the  European Economic Area. 
This means that the market consistent value is equal to the \emph{technical provisions} as defined in \cite{L1,L2}. The technical provisions are a sum of  a \emph{best estimate}, $BE$, and a risk margin, $RM$. The best estimate, in turn, can be represented as a sum of \emph{guaranteed benefits} and \emph{future discretionary benefits}, $BE = GB+FDB$. All of these quantities are reporting items (\cite{L3templ}), meaning they are part of the Solvency~II balance sheet. Moreover, these quantities are not only reported to national supervisory agencies but it is also mandatory to make these publicly available in the so-called \emph{solvency and financial conditions report (SFCR)}. A prototypical example consisting of partial balance sheet data is provided in Table~\ref{table:A2022}. This table lists liabilities, of which $GB$ and $FDB$, associated to profit participating business, are the two largest items, assets, and own funds, which are the difference between assets and liabilities. It is noteworthy that the future discretionary benefits are larger than the own funds, whence any error in $FDB$ calculation translates to an even bigger error in the own funds, and further in the solvency ratio $OF/SCR$. 

\begin{table}[H]
\centering
\begin{tabular}{lr}
%Quantity                      
% & \multicolumn{1}{l}{YE 2022}  \\ 
\hline
Best Estimate Life w PP & 203.1 bEUR  
% p. 49 SFCR                   
\\
$\phantom{XXX}GB$                       & 163.6 bEUR          
% p. 49 SFCR                              
\\
$\phantom{XXX}FDB$              & 39.5   bEUR      
% p. 49 SFCR                               
\\
\hline
Liabilities  & 234.5 bEUR   % p. 78 SFCR
\\
Assets                     & 269.0 bEUR 
% p. 77 SFCR                  
\\
\hline
Own funds ($OF$)                    &  34.1\, bEUR   \\
Solvency capital requirement ($SCR$)                     &  8.2\, bEUR   \\
Solvency ratio ($OF/SCR$)                    & $416\,\%$                                             \\
\hline
\end{tabular}
\vspace{2ex} 
\caption{Solvency~II balance sheet YE 2022: Allianz Lebensversicherung AG (Germany) Solvency and Financial Condition Report (SFCR) 2022; values in billion Euro; public data (\cite{SFCR})
%; source: \href{https://www.allianz.de/unternehmen/zahlen-daten-fakten/solvabilitaet-und-finanzlage/}{https://www.allianz.de/unternehmen/zahlen-daten-fakten/solvabilitaet-und-finanzlage/}
.
}
\label{table:A2022}
\end{table}

The goals of this paper are to, firstly, add to the general understanding of $FDB$-calculation, and secondly, provide tools for fast and reliable $FDB$-validation. In this regard the contributions are threefold: 

\begin{enumerate}
\item 
Advocating the mean-field Libor market model (MF-LMM) that was introduced in \cite{MFLMM}, we prove an existence and uniqueness result which allows us to use this model as an interest rate scenario generator. As shown in a numerical study in \cite{MFLMM} the MF-LMM can be used to reduce the blow-up probability troubling the Libor market model, and is therefore well-suited for numerical valuation of long term guarantees.  However, in \cite{MFLMM} the authors were able to show existence and uniqueness of solutions only in a one-dimensional special case. In Section~\ref{sec:mflmm} we extend existence and uniqueness of MF-LMM to the multi-dimensional setting under rather general assumptions on the coefficients (Theorem~\ref{thm:mflmm}). This result relies on a straightforward application of the $L$-derivative introduced by P.-L.\ Lions (\cite{Car12}). 
\item 
Section~\ref{sec:numALM} contains a description of a numerical asset-liability management (ALM) model that is capable of calculating a realistic $FDB$ value in the Solvency~II sense, and uses the MF-LMM as an interest rate scenario generator. The focus in this section is on a detailed description of modelled management rules. Some of the notation and concepts that are used in the numerical Section~\ref{sec:numALM} are introduced in the previous Section~\ref{sec:li} in a more abstract context
\item 
In Section~\ref{sec:assump} we use the ALM model from Section~\ref{sec:numALM} to provide numerical evidence for phenomenological assumptions concerning various quantities that are involved in the Monte Carlo based $FDB$-calculation. These assumptions follow solely from heuristic ideas, and have no mathematical proofs, however the numerical evidence allows us to conclude that the assumptions are quite reliable -- at least for the (rather wide) range of parameters that we have tested. 
These assumptions are used in Section~\ref{sec:est} to derive algebraic formulas to estimate lower and upper bounds, $\widehat{LB}$ and $\widehat{UB}$, for $FDB$. These bounds yield an estimate $\widehat{FDB} = (\widehat{LB}+\widehat{UB})/2$ for $FDB$, which is useful when the potential error $\eps = (\widehat{UB}-\widehat{LB})/2$ is sufficiently small (e.g.\ compared to the market value of assets). This estimation is compared to the Monte-Carlo value of $FDB$ in Table~\ref{tab:num_res} using the numerical ALM and anonymized insurance data. Finally, in Section~\ref{sec:All_est} we compare the estimation result $\widehat{FDB}$ to publicly available $FDB$-figures, and show how to calculate  $\widehat{FDB}$ using only publicly available data. 
\end{enumerate}

The estimation formulae for $\widehat{LB}$ and $\widehat{UB}$ are a continuation of the work begun in \cite{HG19,GH22} by two of the authors. Compared to \cite{GH22} we have simplified the set of assumptions, improved and simplified the calculation of the lower bound, and, crucially, tested the assumptions on a numerical model (i.e.\ the ALM model of Section~\ref{sec:numALM}). 

Further background on  market consistent valuation of life insurance products can be found in  \cite{SS04, OBrien09, Delong,  W16, Gerstner08, Gerstner09, Bacin21, Engsner_etal17, FN21}). These works all address the interplay between asset and liability modelling but present strongly simplified versions of asset portfolios, in particular book values of assets are not modelled. Besides \cite{HG19,GH22}, the case for a realistic representation of (asset) accounting principles is made in  \cite{D_etal17,VELP17,Albrecher_etal18, Dorobantu_etal20}. The present paper follows this logic, and shows that accounting principles and management rules can be formulated in a manner amenable to mathematical analysis such that general formulae for lower and upper bounds of future discretionary benefits may be derived. As mentioned, these formulae depend on a number of assumptions. The overarching principle behind these assumptions is that the management strives to maximize the shareholder value while remaining a competitive player in the insurance market. This is reminiscent of an optimal control problem, and is formulated as such in Section~\ref{sec:con}. The logical next step would be to show that the assumptions of Section~\ref{sec:assump} define (or: are in an appropriate sense close to) the optimal strategy. However, this conclusion is outside the scope of this work, and is left open as a problem for the future.

\section{Mean-field Libor market model}\label{sec:mflmm}
To simulate market values of bonds, or other interest rate dependent assets, and discount factors, a stochastic interest rate model is needed. To this end, a popular choice in the insurance industry is the Libor market model (e.g., \cite{BM06,Fil}). A well-known downside of  this model is, however,  that  it tends to suffer from blow-up (\cite{G11}). This can have undesirable consequences in conjunction with a numerical ALM model (Section~\ref{sec:numALM}): when forward rates are too high over a prolonged period of time some quantities (e.g., the balance sheet item of declared bonuses, $DB$) may become so large that precision differences occur, and this can in turn trigger undesired and unrealistic management actions or even cause leakage: for example, when a rounding error between assets and liabilities is blown up such that the book value of assets does no longer suffice to cover the statutory value of liabilities management is required to inject additional capital; such a management action is then undesired because it causes leakage, and not realistic because it is triggered by a numerical artefact which does not have a counterpart in reality.  

With the aim of reducing the probability for exploding rates, while at the same time retaining the main features of the Libor market model, \cite{MFLMM} introduced a mean-field Libor market model (MF-LMM). In this model the growth of the second moment is controlled via a mean-field interaction.  

\subsection{MF-LMM: model formulation}
Let us fix a tenor structure $0 = t_0 < t_1 = \delta < \ldots < t_N = N\delta$ where $N\in\N$ and $\delta>0$ are fixed. In the ALM model we consider yearly time steps such that $\delta=1$ and $N$ corresponds to the projection horizon $T$ of Section~\ref{sec:li}. 
For $i=0,\dots,N-1$ the $i$-th forward (Libor) rate valid on $[t,t_{i+1}]$ as seen at time $t \le t_i$ is
\begin{equation*}
F_t^{i}
:= 
\frac{1}{\delta} 
\frac{P(t,t_{i}) - P(t,t_{i+1})}{P(t,t_{i+1})}\,
\end{equation*} 
where $P(t,t_i)$ is the value at $t$ of $1$ currency unit paid at $t_i$.

In the following, we assume that each Brownian motion is defined on a complete filtered probability space which satisfies the usual assumptions (i.e.\ completeness of filtration and right-continuity). 

The classical Libor market model (e.g., \cite{BM06,Fil}) is now specified as follows. For each index $i=0,\dots,N-1$, there should be an $\R^d$-valued diffusion coefficient $\sigma_i = \sigma_i(t)$, a  forward measure $\mathcal{Q}^i$ and a corresponding $d$-dimensional Brownian motion $W^i$ ($d\ge1$), such that the dynamics of $F^i$ under $\mathcal{Q}^i$ are given by 
\begin{align}
	\label{e:LMM}
	dF_t^{i} 
	= F_t^{i}  \sigma_i^{\top} \,dW^{i}_t
	, \qquad t\le t_{i}
\end{align}
where the initial condition is given by the prevailing yield curve at $t=0$. 

Let  $\mathcal{P}(\R^k)$ be the space of probability measures on $\R^k$ and  $\mathcal{P}_2(\R^k) 
= \{\mu\in\mathcal{P}(\R^k) : \int_{\mathbb{R}^k}\vv<x,x>\,\mu(dx)<\infty\}$
the subspace of probability measures with finite second moment.

The idea of the mean-field extension  is to allow $\sigma_i$ to depend not only on time but also on the law of the process. 
Namely, we assume that $\sigma_i: [0,t_{i}]\times\mathcal{P}_2(\R)\to\R^d$:
\begin{align}
    \label{e:LMM-mf}
    \sigma_i 
    &= \sigma_i\Big(t,\mu_{t}^{i}\Big)
    = \lambda_i\Big(t, \mu_t^i(h_i^1), \dots,\mu_t^i(h_i^l)\Big) 
\end{align}
where:
\begin{itemize}
    \item 
    $\mu_{t}^{i} = \operatorname{Law}(F_t^{i}) = (F_t^i)_{\sharp}\mathcal{Q}^i 
    = \mathcal{Q}^i \circ (F_t^i)^{-1}$ 
    is the law of $F_t^i$ under $\mathcal{Q}^i$,
    i.e., the push-forward of $\mathcal{Q}^i$ with respect to $F_t^i$;
    %\item 
    %$\var_{\mathcal{Q}^i}[F_t^{i}] = \var_{\mu^i}$ is the variance of $F_t^i$ with respect to the forward measure $\mathcal{Q}^i$;
    \item 
    $\lambda_i: \R_+\times\R^l\to\R^d$ is a vector valued function with components $\lam_i^j$, $j=1,\dots,d$; %which is Lipschitz continuous in $t$   and satisfies additional conditions, which are specified in Theorem~\ref{thm:mflmm} below, in the space variable;
    \item 
    $h_i: \R \to \R^{l}$ is a vector valued function with components $h_i^k$, $k=1,\dots,l$;
    \item 
    $\mu_t^i(h_i^k) 
    = \int_{\mathbb{R}}h_i^k(x)\,\mu_t^{i}(dx)$
\end{itemize}

Conditions on $\lam_i$ and $h_i$ for the existence and uniqueness of solutions to the mean-field system~\eqref{e:LMM}-\eqref{e:LMM-mf} are given in Theorem~\ref{thm:mflmm} below. 

The dimension $l$ may be chosen arbitrarily following the modelling needs. We will be mostly concerned with the case $l=2$.  
Indeed, a particularly interesting special case arises when $h_1(x) = x$, $h_2(x) = x^2$, independently of $i$, and $\lam_i = \lam_i(t, \mu_t^i(h_1), \mu_t^i(h_2))$.  Notice that $\mu_t^i(h_2) - \mu_t^i(h_1)^2 = \operatorname{Var}_{\mathcal{Q}^i}[F_t^{i}] = \operatorname{Var}_{\mu_t^i}$ is the variance of $F_t^i$ with respect to the forward measure $\mathcal{Q}^i$. 
By slight abuse of notation, we will write 
\[ 
 \lam_i = \lam_i\Big(t, \mu_t^i(h_2) - \mu_t^i(h_1)^2\Big)
\] 
in this case such that $\lam_i: \R_+\times\R\to\R^d$.  That is,   we consider the model 
\begin{align}
    \label{e:LMM-mf-var}
    \sigma_i 
    &= \sigma_i\Big(t,\mu_{t}^{i}\Big)
    = \lambda_i\Big(t,\operatorname{Var}_{\mathcal{Q}^i}[F_t^{i}]\Big). %\\
    %&= \sigma_i^{(1)}(t)\exp\Big(-\max(\mathbb{V}_{\mathcal{Q}^i}[F_t^{i}]-V_0, 0)/V_0 \Big)  
\end{align}
The variance is a measure for the expected width of the scenario cone, therefore introducing such a dependency opens up the possibility to control the blow-up probability of the process without the necessity of including a state dependent mean reverting property.

\subsection{Existence and uniqueness}
The question of existence and uniqueness was answered only partially for one-dimensional stochastic drivers in \cite{MFLMM}. Since we aim to simulate more general multi-dimensional processes in the MF-LMM framework in our numerical study, we derive below general conditions for existence and uniqueness. 

Mean-Field SDEs, also called McKean-Vlasov SDEs or nonlinear diffusions, were introduced and studied in the works of McKean \cite{McKean66} and Sznitman \cite{Sni91}. In the classical set-up it is assumed that the dependency on the measure variable is linear: in the context of equation~\eqref{e:LMM-mf}, this would mean that $\sigma$ should be of the form $\sigma(t, \mu) = \int f(t, x)\,d\mu(x)$ for a function $f$ where we have suppressed the index. Since the variance is quadratic in the measure variable, the prescription \eqref{e:LMM-mf-var} does not fit into this scheme. Conditions for existence and uniqueness for mean-field SDEs with nonlinear measure dependency were derived by \cite{JMW08}, and the approach of this section is to use these results.  

Thus we need to show that $\sigma$ is Lipschitz in the measure variable. To do so, we use the $L$-derivative introduced by  P.-L.\ Lions (\cite{Car12}).

Before turning to the question of existence and uniqueness, we introduce the notion of an interacting particle system (IPS) associated to a mean-field equation. The IPS associated to \eqref{e:LMM}-\eqref{e:LMM-mf} is a system of SDEs that is defined as follows. Let $n\in\N$ and $1\le p\le n$. For a random variable $X$ denote by $\delta_X$ the Dirac measure centered at $X$.  The IPS associated to the mean-field Libor market model is
\begin{align}
 \label{e:IPS-LMM}
 d L^{i,p,n}_t
 &= L^{i,p,n}_t\sigma_i(t, \mu_t^{i,n})^{\bot}\,dW^{i,p} \\  
 \mu_t^{i,n}
 &= \frac{1}{n}\sum_{p=1}^n \delta_{L_t^{i,p,n}}
\end{align}
where $W^{i,p}$ are mutually independent $d$-dimensional Brownian motions. Note that $\mu_t^{i,n}$, which is called the empirical measure, is a stochastic measure. The initial conditions for $L_t^{i,p,n}$ are the same as those for $F_t^i$, that is, given by the initial forward rate $F_0^i$. 

Firstly, the IPS~\eqref{e:IPS-LMM} has structural relevance:  Due to the empirical measure, two processes $L^{i,p,n}$ and $L^{i,q,n}$ will in general not be independent. However, in the limit as $n$ goes to infinity one may hope that $L^{i,p,n}$ tends to a process $L^{i,p}$ which is a copy of the mean-field SDE \eqref{e:LMM}-\eqref{e:LMM-mf}. Thus the  processes  $L^{i,p,n}$ and $L^{i,q,n}$ would become independent in the limit $n\to\infty$, and this leads to the notion of propagation of chaos (see \cite{Sni91} and \cite{JW17} for a review). Secondly, for our purposes, the approximating IPS is not only of structural importance but also crucial from a practical point of view since the numerical simulation of a mean-field SDE is realized as a Monte-Carlo simulation of the approximating IPS, and this is the basis for the numerical algorithm in Section~\ref{sec:numALM}.  The corresponding result is provided in Theorem~\ref{thm:mflmm} below.

Let $\mathcal{P}_2(\R)$ be equipped with the Wasserstein distance 
\begin{equation}
 W_2(\mu_1, \mu_2)
 = \textup{inf}_{\pi\in\mathcal{C}(\mu_1,\mu_2)}
 \Big(
  \int_{\mathbb{R}^2} |x-y|^2\,\pi(dx,dy) 
 \Big)^{1/2}
\end{equation}
where $\mathcal{C}(\mu_1,\mu_2) = \{\gamma\in\mathcal{P}(\R\times\R) \textup{ with marginals } \mu_1\textup{ and }\mu_2\}$ denotes the set of all couplings. 
The space $\mathcal{P}_2(\R)$ is a Polish (i.e., separable and  completely metrizable topological) space. 

Since $\mathcal{P}_2(\R)$ is a nonlinear space, the notion of differentiability requires attention.
Consider a continuous function $f: \mathcal{P}_2(\R) \to\R$.  Following \cite{Car12,BRW21}, the derivative of $f$ at $\mu$ is defined by composing the function with a curve, $\mu_{\eps}$, through $\mu$ and evaluating $\partial f(\mu_{\eps})/\partial\eps|_{\eps=0}$.  
%Following \cite{BRW21}, the tangent space of $\mathcal{P}_2(\R)$ at a point $\mu$ is the space of $\mu$-square integrable functions, i.e.\ $T_{\mu}\mathcal{P}_2(\R) = L^2(\R,\mu)$, and this space is naturally isomorphic to its dual. Note that this definition  yields a tangent space that is larger than that of \cite{Car12}.
Concretely, let $\id: x\mapsto x$ be the identity map on $\R$. Given $\mu\in\mathcal{P}_2(\R)$ and 
$\phi\in L^2(\R,\mu)$, 
%$\phi\in T_{\mu}\mathcal{P}_2(\R) = L^2(\R,\mu)$ 
we define a curve with values in $\mathcal{P}_2(\R)$ by $\eps\mapsto\mu_{\eps}^{\phi} := (\id+\eps\phi)_{\sharp}\mu$. 

\begin{definition}[\cite{BRW21}]\label{e:defL}
Let $f: \mathcal{P}_2(\R) \to\R$ be a continuous function. 
\begin{enumerate}
\item
$f$ is called \emph{intrinsically differentiable} at $\mu\in\mathcal{P}_2(\R)$ with derivative $D^Lf(\mu)$ if
\[
%T_{\mu}\mathcal{P}_2(\R) =
L^2(\R,\mu)
\to\R,\quad
\phi\mapsto 
\frac{\partial}{\partial\eps}\Big|_{\eps=0} f(\mu_{\eps}^{\phi}) 
= 
\lim_{\eps\to0}
\frac{ f(\mu_{\eps}^{\phi} ) - f(\mu) }{\eps}
=: D^L_{\phi}f(\mu)
\]
is a bounded linear functional. In this case, $D^Lf(\mu)$ is characterized via the natural $L^2(\R,\mu)$-pairing through $\vv<D^Lf(\mu), \phi>_ {L^2(\R,\mu)}(\mathbb{R}) =  D^L_{\phi}f(\mu)$.
\item 
If additionally,
\[
 \lim_{||\phi||_{L^2(\R,\mu)}\to0}\frac{ | f((\id+\phi)_{\sharp}\mu) - f(\mu) - D^L_{\phi}f(\mu) | }{||\phi||_{L^2(\R,\mu)}}
 = 0
\]
for all $\phi\in L^2(\R,\mu)$, where $||\phi||_{L^2(\R,\mu)}$ is the $L^2$-norm of $\phi$, then $D^Lf(\mu)$ is called the \emph{Lions} or $L$-\emph{derivative} of $f$ at $\mu$.  
\end{enumerate}
\end{definition}

\begin{comment}
To apply this definition to a suitable class of functions, define
\[
 C_2(\R)
 := 
 \{
  h\in C(\R):
  h'\textup{ exists and }
  |\mu(h)|<\infty \textup{ and } |\mu(h')|<\infty 
  \textup{ for all }\mu\in\mathcal{P}_2 
 \}
\]
%\[
% C_0^1(\R)
% := 
% \{
%  f\in C(\R):
%  f'\textup{ exists and }
%  f(x)\sim x^2\textup{ and } f'(x)\sim x^2
%  \textup{ as }|x|\to\infty 
% \}
%\]
%where the continuous function $f$ behaves asymptotically as $x^2$, i.e.\ $f(x)\sim x^2$ as $|x|\to\infty$, if there is a constant $C$ such that $|f(x)/x^2|\le C$ for all $x\in\R$. 
which includes, e.g., all differentiable functions $h$ such that $h$ and $h'$ grow asymptotically at most as $x^2$. 
Consider now a differentiable function $g$ and $h\in C_2(\R)$,  
and define $f: \mathcal{P}_2(\R)\to\R$ through
\end{comment}
Consider functions $g$ and $h$ such that 
\begin{itemize}
\item 
$g: \R\to\R$ is differentiable;
\item 
$h: \R\to\R$ is twice differentiable, and $h\in L^1(\R,\mu)$, $h'\in L^2(\R,\mu)$ and $h''\in L^{\infty}(\R)$; 
\end{itemize}

\begin{remark}
These assumptions ensure that the calculation below make sense. Incidentally, we are interested in the cases $h(x)=x$ and $h(x)=x^2$, cf.~\eqref{e:LMM-mf-var}, whence these assumptions are not restrictive. On the other hand, it is relevant for our purposes that we do not need to assume more than the existence of the derivative of $g$.  Indeed, this allows to consider regime switching functions depending on $\max(\operatorname{Var}_{\mathcal{Q}^i}[F_t^{i}] - v_0, 0)$ where $\operatorname{Var}_{\mathcal{Q}^i}[F_t^{i}]$ is the variance of the process and $v_0$ is  a pre-defined variance threshold. Such a dependency is used in the dampening construction in \cite{MFLMM}, and in the combined dampening-anti-correlation implementation in Section~\ref{sec:esg}
\end{remark}

Define $f: \mathcal{P}_2(\R)\to\R$ through
\begin{equation}
 \label{e:ex1}
 f(\mu) = g(\mu(h))
\end{equation}
where $\mu(h) = E_{\mu}[h] = \int_{\mathbb{R}}h(x)\,\mu(dx)$. 
In the following, the notation $k(\eps) = o(\eps)$ means that an expression $k(\eps)$ satisfies $\lim_{\eps\to0}k(\eps)/\eps = 0$.
Observe that, for $\phi\in L^2(\R, \mu)$ and $\eps>0$, Taylor's expansion implies
\begin{align*}
 h(y+\eps\phi(y)) 
 = h(y) + h'(y)\eps\phi(y) + \frac{h''(\xi)}{2}\eps^2\phi(y)^2 
 = h(y) + h'(y)\eps\phi(y) + o(\eps)\phi(y)^2
\end{align*}
for a $\xi\in[y, y+\eps\phi(y)]$ and where $o(\eps)$ is independent of $y$ since $h''\in L^{\infty}(\R)$. 
Thus
\begin{align*}
 f\Big(
  \mu_{\eps}^{\phi} 
 \Big) 
 &= 
 %g\Big( 
 % (id+\eps \phi)_{\sharp}\mu(h) 
 %\Big) 
 %= 
  g\Big( 
  \int_{\mathbb{R}} h(x)\,
  ((id+\eps \phi)_{\sharp}\mu)\,(dx) 
 \Big) %\\
 %&
 =
 g\Big( 
  \int_{\mathbb{R}} h(y+\eps\phi(y)) \,
  \mu\,(dy) 
 \Big) 
 \\ 
 &= 
  g\Big( 
  \int_{\mathbb{R}} 
   \Big(
    h(y) + \eps \phi(y)h'(y)  + o(\eps)\phi(y)^2
   \Big) \,\mu\,(dy) 
 \Big)  \\ 
 &= 
 g\Big(\mu(h)\Big)
 +  g'\Big(\mu(h)\Big)
  \Big(
   \eps \vv<\phi,h'>_{L^2(\R,\mu)} + o(\eps)||\phi(y)||_{L^2(\R,\mu)}^2
  \Big) 
 + o\Big(
   \eps \vv<\phi,h'>_{L^2(\R,\mu)} + o(\eps)||\phi(y)||_{L^2(\R,\mu)}^2
 \Big) \\ 
 &= 
 f\Big(\mu\Big)
 + 
 \eps \Big\langle g'(\mu(h)) h', \phi \Big\rangle_{L^2(\R,\mu)}
 + o\Big(\eps\Big) 
 % o\Big(
 %  \Big| \eps \phi(y)h'(y)  + o(|\eps\phi(y)|)\Big|
 % \Big)
\end{align*}
Here we use that  
$
 o(\eps)g'(\mu(h))||\phi(y)||_{L^2(\R,\mu)}^2
 + 
  o(
   \eps \vv<\phi,h'>_{L^2(\R,\mu)} + o(\eps)||\phi(y)||_{L^2(\R,\mu)}^2
 )
 = o(\eps)
$.
Therefore, Definition~\ref{e:defL} yields 
\[
 D_{\phi}^Lf(\mu) = \vv< g'(\mu(h)) h', \phi>_{L^2(\R,\mu)}.
\] 
 
The Cauchy-Schwarz inequality now implies that the intrinsic derivative is also a Lions derivative and  is given by
\begin{equation}\label{e:ex1a}
 D^Lf(\mu)(x) 
 = g'(\mu(h)) h'(x) 
\end{equation}
for $x\in\R$. 
Slightly more generally, we can consider vector valued functions $f: \mathcal{P}_2(\R)\to\R^d$ of the form 
\begin{equation}
 f(\mu)
 = 
 \Big(   
  g^j( \mu(h_1), \dots, \mu(h_l))
 \Big)_{j=1}^d 
\end{equation}
where $g^j$, $j=1,\dots,d$, and $h_k$, $k=1,\dots,l$, are functions satisfying the same conditions as $g$ and $h$, respectively. Then we find that the Lions derivative $D^L f(\mu): \R\to\R^d$ at $\mu$ is given by  
\begin{equation}
\label{e:ex2}
 D^L f(\mu)
 = 
 \Big(
  \sum_{k=1}^l\partial_k g^j(\mu(h_1),\ldots,\mu(h_l))\,h_k'
 \Big)_{j=1}^d.
\end{equation}

The following lemma is provided for completeness' sake.

\begin{lemma}\label{lem:lip}
Assume $f: \mathcal{P}_2(\R)\to\R^d$ is continuous and that the Lions derivative $D^Lf(\mu)$ exists for all $\mu$ in $\mathcal{P}_2(\R)$. If $||D^Lf(\mu)||_{L^2(\mathbb{R}^d,\mu)} \le K$, for all $\mu$ in $\mathcal{P}_2(\R)$, then $f$ is Lipschitz continuous with Lipschitz constant $K$. 
\end{lemma}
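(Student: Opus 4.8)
The plan is to join $\mu_0$ and $\mu_1$ by the displacement geodesic in $\mathcal{P}_2(\R)$ and to integrate the pointwise bound on the Lions derivative along it. First I would fix $\mu_0,\mu_1\in\mathcal{P}_2(\R)$ and invoke the one-dimensional structure of optimal transport: the optimal coupling is the monotone rearrangement, so there is a nondecreasing transport map $T:\R\to\R$ with $T_{\sharp}\mu_0=\mu_1$ and $\|T-\id\|_{L^2(\R,\mu_0)}=W_2(\mu_0,\mu_1)$. Writing $\phi:=T-\id\in L^2(\R,\mu_0)$, I set $S_t:=\id+t\phi=(1-t)\id+tT$ and $\mu_t:=(S_t)_{\sharp}\mu_0$ for $t\in[0,1]$, which is the McCann interpolation with endpoints $\mu_0,\mu_1$ and is continuous in $W_2$. (When $\mu_0$ has atoms and $T$ fails to exist as a map, I would first prove the estimate for non-atomic $\mu_0$, which are $W_2$-dense, and pass to the limit using continuity of $f$ and of $W_2$.)

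Next I would differentiate $g(t):=f(\mu_t)$. The key observation is a reparametrization identity: since $T$ is nondecreasing, $S_t$ is strictly increasing and hence invertible for $t\in[0,1)$, and one checks directly that $\mu_{t+\eps}=(\id+\eps\psi_t)_{\sharp}\mu_t$ with $\psi_t:=\phi\circ S_t^{-1}$. The change of variables $y=S_t^{-1}(x)$ gives $\psi_t\in L^2(\R,\mu_t)$ with the norm preserved, $\|\psi_t\|_{L^2(\R,\mu_t)}=\|\phi\|_{L^2(\R,\mu_0)}=W_2(\mu_0,\mu_1)$. Applying Definition~\ref{e:defL}(2) componentwise at the base point $\mu_t$ with increment $\eps\psi_t$, together with the linearity of $\phi\mapsto D^L_{\phi}f(\mu_t)$, yields $g(t+\eps)=g(t)+\eps\,D^L_{\psi_t}f(\mu_t)+o(\eps)$. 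Hence $g$ is differentiable on $(0,1)$ with $g'(t)=\langle D^Lf(\mu_t),\psi_t\rangle_{L^2(\R,\mu_t)}$.

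The remaining steps are routine. The vector-valued Cauchy--Schwarz inequality together with the hypothesis gives $|g'(t)|\le\|D^Lf(\mu_t)\|_{L^2(\R^d,\mu_t)}\,\|\psi_t\|_{L^2(\R,\mu_t)}\le K\,W_2(\mu_0,\mu_1)$. Since $g$ is continuous on $[0,1]$ (continuity of $f$ and of $t\mapsto\mu_t$) and differentiable on $(0,1)$ with uniformly bounded derivative, the mean value inequality yields $|f(\mu_1)-f(\mu_0)|=|g(1)-g(0)|\le\int_0^1|g'(t)|\,dt\le K\,W_2(\mu_0,\mu_1)$, which is the claim.

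I expect the main obstacle to be the chain-rule step of the second paragraph, namely verifying that $\mu_{t+\eps}$ is exactly $(\id+\eps\psi_t)_{\sharp}\mu_t$ and that $\psi_t$ lies in $L^2(\R,\mu_t)$ with norm equal to $W_2(\mu_0,\mu_1)$. This rests on the one-dimensional monotonicity that makes $S_t$ invertible on $[0,1)$ and on the norm-preserving change of variables; the atomic case and the endpoint $t=1$ are disposed of by density and continuity. By contrast, the Cauchy--Schwarz estimate and the fundamental-theorem-of-calculus conclusion are entirely standard.
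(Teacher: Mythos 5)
Your argument is correct, but it is genuinely different from the one in the paper. The paper's proof never touches Wasserstein geodesics: it lifts $f$ to $\widetilde{f}(X)=f(\operatorname{Law}(X))$ on a Hilbert space $L^2(\Omega,\mathcal{F},P)$, invokes the structure theorem from Cardaliaguet's notes identifying $D^Lf(\mu)$ with the Fr\'echet derivative $D\widetilde{f}(X)$, integrates along the straight segment $X_0+s(X_1-X_0)$ in $L^2(\Omega)$, applies Cauchy--Schwarz, and then passes to an (almost) optimal coupling to replace $\|X_1-X_0\|_{L^2}$ by $W_2(\mu_0,\mu_1)$. You instead stay inside $\mathcal{P}_2(\R)$, build the optimal coupling explicitly via the one-dimensional monotone rearrangement, move along the McCann interpolation, and differentiate $t\mapsto f(\mu_t)$ directly from Definition~\ref{e:defL} using the reparametrization $\mu_{t+\eps}=(\id+\eps\,\phi\circ S_t^{-1})_{\sharp}\mu_t$ — a computation you verify correctly, including the norm preservation $\|\phi\circ S_t^{-1}\|_{L^2(\mu_t)}=W_2(\mu_0,\mu_1)$. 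The trade-offs: the paper's route is shorter and dimension-independent, but it imports a nontrivial external theorem (the lift/Fr\'echet identification) and quietly needs a rich enough probability space plus a near-optimal coupling at the last step; your route is self-contained modulo standard facts about one-dimensional optimal transport, produces the optimal coupling constructively, but is genuinely tied to $d=1$ for the state space (which is all the lemma needs, since the domain is $\mathcal{P}_2(\R)$) and requires the extra density step for atomic $\mu_0$ and the separate treatment of the endpoint $t=1$, both of which you handle correctly via continuity of $f$ in $W_2$.
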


\begin{proof}
Let $\widetilde{f}: L^2 = L^2(\Om,\mathcal{F},P)\to\R^d$ be an extension of $f$ defined by $\widetilde{f}(X) = f(\operatorname{Law}(X))$ where $(\Om,\mathcal{F},P)$ is a  probability space, see \cite{Car12}. We use \cite[Thm~6.2]{Car12} which yields $D^L f(\mu) = D\widetilde{f}(X)$, where $D$ is the Fr\'echet derivative, and $D\widetilde{f}(X)$ does not depend on the extension but only on the law $\mu = \operatorname{Law}(X)$. Here, the Fr\'echet derivative is identified with the corresponding element in $L^2$ via the Hilbert space pairing $E_P[\vv<X, Y>] = \vv<X,Y>_{L^2}$ where the angle bracket without subscript denotes the Euclidean inner product.
Consider elements $\mu_k = \operatorname{Law}(X_k)$ with $k=0,1$ in $\mathcal{P}_2(\R)$. 
Using the Cauchy-Schwarz inequality
\begin{align*}
 \Big|
  f(\mu_1) - f(\mu_0)
 \Big|
 &= 
 \Big|
 \int_0^1
  \frac{\partial}{\partial s}
  \widetilde{f}\Big(
   X_0+s(X_1-X_0)
  \Big) 
  \,ds
 \Big| 
 \\
 &= 
  \Big|
 \int_0^1
   E_{P}\Big[
   \Big\langle 
   D \widetilde{f}
   \Big(
    X_0+s(X_1-X_0)
   \Big) 
   , \Big(X_1-X_0\Big)
   \Big\rangle 
   \Big]
  \,ds
 \Big| 
 \\
 &= 
   \Big|
 \int_0^1
   \Big\langle 
   D^L f 
   \Big(
    \operatorname{Law}(X_0+s(X_1-X_0))
   \Big) ,
   \Big(X_1-X_0\Big)
   \Big\rangle_{L^2}
  \,ds
 \Big| 
 \\
 &\le 
 K \Big|\Big|X_1 - X_0\Big|\Big|_{L^2}.
\end{align*}
Since this holds for all extensions and since the Wasserstein distance satisfies 
$d_2(\mu_0,\mu_1) 
 \le  ||X_1 - X_0||_{L^2}$ 
when $\operatorname{Law}(X_0) = \mu_0$ and $\operatorname{Law}(X_1) = \mu_1$, the lemma follows.
\end{proof}

\begin{theorem}[Existence, uniqueness and IPS approximation of MF-LMM]\label{thm:mflmm}
Consider the mean-field Libor market model be given by \eqref{e:LMM}-\eqref{e:LMM-mf}. Let $\lam_i = (\lam_i^j)_{j=1}^d: [0,t_i]\times\R^l\to\R^d$ be Lipschitz continuous in $t$ and $L$-differentiable in $\mu$ such that  
\[
 D^L\lam_i(t, \mu)
 = 
  \Big(
  \sum_{k=1}^l
   \partial_{k+1} \, \lam_i^j(t, \mu(h_1),\ldots,\mu(h_l))\,h_k'
 \Big)_{j=1}^d
\]
is bounded in $L^2(\R^d,\mu)$, uniformly for all $t\in\R_+$ and all $\mu\in\mathcal{P}_2(\R)$. If \eqref{e:LMM-mf} is of the form \eqref{e:LMM-mf-var}, then this is the case if $\lam_i: [0,t_i]\times\R_+\to\R^d$, $(t,v)\mapsto\lam_i(t,v)$ is Lipschitz continuous in $t$, differentiable in $v$, and satisfies the pointwise bound  
\begin{equation}
\label{e:mflmm_p}
v\sum_{j=1}^d|\partial_2\,\lam_i^j(t,v)|^2 \le K
\end{equation}
with respect to a constant $K<\infty$ which is independent of $(t, v)\in[0,t_i]\times\R_+$. 
\begin{enumerate}
\item 
Then the system has a unique solution $F_t^i$ such that $E[ \sup_{0\le t\le t_i} |F_t^i|^2] < \infty$.
\item 
Moreover, pathwise propagation of chaos holds: 
\[
 \lim_{n\to\infty}\,\sup_{p\le n}\,
 E\Big[
  \sup_{t\le t_i}\, \Big| L_t^{i,p,n} - L_t^{i,p} \Big|^2
 \Big]
 = 0
\] 
where $L^{i,p,n}$ is the approximating IPS \eqref{e:IPS-LMM} and $L^{i,p}$ is a version of \eqref{e:LMM}-\eqref{e:LMM-mf} defined with respect to $W^{i,p}$.
\end{enumerate}
\end{theorem}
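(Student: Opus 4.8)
\medskip
The plan is to deduce both assertions from the well-posedness and propagation-of-chaos theory for McKean--Vlasov equations with nonlinear measure dependence (\cite{JMW08}), verifying its hypotheses one at a time. The first observation is that the system \eqref{e:LMM}--\eqref{e:LMM-mf} decouples over the index $i$: the coefficient in the equation for $F^i$ depends only on $\mu_t^i=\operatorname{Law}(F_t^i)$, so each $F^i$ solves a self-contained scalar McKean--Vlasov SDE on $[0,t_i]$, driven by the $d$-dimensional noise $W^i$, with vanishing drift and diffusion coefficient $b(t,x,\mu)=x\,\lam_i(t,\mu)\in\R^d$. It therefore suffices to treat a single index $i$, and the genuine ``multi-dimensionality'' is that of the driver ($d\ge1$), which is exactly what the vector-valued $L$-derivative formalism \eqref{e:ex2} is designed to handle.

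The substantive step is Lipschitz dependence on the measure. By hypothesis $D^L\lam_i(t,\cdot)$ exists, is given by \eqref{e:ex2}, and is bounded in $L^2(\R^d,\mu)$ uniformly in $(t,\mu)$; applying Lemma~\ref{lem:lip} to $f=\lam_i(t,\cdot)$ for each fixed $t$ yields $|\lam_i(t,\mu)-\lam_i(t,\nu)|\le K\,W_2(\mu,\nu)$ with $K$ independent of $t$, which together with the assumed Lipschitz continuity in $t$ gives the joint regularity required by \cite{JMW08}. For the variance case \eqref{e:LMM-mf-var} I would check that the pointwise bound \eqref{e:mflmm_p} suffices: with $h_1(x)=x$, $h_2(x)=x^2$, and writing $m=\mu(h_1)$, $v=\operatorname{Var}_\mu=\mu(h_2)-m^2$, the chain rule \eqref{e:ex2} applied to $\mu\mapsto\lam_i(t,\operatorname{Var}_\mu)$ gives its $L$-derivative at $x$ as $\partial_2\lam_i(t,v)\,(h_2'(x)-2m\,h_1'(x))=2\,\partial_2\lam_i(t,v)\,(x-m)$. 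Since $\int_\R(x-m)^2\,\mu(dx)=v$, the squared $L^2(\R^d,\mu)$-norm equals $4v\sum_{j=1}^d|\partial_2\lam_i^j(t,v)|^2\le 4K$, so the uniform $L^2$-bound holds with constant $2\sqrt K$ and this case reduces to the general one.

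With measure-Lipschitzness in hand, I would obtain existence, pathwise uniqueness, and the bound $E[\sup_{t\le t_i}|F_t^i|^2]<\infty$ of assertion (1) from the usual fixed-point scheme on flows of marginals: freeze $t\mapsto\mu_t\in\mathcal{P}_2$, solve the resulting linear SDE $dF=F\,\lam_i(t,\mu_t)^\top\,dW^i$, and show the induced self-map on flows is a contraction (on a short interval, then iterated) via the Wasserstein--Lipschitz estimate and Gr\"onwall's inequality; the moment bound follows from It\^o applied to $|F_t^i|^2$, the Burkholder--Davis--Gundy inequality and Gr\"onwall, starting from the deterministic datum $F_0^i$. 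For assertion (2) I would use the synchronous coupling (\cite{Sni91,JMW08}): realize the particles $L^{i,p,n}$ of \eqref{e:IPS-LMM} and independent mean-field copies $L^{i,p}$ on the same Brownian motions $W^{i,p}$, bound $E[\sup_{s\le t}|L_s^{i,p,n}-L_s^{i,p}|^2]$ through the two Lipschitz estimates, and split the empirical-measure error into a coupling part (closed by Gr\"onwall) and a statistical part $\sup_{p\le n}E[W_2(\tfrac1n\sum_q\delta_{L_t^{i,q}},\mu_t^i)^2]$, which vanishes as $n\to\infty$ by the law of large numbers in Wasserstein distance.

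I expect the main obstacle to be the multiplicative spatial structure. Because $b(t,x,\mu)=x\,\lam_i(t,\mu)$ is only linear in $x$, the difference $b(t,x,\mu)-b(t,y,\nu)$ contains a term $x\,(\lam_i(t,\mu)-\lam_i(t,\nu))$ whose Lipschitz-in-$\mu$ constant grows like $|x|$, so the coefficient is not globally Lipschitz jointly in $(x,\mu)$ and the standard McKean--Vlasov hypotheses do not apply verbatim. This has to be absorbed through the a priori $L^2$-moment control, so that the $|x|$-factor is square-integrable against the Wasserstein increments, and, should $\lam_i$ fail to be bounded, through a localization by stopping times before passing to the limit, with all Gr\"onwall and BDG constants kept independent of $n$ so that the propagation-of-chaos estimate is uniform in $p\le n$.
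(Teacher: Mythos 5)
Your central step coincides with the paper's: both reduce Lipschitz continuity in the measure variable to a uniform bound on the Lions derivative via Lemma~\ref{lem:lip}, and your computation of the $L$-derivative of $\mu\mapsto\lam_i(t,\operatorname{Var}_\mu)$ --- namely $2\,\partial_2\lam_i(t,v)\,(x-\mu(h_1))$, with squared $L^2(\R^d,\mu)$-norm $4v\sum_{j=1}^d|\partial_2\lam_i^j(t,v)|^2\le 4K$ --- is literally the displayed calculation in the paper's proof. Where you diverge is in what happens afterwards: the paper stops there and invokes \cite[Prop.~2 and Thm.~3]{JMW08} for existence, uniqueness, the moment bound and propagation of chaos, whereas you re-derive that machinery (fixed point on flows of marginals, synchronous coupling, law of large numbers in Wasserstein distance). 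That extra work buys something real. Your closing observation --- that $b(t,x,\mu)=x\,\lam_i(t,\mu)$ is not jointly globally Lipschitz in $(x,\mu)$ because the cross term $x\,(\lam_i(t,\mu)-\lam_i(t,\nu))$ carries an unbounded factor $|x|$ --- is a genuine subtlety that the paper's one-line citation passes over silently; the hypotheses of \cite{JMW08} are phrased for coefficients Lipschitz jointly in state and measure, so the reduction is not quite ``direct'' and does require the moment-control and localization argument you sketch (or an appeal to linear-growth variants of the McKean--Vlasov theory). Your remark that the system decouples over the index $i$ is likewise correct and implicit but unstated in the paper. In short: same key lemma, same decisive computation, but you supply the verification of the cited theorem's hypotheses that the paper omits, and in doing so you expose the one point where its appeal to \cite{JMW08} needs an extra word of justification.
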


\begin{proof}
The result follows directly from \cite[Prop.~2 and Thm.~3]{JMW08} if it can be shown that $\sigma_i$ is Lipschitz continuous also in the measure variable. In view of Lemma~\ref{lem:lip}, this, in turn, holds if the Lions derivative $D^L\sigma_i(t, \mu)$ exists and is bounded in $L^2(\R^d,\mu)$ with respect to some constant independent of  $t$ and $\mu$.  Given the system \eqref{e:LMM}-\eqref{e:LMM-mf}, the Lions derivative is of the form \eqref{e:ex2}. Hence the result follows. In the special case where the system is given by \eqref{e:LMM}-\eqref{e:LMM-mf-var} we have  $h_1(x) = x$ and $h_2(x) = x^2$. 
%Clearly, $h_1, h_2\in C_2(\R)$. 
For better readability we omit the index $i$ and the time dependency from now on.  Owing again to \eqref{e:ex2},
\[
 \sigma(\mu) = \lam(\mu(h_2) - \mu(h_1)^2)
\]
has an $L$-derivative $D^L\sigma$ whose $L^2(\R^d,\mu)$-norm is 
\begin{align*}
 \Big|\Big|
  D^L\sigma (\mu)
 \Big|\Big|^2_{L^2(\mathbb{R}^d,\mu)}
 &= 
 \sum_{j=1}^d\Big|\Big|
   -2\mu(h_1)(\lam^j)'\Big(\mu(h_2) - \mu(h_1)^2\Big) h_1' + (\lam^j)'\Big(\mu(h_2) - \mu(h_1)^2\Big) h_2'
 \Big|\Big|^2_{L^2(\mathbb{R}^d,\mu)}
 \\ 
 &= 
 4 \sum_{j=1}^d \Big|(\lam^j)'(v)\Big|^2
 \Big|\Big|
  h_1 - \mu(h_1)
 \Big|\Big|^2_{L^2(\mathbb{R}^d,\mu)} \\ 
 &= 
 4 v \sum_{j=1}^d \Big|(\lam^j)'(v)\Big|^2
\end{align*}
where $v = \mu(h_2) - \mu(h_1)^2$ is the variance of $\mu$. Hence  boundedness of 
$||  D^L\sigma (\mu) ||^2_{L^2(\mathbb{R}^d,\mu)}$ 
is ensured by the pointwise assumption.
\end{proof}

\begin{remark}
The MF-LMM was introduced in \cite{MFLMM} where it was shown in a numerical study that the mean-field interaction can lead to a reduction of blow-up probability, i.e.\ a reduction in the number of scenarios where the forwards exceed a pre-defined threshold. Such a reduction of blow-up probability is desirable for numerical reasons as mentioned in the introduction to this section. However, it is also a feature that is in agreement with the postulate to model a \emph{realistic} forward dynamic: exploding interest rates are not observed in reality, and central banks act to stabilize rates whence such an explosion is not only not observed but also very implausible. It should be stressed that the mean-field interaction encoded in \eqref{e:LMM-mf-var} is not interpreted as an interaction between different forward rate scenarios. Such an interpretation would not make sense since different unfoldings of reality cannot communicate with each other. Rather, the underlying idea is that of an ulterior agency (i.e., central bank policy) which controls yield curves to the effect that the variance of possible yield movements is restricted. It is this effect that is modelled by the mean-field dependency. 
\end{remark}

\subsection{Projection along tenor dates}
Having shown existence and uniqueness, the mean-field Libor market model may be regarded as a generic Libor market model with time-dependent volatility structure. In particular, the transformation to the spot measure may be carried out. However, as shown in \cite{MFLMM} the transformed system is in general not a mean-field system, unless one introduces an auxiliary process. For our envisaged purposes of performing long term simulations it is standard to consider yearly time steps along tenor dates. Along these dates the picture simplifies and the transformed system is, in fact, a mean-field SDE with respect to the spot measure. This is relevant since it implies that the approximating IPS can be used to define a Monte-Carlo routine based on the associated Euler-Maruyama scheme.  

Let 
\begin{equation}
B(t_j)
= (1+\delta F_{t_{j-1}}^{j-1})B(t_{j-1}),
\quad
B(t_0) = 1\,,
\end{equation}
denote the implied money market account (i.e., the numeraire) and $\mathcal{Q}^*$ the associated spot measure. 
If $t=t_j$ is a tenor date, then the auxiliary process, alluded to above, can be expressed as
\begin{equation}
Y_{t_j}^m
=
B(t_j)^{-1}\frac{P(j,m)}{P(0,m)}\,,
\end{equation}
%see \cite[Sec.~7.1]{Fil}, 
where 
\begin{equation}\label{eq:bond}
P(j,m)
=
\Pi_{l=j}^{m-1}(1+\delta_l F_{t_j}^l)^{-1}\,,
\end{equation}
is the time $t_j$-value of one unit of currency paid at $t_m$. 
With
\begin{equation}
\label{e:phi1}
\Psi_j^{m}
:= 
E_{\mathcal{Q}^*}\left[\left(
F_{t_j}^m - E_{\mathcal{Q}^*}\left[F_{t_j}^m  B(t_j)^{-1}\frac{P(j,m)}{P(0,m)}\right]
\right)^2  B(t_j)^{-1}\frac{P(j,m)}{P(0,m)} \right]\,
\end{equation}    
it follows that the evolution along the tenor dates of the mean-field Libor market model with respect to the spot measure is given  by
\begin{equation}
\label{e:spot_phi}
d F_{t_j}^m
= 
F_{t_j}^m\left(
\sum_{k=j+1}^m\frac{\delta F_{t_j}^k}{\delta F_{t_j}^k + 1}
\lambda^k(t_j,\Psi_j^{k})^{\top}
\lambda^m(t_j,\Psi_j^{m}) \di t
+
\lambda^m(t_j,\Psi_j^{m})^{\top} \mathrm{d} W^*_t\,
\right),
\end{equation}
since $\eta(t_j)=j+1$. See~\cite{MFLMM}.

Thus it remains to specify the functional form of $\lambda^i: [0,t_{i}]\times\R_+\to\R^d$ which has to be Lipschitz continuous in $t$ and should satisfy the pointwise condition~\eqref{e:mflmm_p}. 
A numerical study concerning various such choices is provided in \cite{MFLMM}, where it is also shown numerically that, in the long run, the classical Libor market model can lead to exploding rates while the mean-field controlled models all significantly reduce blow-up probability. 

\begin{comment}
In this regard we choose four different forms: 
\begin{enumerate}
\item 
Libor market model: $\lambda^i = \lambda^i(t)$. This corresponds to the classical Libor market model without mean-field interaction. For technical reasons (stemming from notation in the $R$ code) this choice will be referred to as \emph{VolSwi2}. 
\item 
Mean-field taming: we choose a variance threshold $v$ and define
\begin{equation}
\label{e:taming}
\lambda^m(t,\Psi_t^{m})
= 
r_m(t)\exp\Big(-\max \{\Psi_t^{m}-v,0\} / v \Big)\,,
\end{equation}
where $r_m$ is the same time dependent structure as in the classical Libor market model. This choice will be referred to as \emph{VolSwi25}. 
\item
Decorrelation beyond threshold: it is assumed that the mean-field interaction does not affect the scalar instantaneous volatility structure but rather leads to an exponential decay in the instantaneous correlation between different forward rates. This choice will be referred to as \emph{VolSwi6}. 
\item
Anti-correlation beyond threshold: it is assumed that the mean-field interaction does not affect the scalar instantaneous volatility structure but rather leads to a negative instantaneous correlation between  different forward rates. The effect of modelling Ansatz is that high rates tend to decrease while low rates tend to increase. This choice will be referred to as \emph{VolSwi4}. 
\end{enumerate}
\end{comment}

\section{Life insurance with profit participation} \label{sec:li}

Traditional life insurance with profit participation is characterized by a strong interdependence between assets and liabilities. On the liability side, the policyholder contracts are equipped with a minimum guarantee rate and participate in the company's profit. This profit depends on the book value return of the assets under management. The company's management may control -to a certain extent- the book value return and the crediting strategy, and this control typically depends on information from liabilities (e.g., guaranteed interest rate) and assets (e.g., amount of unrealized gains). The purpose of this section is to describe these allocation processes and the corresponding control parameters in a mathematically concise manner. 

\subsection{The portfolio view}
The liabilities of a traditional life insurance consist of two items, namely the life provisions, $LP$, and the surplus fund, $SF$. Thus the book value of liabilities is
\[
 BV = LP + SF. 
\]
In this picture we have assumed that there is no free capital. The life provisions are attributed to individual contracts while the surplus fund belongs to the collective of policyholders as a whole. The former can be further split into a mathematical reserve, $V$, and declared bonuses (or benefits), $DB$, such that $LP = V + DB$. Thus $V$ denotes the sum of all contracts' mathematical reserves calculated according to actuarial principles and $DB$ is likewise the sum of all declared benefits. 

Let $\mathcal{A}$ be the set of all assets which are dedicated to the profit participating business. We denote book and market values associated to a particular asset $a$ by $BV^a$ and $MV^a$, respectively. The unrealized gains are by definition
\begin{equation}
 UG^a = MV^a - BV^a
\end{equation}
and these can be positive or negative. In the  latter case we will also refer to $UG$ as unrealized losses. 
Balance sheet parity dictates that we must have 
\begin{equation}
 BV = \sum_{a\in\mathcal{A}} BV^a .
\end{equation}
In reality, the right hand side of this equation would be expected to be slightly larger than the left hand side. However, since profit participation is calculated only with respect to those assets balancing liabilities, exact balance sheet parity should be assumed. 

Let from now on $t = 0,\dots,T$ denote yearly time steps where $T$ is the projection horizon, and $T$ will correspond to $60$ years or even more since life insurance is a long term business. For a time dependent quantity $f_t$  we denote the increment by $\Delta f_t = f_t-f_{t-1}$. 

The return on assets generated by the portfolio at time $t$ is 
\begin{equation}
 ROA_t 
 = \sum_{a\in\mathcal{A}_{t-1}}\Delta BV_t^a
 = \sum_{a\in\mathcal{A}_{t-1}}(BV_t^a - BV_{t-1}^a) 
\end{equation}
where $\mathcal{A}_{t-1}$ are all assets under management at the previous time step. Notice that $ROA$ is defined in terms of book values. This terminology requires some explanation: Consider an asset $a\in\mathcal{A}_{t-1}$ with book value $BV_{t-1}^a$ and unrealized gains $UG_{t-1}^a = MV_{t-1}^a - BV_{t-1}^a$. Assume these values change due to some mechanism (e.g., market movements or depreciation rules) at $t$ to $(BV_t^a)'$ and $(UG_t^a)'$. Then the \emph{uncontrolled} return due to $a$ is $(BV_t^a)' - BV_{t-1}^a$. However, management may decide to sell $a$ at $t$ and thereby realize the positive or negative value $(UG_t^a)'$. In fact, we will allow partial selling such that management may choose $u_t^a\in[0,1]$ at $t$, whereby the final book value becomes $BV_t^a = (BV_t^a)' + u_t^a(UG_t^a)'$. Similarly, the unrealized gains at $t$ can now be expressed as $UG_t^a = (1-u_t^a)(UG_t^a)'$. 
The corresponding return is thus 
\[
 ROA_t^a = \Delta BV_t^a = (BV_t^a)' + u_t^a(UG_t^a)' - BV_{t-1}^a, 
\]
and this depends on the changes in the book value and in the unrealized gains, and on the applied management rule. 

The life provisions, $LP_t$, can be decomposed as the sum $LP_t = V_t+DB_t^0+DB_t$: here $V_t$ denotes the sum of mathematical reserves at $t$, $DB_t^0$ are declared bonuses at $t$ which were declared before and up to valuation time $t=0$, and $DB_t$ are declared bonuses at $t$ which were declared at times $1\le s\le t-1$. The mathematical reserve of a contract is the reserve that is calculated according to classical actuarial principles including safety loadings (\cite{Gerber}). Since $DB_t^0$ is guaranteed at time $0$ we denote the guaranteed life provisions by $LPG_t = V_t + DB_t^0$. The declared bonuses, $DB_t$, are increased by declarations, $ph_t^*$, and decreased due to benefits, $ph_t$, and surrender gains, $sg_t^*$. That is, 
\begin{equation}
\label{e:DBevol}
 \Delta DB_t = ph_t^* - ph_t - sg_t^*. 
\end{equation}

The company experiences the following cash flows on the liability side: premiums, benefits and expenses. Premiums paid at $t$ are denoted by $pr_t$. Benefits are split into guaranteed benefits, $gbf_t$, stemming from $LPG_{t-1}$ and  discretionary benefits, $ph_t$, stemming from $DB_{t-1}$.
That is, the total benefit cash flow at $t$ is $bf_t = gbf_t+ph_t$.
Moreover, the guaranteed benefits can be further decomposed as $gbf_t = abf_t - \Delta DB_t^0$ where $abf_t$ denotes the benefit calculated according to actuarial principles (\cite{Gerber}) and $\Delta DB_t^0$ is the reduction of $DB_t^0$. Note that the latter can only decrease since new profits are declared to $DB_t$ by construction. Finally, the sum of expense cash flows at $t$ is denoted by $exp_t$. 

The gross surplus at $t$, which corresponds to the company's profit under locally generally accepted accounting principles (local GAAP), is now defined as
\begin{equation}
\label{e:gs} 
 gs_t 
 = ROA_t - \Delta LPG_t + pr_t - gbf_t - exp_t + sg_t^*
 = ROA_t - \Delta V_t + pr_t - abf_t - exp_t + sg_t^*.
\end{equation}
Let $\rho_{t}$ be the weighted average technical interest rate at $t$. That is, $\rho_{t} = \sum_{c \in\mathcal{L}_{t-1}} \rho_c V_{t-1}^c / V_{t-1}$ where  $\mathcal{L}_{t-1}$ denotes the liability portfolio at $t-1$ and $c \in\mathcal{L}_{t-1}$ is a contract with guarantee rate $\rho_c$ and mathematical reserve $V_{t-1}^c$. If the premium were calculated without any safety loadings, we would have $\Delta V^c_t - pr^c_t + abf^c_t + exp^c_t = \rho^c V_{t-1}^c$. Thus the difference $\rho_{t}V_{t-1} - \Delta V_t + pr_t - abf_t - exp_t + sg_t^* =: \gamma_{t}V_{t-1}$ can be viewed as a technical gain due to factor loadings, and we take this equation as the definition for the technical gains rate 
\begin{equation}
\label{e:gamma_def}
 \gamma_{t}
 = 
 \Big(\rho_{t}V_{t-1} - \Delta V_t + pr_t - abf_t - exp_t + sg_t^*
 \Big) / V_{t-1}
\end{equation}
The gross surplus equation can now be expressed succinctly as
\begin{equation}
\label{e:gs_g} 
 gs_t 
 = ROA_t - (\rho_t-\gamma_t)V_{t-1} .
\end{equation}

If the gross surplus is negative, the shareholder has to cover this and incurs a cost of guarantee. Thus we define $cog_t = gs_t^-$ where $k^- = -\min(k,0)$ is the negative part of a real number. If the gross surplus is positive, this profit is shared between policyholder, shareholder and tax office according to positive factors $gph$, $gsh$ and $gtax$ satisfying $gph+gsh+gtax = 1$. Let $k^+ = \max(k,0)$ be the positive part of a real number. 
Thus, at $t$, the shareholder receives a cash flow  $shg_t = gsh\cdot gs_t^+$ representing shareholder gains, and the tax office receives $tax_t = gtax\cdot gs_t^+$. However, the policyholder collective need not be credited the full amount $gph\cdot gs_t^+$ at $t$. Rather, management may choose a control parameter $\nu_t\in[0,1]$ such that $\nu_t\cdot gph\cdot gs_t^+$ is credited at time $t$ and then paid out at later times $s>t$ when the corresponding policies which were credited at $t$ reach their maturities. The remaining fraction $(1-\nu_t)\cdot gph\cdot gs_t^+$ is allocated to the surplus fund $SF$. Conversely, the declaration may be, again at the  management's discretion, augmented by choosing $\eta_t\in[0,1]$ and declaring the amount $\eta_t\cdot SF_{t-1}$. The total declaration at time $t$ is therefore
\begin{equation}
\label{e:cred}
 ph_t^*
 = 
 \nu_t\cdot gph\cdot gs_t^+ + \eta_t\cdot SF_{t-1}. 
\end{equation}
We define $\mu_s^t$ as the fraction of the declaration $ph_t^*$ that is paid out at $s>t$. Notice that $\mu_s^t$ is a model dependent quantity. 
 
It follows that $ph_t^*$ gives rise to delayed cash flows $\mu^t_{t+1}ph_t^*, \ldots, \mu^t_{T}ph_t^*$ and that the policyholder cash flow can in turn be expressed as 
\begin{equation}
\label{e:mu}
 ph_t = \sum_{s=1}^{t-1} \mu_t^s ph_s^*.  
\end{equation}
A schematic overview of the declaration procedure, including the management controls which are denoted by $u$, $\nu$ and $\eta$,  is contained in the diagram in Figure~\ref{d:gs}.  
\begin{figure}[ht]
\begin{equation*}
%\label{d:gs}
\xymatrix{
      &  &  & SF\ar@{-->}[d]^{\textcolor{red}{\eta}} &  \\
    \textup{Liabilities} \ar[dr]        &  &
    gph\cdot gs^+ \ar[r]_{\phantom{xxx}\textcolor{red}{\nu}}
    \ar[ur]^{1-\textcolor{red}{\nu}}
    & ph^* \ar@{~>}[r]^{delay} 
    & \textcolor{blue}{ph} \\ 
    & gs\ar[ur]^{}\ar[r]^{}\ar[dr]_{} &  gsh\cdot gs^+ \ar@{=}[r]
    &\textcolor{blue}{shg}   &  \\ 
    \textup{Assets}\ar[ur]^{ROA(\textcolor{red}{u})}       
    & \phantom{xxxx}\textcolor{blue}{cog}=gs^{-}\ar@{-->}[u] &  gtax\cdot gs^+ \ar@{=}[r] & \textcolor{blue}{tax}  &  
    }
\end{equation*}
\caption{Schematic depiction of the declaration process. Management rules ($u$, $\nu$, $\eta$) are marked red, cash-flows ($cog$, $tax$, $shg$, $ph$) are blue. Strategic asset allocation is not included in the diagram.}
\label{d:gs}
\end{figure}
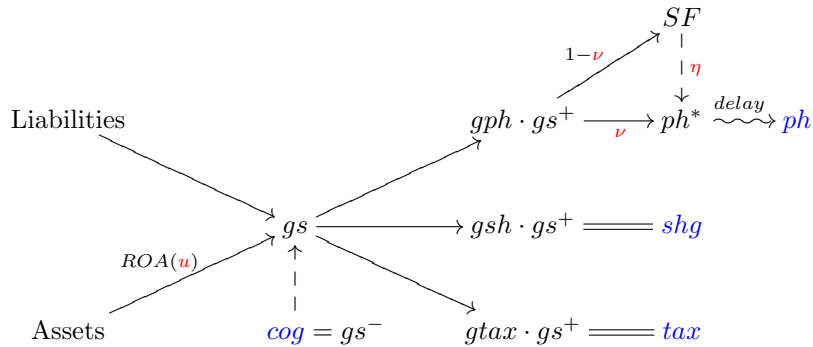

%\section{Cashflows vs.\ balance sheet representation} 

\subsection{Evolution of policyholder profits}
The evolution equation~\eqref{e:DBevol} can be adjoined by one for the surplus fund, namely
\begin{equation}
 \label{e:SF} 
 \Delta SF_t 
 = (1-\nu_t)gph\cdot gs_t^+ - \eta_t SF_{t-1} 
 = gph\cdot gs_t^+ - ph_t^*.
\end{equation}
The sum $DB_t + SF_t$ represents the pot of shared profits incurred up to time $t$. Its evolution can be expressed  as 
\begin{equation}\label{e:evol_pot} 
\Delta(DB_t + SF_t) 
= gph\cdot gs_t^+ - ph_t - sg_t^*
\end{equation}
which, crucially, holds independently of declaration management rules regarding $\nu$ and $\eta$.   

Let $MV_t = \sum_{a\in\mathcal{A}_t}MV_t^a$ denote the market value of assets at $t$. Further, let $B_t^{-1}$ denote the stochastic discount factor associated to a risk neutral interest rate model, and let $E[\_]$ be the expectation associated to the risk neutral measure. In Section~\ref{sec:numALM} this measure is fixed as the spot measure associated to the mean-field Libor market model of Section~\ref{sec:mflmm} but for the present purpose the specific choice is not relevant. 
 
The value of guaranteed benefits is defined as 
\begin{equation} 
\label{e:GB}
 GB = E\Big[\sum_{t=1}^T B_t^{-1}(gbf_t + exp_t - pr_t)\Big]
\end{equation}
and the value of future discretionary benefits is 
\begin{equation}
\label{e:FDB} 
FDB
 = E\Big[\sum_{t=1}^T B_t^{-1}ph_t\Big].
\end{equation}
Further, we consider the value of shareholder gains, $SHG = E[\sum_{t=1}^T B_t^{-1}shg_t]$, the shareholder's cost of guarantee
\begin{equation}\label{e:cog}
 COG = E[\sum_{t=1}^T B_t^{-1}cog_t],
\end{equation}
and the value of tax payments, $TAX = E[\sum_{t=1}^T B_t^{-1}tax_t]$. 

%Since $(1-gph)gs_t^+$ represents the cash flow that is received by shareholder and tax office it follows that $\sum_{t=1}^T B_t^{-1}(1-gph)gs_t^+ = SHG + TAX$ where $SHG$  and $TAX$ are the values of shareholder gains and corporate tax cash flows, respectively. 

The no-leakage principle (\cite{HG19}) states  that 
\begin{equation}
\label{e:noL}
 MV_0 = GB + FDB + SHG - COG + TAX + E[B_T^{-1}MV_T] . 
\end{equation}

\subsection{Balance sheet representation of future discretionary benefits} 
Putting the above together yields a representation which can be stated without any cash flows: 

\begin{theorem}[\cite{GH22}]\label{thm:rep}
\begin{equation}\label{e:fdb-rep}
    FDB 
    = 
    FDB_0 
    + gph\cdot COG - I - II - III 
\end{equation}
where
\begin{align*}
    FDB_0 
    &=
    SF_0 + gph\Big(LP_0 + UG_0 - GB \Big)  \\
    I
    &:= E\Big[B_T^{-1}\Big(DB_T+SF_T + gph(UG_T + V_T + DB_T^{0}) \Big)\Big] \\
    II
    &:= 
    (1-gph)E\Big[ \sum_{t=2}^T B_t^{-1} sg_t^*\Big] \\
    III
    &:=
    (1-gph) E\Big[ \sum_{t=1}^T F_{t-1} B_t^{-1}  (DB_{t-1} + SF_{t-1}) \Big] 
\end{align*}
\end{theorem}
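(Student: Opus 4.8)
The plan is to produce \emph{two} independent expressions for $FDB$, each containing the same awkward discounted gross-surplus term $G^+ := E\big[\sum_{t=1}^T B_t^{-1}\,gs_t^+\big]$, and then to eliminate $G^+$ by a suitable convex combination of the two. The point is that $gs_t^+$ is the only genuinely nonlinear object in the model, so the whole difficulty is concentrated in making it cancel; everything else is balance-sheet bookkeeping together with one summation by parts.

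First I would rewrite the cash-flow definition \eqref{e:FDB} of $FDB$ using the pot evolution \eqref{e:evol_pot}, i.e. $ph_t = gph\cdot gs_t^+ - sg_t^* - \Delta(DB_t+SF_t)$, which holds termwise (and is trivial at $t=1$ since $ph_1=0$). Writing $P_t := DB_t+SF_t$ and applying summation by parts to $E[\sum_t B_t^{-1}\Delta P_t]$, together with the implied money-market identity $B_{t-1}^{-1}-B_t^{-1}=F_{t-1}B_t^{-1}$ (from $B_t=(1+F_{t-1})B_{t-1}$ with $\delta=1$), converts the increment of the pot into a terminal term $E[B_T^{-1}P_T]$, the initial value $P_0$, and exactly the interest-on-pot sum $R:=E[\sum_{t=1}^T F_{t-1}B_t^{-1}P_{t-1}]$ that appears in $III=(1-gph)R$. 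This gives the first expression
\[
 FDB = gph\,G^+ - E\Big[\sum_{t=1}^T B_t^{-1}sg_t^*\Big] - E\big[B_T^{-1}P_T\big] + P_0 - R .
\]
The second expression comes directly from the no-leakage principle \eqref{e:noL}: solving for $FDB$ and using $SHG+TAX=(gsh+gtax)G^+=(1-gph)G^+$ (since $gph+gsh+gtax=1$) yields
\[
 FDB = MV_0 - GB + COG - (1-gph)\,G^+ - E\big[B_T^{-1}MV_T\big] .
\]

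The key step is then to form $(1-gph)$ times the first expression plus $gph$ times the second: the coefficients of $G^+$ are $gph$ and $-(1-gph)$, so this particular weighting annihilates $G^+$ while the left-hand sides combine back to $FDB$. What remains is to recognise the four surviving groups. The $COG$-term collapses to $gph\cdot COG$; the surrender term becomes $-(1-gph)E[\sum_t B_t^{-1}sg_t^*]=-II$ once one notes $sg_1^*=0$ (there are no declared bonuses to surrender in the first period, as $DB_0=0$); and the interest term becomes $-(1-gph)R=-III$. For the terminal block $-(1-gph)E[B_T^{-1}P_T]-gph\,E[B_T^{-1}MV_T]$ I would substitute $UG_T=MV_T-BV_T$ with $BV_T=V_T+DB_T^0+DB_T+SF_T$, whence $UG_T+V_T+DB_T^0=MV_T-P_T$, and the block rearranges precisely into $-I$. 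The same substitution at $t=0$, using $DB_0=0$ (so $P_0=SF_0$) and $LP_0=V_0+DB_0^0$, turns the initial block $(1-gph)P_0+gph\,MV_0-gph\,GB$ into $FDB_0=SF_0+gph(LP_0+UG_0-GB)$.

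I expect the main obstacle to be conceptual rather than computational: spotting that the no-leakage identity and the pot-based expansion share the same nonlinear term $G^+$ with complementary coefficients $gph$ and $1-gph$, so that the convex average is exactly what clears the nonlinearity. The secondary care-points are the index bookkeeping in the summation by parts (the $t=1$ boundary, where $DB_0=0$ and $sg_1^*=0$ are used) and the two applications of $UG=MV-BV$ that reconcile the terminal and initial blocks with $I$ and $FDB_0$.
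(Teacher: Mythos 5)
Your proposal is correct. Note that the paper itself does not reproduce a proof of this theorem --- it is imported by citation from \cite{GH22} --- so the comparison can only be against the ingredients the paper records, and your argument uses exactly those: the pot evolution \eqref{e:evol_pot}, the no-leakage principle \eqref{e:noL}, the numeraire identity $B_{t-1}^{-1}-B_t^{-1}=F_{t-1}B_t^{-1}$, balance sheet parity $BV=LP+SF$ and $UG=MV-BV$, and $gph+gsh+gtax=1$. I checked the computation: the summation by parts gives $E[\sum_{t=1}^T B_t^{-1}\Delta P_t]=E[B_T^{-1}P_T]-P_0+R$ once the $t=1$ boundary term is absorbed via $B_1^{-1}=1-F_0B_1^{-1}$; the $(1-gph,gph)$ convex combination does cancel $G^+$; the terminal block reduces to $I$ via $MV_T-P_T=UG_T+V_T+DB_T^0$; and the initial block reduces to $FDB_0$ via $P_0=SF_0$ and $MV_0-SF_0=LP_0+UG_0$. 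The only hypotheses you invoke beyond the displayed identities are $DB_0=0$ and $sg_1^*=0$, both of which are consistent with the paper's conventions (the sum in $II$ starting at $t=2$ presupposes the latter). The elimination of the discounted positive gross surplus by weighting the cash-flow expansion with $1-gph$ and the no-leakage identity with $gph$ is precisely the mechanism that makes the representation cash-flow-free, so your derivation is a legitimate, self-contained proof of the stated result.
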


\subsection{Optimal control}\label{sec:con}
The management's objective is, according to corporate practice, to maximize the shareholder value of in-force business, $VIF = SHG - COG$.  In view of the no-leakage principle~\eqref{e:noL} this means that management rules should be designed so as to minimize 
\[
 FDB + TAX + E[B_T^{-1}MV_T] \longto \textup{MIN} 
\]
since $MV_0$ and $GB$ cannot be altered by management rules (assuming independence of guaranteed benefits and profit sharing, i.e.\ static policyholder behaviour). 

Focusing on the $FDB$-part in the minimization problem, if we choose $\nu_t = \eta_t = 0$ for all $t = 1,\ldots,T$ it follows that $ph_t^*=0$ and the future discretionary benefits are minimized as $FDB=0$. Clearly, this strategy is unrealistic. Thus we have to reformulate the control problem $FDB + TAX + E[B_T^{-1}MV_T] \longto \textup{MIN}$ subject to certain constraints which reflect the management's goal of providing competitive declared bonuses to the policyholders. Typical constraints to this effect are: 
\begin{enumerate}
\item 
There is a fixed $\theta>0$ such that $SF_t\le \theta LP_t$. This is a run-off assumption ensuring that relation between different balance sheet items remains realistic throughout the projection and it automatically reduces the terminal value $E[B_T^{-1}MV_T]$. 
\item 
The profit share $ph_t^*$ aims to follow a target reflecting the policyholders' expectations. This target could be defined, e.g., in terms of a combination of the prevailing yield curve at $t$ to reflect the expectation of a financially rational policyholder, and of the previous profit sharing rate to avoid large jumps in the profit sharing rate.   
%The profit participation rate $\pi_t := ph_t^*/V_{t-1}$ remains within a predefined interval around the prevailing $10$ year forward rate $F_t^{10}$, if achievable. The $10$-year rate is chosen to reflect the expectation of a financially rational policyholder. If $\pi_t$ is above such a limit, this rule implies that unrealized losses are realized and/or that $\nu$ and $\eta$ are reduced. If $\pi_t$ is below the limit, positive unrealized gains might be realized and/or $\nu$ and $\eta$ would be increased. 
%\item 
%The jump size $\Delta\pi_t$ is bounded from above by a fixed number, and this rule supersedes the  previous one. This is to reflect the general unwillingness of a company to increase the participation rate too quickly. 
\end{enumerate}
A concrete implementation of such a set of rules is presented in Section~\ref{sec:MR}.

Another way to view this control problem goes as follows. The numbers $\mu_t^s$  defined in \eqref{e:mu} allow us to express the future discretionary benefits as 
\begin{align*}
 FDB 
 &= E\Big[ \sum_{t=2}^T B_t^{-1}\sum_{s=1}^{t-1}\mu_t^s \,ph_s^* \Big]
 = E\Big[\sum_{t=1}^{T-1} ph_t^*\, q_t\Big]
\end{align*}
where $q_{t} := E[\sum_{s=t+1}^T B_s^{-1}\mu^t_s|\mathcal{F}_t]$. Notice that $q_t\ge0$ and $ph_t^*\ge0$ by construction. This formulation allows to directly discern the impact of $ph_t^*$ on the $FDB$-minimization. If we assume that the strategic asset allocation (SAA) is not a control parameter but  a constraint, namely that the SAA should be kept approximately constant, the remaining actuated variables are $\nu$ and $\eta$ in \eqref{e:cred} and the realizations of unrealized gains in the calculation of $ROA$. The latter can be expressed as $u_t = (u_t^a)$ where $a\in\mathcal{A}_t$ runs over all assets held at $t$ and $u_t^a\in [0,1]$ controls the fraction of $a$ to be sold at $t$. 

Let $ta_t$ denote an appropriate target at $t$, defined e.g.\ as in item (2) above.  
Thus we may rephrase the control problem as the search for a strategy $(\nu_t,\eta_t,u_t)$ such that 
\begin{equation}
 E\Big[
  \sum_{t=1}^{T-1} \Big( ph_t^* \,q_t
   +   \om( ph_t^* - ta_t)^2
   \Big)
 \Big] 
 \longto\textup{MIN}
\end{equation}
where  $\om$ is a weighting  factor representing the target's importance and
\[ 
 ph_t^* = ph_t^*(\nu_t,\eta_t,u_t)
 = \nu_t \,gph\,\Big( ROA_t' + \sum_{a\in\mathcal{A}_t} u_t^a UG_t^a - (\rho_{t-1}-\gamma_{t-1})V_{t-1}
 \Big)^+
  + \eta_t\, SF_{t-1} ,
\]
and the minimization is subject to the constraints  $SF_t\le \theta LP_t$ with a fixed number $\theta>0$ and that the  SAA is kept (approximately) constant. Here, $ROA_t' = BV_t' - BV_{t-1}$ denotes the change in the portfolio's book value before application of management rules (i.e., realizations of unrealized gains). 

In practice, the admissibility of controls $u^a$ is often restricted further: while it is advantageous from the shareholder's perspective to realize unrealized losses in cases where $ph_t^* > ta_t$, realizations of positive unrealized gains in cases where $ph_t^* < ta_t$ might be prohibited to avoid future cost of guarantee payments. Indeed, when the gross surplus is negative it is certainly desirable, again from the shareholder's point of view, to realize positive unrealized gains in order to avoid capital injections.

\begin{comment}
Thus $\beta_t$ depends on the state of the economy, $\mathcal{E}_t$, at $t$ via the discount factors and on the liabilities, $\mathcal{L}_t$, via the conversion factors  $\mu_t^s$. Hence we view $\beta$ as a function $\beta_t = \beta(\mathcal{E}_t, \mathcal{L}_t)$. Further, the yield objective (e.g., the $10$ year forward rate, $y_t = F_t^{10}$) may be viewed as a function of the economy,  $y = y(\mathcal{E}_t)$. 

Let $\xi = (\mathcal{A}, \mathcal{L}, \mathcal{E})$ denote the state space triple consisting of the  of assets, liabilities and economy. Let $\mathcal{U} = (\nu,\eta,u)$ where $u = (u^a)$ is a vector controlling the realization of $UG^a$ for all $a\in\mathcal{A}$. Then we may define the Lagrangian function
\[
 L(\xi, \mathcal{U})
 = 
 \frac{m}{2}(\Delta \pi)^2 
 + \frac{\om}{2}(y(\mathcal{E}) - \pi)^2 
 + \beta(\mathcal{E},\mathcal{L})\pi 
\]
where $m>0$ and $\om>0$ are fixed, 
and the objective is now to find the optimal strategy $\mathcal{U}_t$ such that  
\[
 E\Big[
  \sum_{t=1}^{T-1} L(\xi_t, \mathcal{U}_t)
 \Big]
 \longto \textup{MIN}
\]
subject to $SF_t\le \theta LP_t$ with a fixed number $\theta>0$. 
\end{comment}

\section{Numerical ALM modelling}\label{sec:numALM}

To calculate the present value, $FDB$, of future discretionary benefits in traditional life insurance, it is necessary to model the statutory balance sheet and to carry out a Monte-Carlo simulation along economic scenarios.

Consider yearly time steps $t=1,\dots,T$ where $T$ corresponds to the projection horizon. 
The cash-flows,  at time $t$, relevant for market consistent valuation (best estimate calculation according to Solvency~II) are premiums, $pr_t$, expenses, $exp_t$, guaranteed policyholder benefits, $gbf_t$,  and discretionary policyholder benefits, $ph_t$, cf.\ Section~\ref{sec:li}. 

%Let $B_t$ denote the money market account with $B_0=1$ and $\mathbb{Q}$ the associated risk-neutral measure. 
The \emph{best estimate}, $BE$, is given as the  sum of guaranteed benefits and future discretionary benefits defined in \eqref{e:GB} and \eqref{e:FDB}, that is 
\begin{equation}
\label{e:BE}  
 BE = GB + FDB .
 %,\quad
 %GB := E_{\mathbb{Q}}\Big[B_t^{-1}(gbf_t+exp_t-pr_t)\Big],\quad
 %FDB :=  E_{\mathbb{Q}}\Big[B_t^{-1}ph_t\Big]
\end{equation}

This section contains a description of the Asset Liability Management (ALM) model which we have implemented in the programming language $R$ in order to numerically calculate $GB$ and $FDB$. 
%While $GB$, since it contains cash flows which are fixed at valuation time $t=0$, can be calculated independently of the asset module and management rules, 
The algorithm for $FDB$ is quite involved due to the interaction of asset module (Section~\ref{s:AM}), liability module (Section~\ref{s:LM}) and management rules (Section~\ref{sec:MR}). The general mechanism behind profit participation in traditional life insurance is described in Section~\ref{sec:li} and sketched at the level of model points in Section~\ref{s:PP}. 
 
\subsection{Contracts and model points}\label{s:PP}
Our ALM model is concerned with classical life insurance contracts. Each contract (or, more generally, model point), $x$, has a specified maturity payment, $M^x$, which is guaranteed and may participate in the company's earnings. More precisely, each contract $x$ 
\begin{itemize}
\item 
has a specific maturity time $T^x$;
\item 
has associated best estimate mortality and surrender tables;
\item 
has a constant technical interest rate $\rho^x$; 
\item 
pays a constant premium $pr^x$ up to $T^x-1$;  
\item 
has an associated mathematical reserve $V_t^x$ calculated according to classical actuarial assumptions involving, e.g., $\rho^x$;  
\item 
has an associated bonus account of total declared benefits $TDB_t^x$ depending on the company's profits; 
\item 
receives either a maturity benefit $M^x+TDB^x_{T^x-1}$ at $T_x$; or a death benefit $M^x+TDB^x_{t-1}$ at $t<T_x$; or, in case of surrender at $t<T_x$, a surrender benefit $\kappa_t (V_{t-1}^x + TDB_{t-1}^x)$ where $\kappa_t$ is a penalty term which is linear in $t$ such that $\kappa_0 = 0.9$ and $\kappa_{T^x}=1$.  
\end{itemize}

When we model the projection of a model point $x$  which has started before valuation time $t=0$, it is necessary to distinguish between past and (uncertain) future. Thus we denote by $(DB^x)^{0}_0$ those declared benefits associated to $x$ which have been declared in the past, i.e.\ prior to valuation time $t=0$. The modelled projection at $t$ of this account is denoted by  $(DB^x)^{0}_t$. Similarly, future (a priori, uncertain) declared benefits are denoted by $DB_t^x$ with $DB_0^x=0$. This leads to the decomposition $TDB_t^x = (DB^x)^{0}_t + DB^x_t$ of the total account.

\subsection{Economic Scenario Generator (ESG)}\label{sec:esg}
The proposed ALM model uses the mean-field Libor market model of Section~\ref{sec:mflmm} to generate market and book values for four different asset classes: cash, bonds (without default), equity and property. 

Cash is modelled as a bond with maturity of one year. This corresponds to the roll-over definition of the implied money market account. 

Market values of property and equity are assumed to follow a geometric Brownian motion where the drift depends on the prevailing one-year forward interest rate, $F_t$. Further, the drift is modelled as a fixed rate, $d$, representing rental income or dividend yield. Finally, each property or equity asset may have its own fixed volatility, $\sigma$. Hence the market value, $MV_t^a$ of a given property or equity asset is assumed to be projected according to 
\begin{equation}
\label{e:GBM}
 MV_t^a 
 = 
 MV_0^a \exp\Big( (F_t-d-\sigma^2/2)t + \sigma W_t^a \Big)
\end{equation}
where $MV_0^a$ is the initial market value and $W_t^a$ is the asset's Brownian motion whose correlation with stochastic drivers of other assets remains to be specified. Equity, property and bond portfolio are assumed to be independent in our model. We emphasize that $d$ and $\sigma$ are assumed to be fixed numbers while $F_t$ is realized as a numerical implementation of the MF-LMM. 

The numerical implementation we have chosen for this task is a superposition of the mean-field taming \cite[Section~4.3]{MFLMM} and the anti-correlation approach \cite[Section~4.5]{MFLMM}; this approach is chosen since it allows flexibility in reproducing cap and swaption prices while at the same time effectively reduces the probability of blow-up.

Let $\Psi^m$ be defined by \eqref{e:phi1}. Let $v_0>0$ denote a predefined variance threshold and $f\ge1$ a stretching of this  threshold. Thus when the  variance exceeds the threshold, $\Psi^m > v_0$, the dynamic enters a new regime (anti-correlation), and when $\Psi^m > fv_0$ the dynamic is supposed to be further affected by a volatility dampening factor. 
We consider the evolution equation: 
\begin{align} 
\label{e:anti-cor}
%\lambda^m\Big(t,\Psi_t^{m}\Big)
%&=
%\left\{
%\begin{matrix}
%r_m(t),
%&&  \Psi_t^{m} \le v \\
% \exp\Big(-\max(\Psi_t^{m}/ (fv_0) - 1,0)  \Big)\,
% \Big|r_m(t)\Big| e_n, 
%&&  
%   \Psi_t^{m} > v \,\&\, m = 2n-1\\
% -\exp\Big(-\max(\Psi_t^{m}/ (fv_0) - 1,0)  \Big)\,
% \Big|r_m(t)\Big| e_n,
%&&  \Psi_t^{m} > v \,\&\, m = 2n
%\end{matrix}
%\right\}
%\\
% sigmoid
\lambda^m\Big(t,\Psi_t^{m}\Big)
&=
S(v_0-\Psi_t^m)\cdot r_m(t)
+ 
(1-S(v_0-\Psi_t^m))\cdot 
 \exp\Big(-\max(\Psi_t^{m}/ (fv_0) - 1,0)  \Big)\,
 \Big|r_m(t)\Big| e_n,\\
 \textup{if }
 m &= 2n-1
\\
% sigmoid
\lambda^m\Big(t,\Psi_t^{m}\Big)
&=
S(v_0-\Psi_t^m)\cdot r_m(t)
- 
(1-S(v_0-\Psi_t^m))\cdot 
 \exp\Big(-\max(\Psi_t^{m}/ (fv_0) - 1,0)  \Big)\,
 \Big|r_m(t)\Big| e_n,\\
 \textup{if }
 m &= 2n
\\
r_m(t) 
&= 
\left(\Big(a(t_{m-1} - t) + d\Big) e^{-b(t_{m-1} -t)} + c\right)
\left(
\begin{matrix}
\cos\theta_m\\
\sin\theta_m
\end{matrix}
\right), 
\quad t \le t_{m-1} \,,
\end{align}
where $S$ is a sigmoid function with inflection point $0$ such that $\lim_{x\to\infty}S(x) = 1$  and $\lim_{x\to-\infty}S(x) = 0$. Numerically, this is approximated by a switch at $0$. 
Further, $n=1,\dots,N/2$ (we assume that $N$ is even) and $e_n$ is the $n$-th standard basis vector.
In this case the dimension of the Brownian increment in \eqref{e:spot_phi}, i.e.\ in \eqref{e:LMM-mf}, is $d=N$. 

The parameters which are used to calibrate the model are the so-called angles $\theta_m$, the Rebonato (\cite{RMW}) parameters $a$, $b$, $c$, $d$, and the mean-field parameters $f\ge1$ and $v_0>0$.  

The interest rate scenarios are generated by applying an Euler-Maruyama scheme to the approximating interacting particle system (IPS), and our ALM model uses 5000 such scenarios. 

Notice that the use of \eqref{e:anti-cor} and the associated Monte-Carlo simulation of the IPS  are justified by Theorem~\ref{thm:mflmm} since the volatility structure is of the form \eqref{e:LMM-mf-var} and the map $\psi\mapsto \exp(-\max(\psi/ (fv_0) - 1,0))$ clearly satisfies the point-wise condition~\eqref{e:mflmm_p} for all choices  $f\ge1$ and $v_0>0$; it suffices that the derivative of this map exists and is asymptotically $o(1/\psi)$ as $\psi\to\infty$ but it does not need to be continuous.

\subsection{Asset module}\label{s:AM}
The purpose of the asset module is to model book and market values of assets under management and to provide re-\ and deinvestment strategies. The latter are part of the management rules and described in Section~\ref{sec:MR}. 

The ALM model provides four different asset classes: cash, bonds, equity, and property. Bonds are assumed to be default-free, thus there is no distinction between corporate and government bonds. 

Cash is equivalent to a bond with a maturity of $1$, and correspondingly the interest earned by cash is the prevailing one-year forward rate. Further, book and market values, $BV^c$ and $MV^c$, for cash coincide.  That is,
\[
BV_t^c = MV_t^c = (1+F_{t-1})MV_{t-1}^c. 
\]
A bond, $b$, consists of a maturity $T^b$, a nominal payment $N^b$, a coupon factor $K^b$, such that  the coupon payment is $K^bN^b$, a market value $MV^b$ and a book value $BV^b$. At each time step $t<T^b$ the market value is determined by the prevailing yield curve, i.e.
\[
 MV_t^b
 = \sum_{s=t+1}^{T^b}P(t,s)K^bN^b + P(t,T_b)N^b
\]
where $P(t,s)$ is the value of the zero-coupon bond from the mean-field Libor market ESG. The book value is determined by the strict lower-of-cost-or-market (LCM) principle, that is 
\[
 BV_t^b = \min\Big(BV_{t-1}^b, MV_t^b\Big).
\]
When a bond $b$ with nominal $N^b$  and maturity $T^b$ is bought at time $t$, the initial book value is $BV_t^b = N^b$, and the coupon factor $K^b$ follows from the requirement $MV_t^b = N^b$ and the prevailing yield curve at $t$ up to $T^b$. I.e., bonds are bought \emph{at par}. 

An equity position, $e$, consists of a market value, a book value, a (constant) volatility factor and a (constant) dividend factor. The latter is relevant for the company's surplus which is calculated according to local generally accepted accounting principles (local GAAP), since the dividend affects the book value return. The market value development is given by the geometric Brownian motion \eqref{e:GBM}. The book value is given by the strict LCM principle, that is $BV_t^e = \min(BV_{t-1}^e, MV_t^e)$.

Properties are modelled similar to equities with two distinctions: The dividend factor is interpreted not as a dividend but as rental income. Second, properties, $p$, are split into building and land value, $BV^p = BV^{bu}+BV^{lv}$, and the building value has a depreciation time, $T^p$, such that $BV^{bu}_{T^p}=0$.  
This depreciation time (which is usually not more than $30$  years) has to be provided as part of the initial data. The depreciation is linear, i.e.\ 
according to $(1-(s-t+1)(T^p-t+1))BV^{bu}_{t-1}$ for $t\le s\le T^p$,
and the strict LCM applies. Hence the book value development is given by 
\begin{align*}
 BV_t^p 
 &= BV_t^{bu}+BV_t^{lv} \\
 BV_t^{bu}
 &= \min\Big(
 (1-\frac{1}{T^p-t+1})BV^{bu}_{t-1}  ,  
  MV_t^{bu}\Big)
  \\
 BV_t^{lv}
 &= \min\Big(
  BV^{lv}_{t-1}  ,  
  MV_t^{lv}\Big)
\end{align*}
where $MV_t^{bu}$ and $MV_t^{lv}$ follow \eqref{e:GBM} along the \emph{same} Brownian path. 
Consequently properties often carry comparatively large amounts of unrealized gains $UG^p_t = MV_t^p-BV_t^p$.

\subsection{Liability module}\label{s:LM}
Let $L_t$ \label{ref: L_t} denote the book value of liabilities at time $t$ and assume that $L_t = 
LP_t + SF_t$ is a sum of two items:
Firstly, the life assurance provision, $LP_t = V_t + DB_t^{0} + DB_t$\label{ref: LP_t}; here $V_t$ is the mathematical reserve at time $t$ which depends only on the survival rates of policyholders but not on future surplus  declarations; the term $DB_t^{0}$ contains discretionary benefits that have been credited before valuation time $t=0$ and depends only on the survival rates of policyholders but not on future surplus  declarations; and the term $DB_t$  contains discretionary benefits that have been credited at times $1\le s <t$ and have not yet been paid out. 
Secondly,  the surplus fund, $SF_t$\label{ref: SF_t}, which consists of those profits that have not yet been declared to policyholders, cf.\ diagram~\eqref{d:gs} for a schematic overview.  

This set-up follows the same logic as \cite{Gerstner08} where $V_t$, $DB_t^{0}+DB_t$, and $SF_t$ are referred to as the actuarial reserve,  allocated bonus, and free reserve (buffer account), respectively.  Notice that we have split the allocated bonuses according to $DB_t^{0}$ and $DB_t$.

Since all cash-flows lead to a corresponding increase or decrease of the cash position in the asset portfolio it follows that   $L_t$ has to coincide at all times with the total book value, $BV_t$, of assets under management  
\begin{equation} \label{e:BV = LP + SF}
    \sum_{a\in\mathcal{A}_t} BV_t^a
    =
    BV_t 
    = L_t 
    = LP_t+SF_t
\end{equation}
and the verification of this equality is implemented as an automated test along all scenarios and for all time steps (cf.~\cite[A.~2.2]{HG19}). 
 
Notice that the equity position in the balance sheet model of \cite{Gerstner08} is a hybrid of free capital, given in the present notation as the difference  $BV_t - L_t$, and hidden reserves, $UG_t = MV_t-BV_t$. We do retain $UG_t$ in the projection since this is indispensable for the formulation appropriate management rules for best estimate calculation.

The ALM model assumes that the following quantities are given as deterministic functions of time:
\begin{itemize}
\item 
premium payments: $pr_t$
\item 
guaranteed benefits, including surrender and paid-up payments, due to $V_t$ and $DB_t^{0}$: $gbf_t$
\item 
expense payments: $exp_t$
\item 
mathematical reserve: $V_t$
\item 
previously allocated bonuses: $DB_t^{0}$
\item 
technical interest rate: $\rho_t$.
\end{itemize}
In fact, all these quantities are given on the level of model points such that the quoted values are the aggregate sums. Moreover, these quantities have been constructed according to classical actuarial assumptions such that benefits, premiums and technical reserves are consistent. 

The liability module gives rise to two management rules concerning the quantities $\nu_t$ and $\eta_t$ in \eqref{e:cred}. These rules are specified in Section~\ref{sec:MR}.

\subsection{Management rules and profit declaration}\label{sec:MR}
The company's management has a certain freedom to act based on market information and the state of assets and liabilities. This freedom concerns decisions about strategic asset allocation and profit declaration. A general understanding of the mechanics of profit declaration may be gained from Figure~\ref{d:gs} where the connection between gross surplus, management actions, profit declarations and cash-flows is summarized in a diagram. 

We recall the definition of the gross surplus \eqref{e:gs_g}, and note that 
\[
 ROA_t = \sum_{a\in\mathcal{A}_{t-1}}\Big(cf_t^a + \Delta BV_t^a\Big) + F_{t-1} C_{t-1} 
\]
where $\mathcal{A}_{t-1}$ denotes assets under management at $t-1$ other than cash, $cf_t^a$ is the asset's cash flow at $t$, i.e.\ coupon, nominal, dividend or rental income,  $\Delta BV_t^a = BV_t^a - BV_{t-1}^a$ is the change in book value according to the strict LCM principle, and  $C_{t-1}$ denotes the amount of cash at $t-1$.

\subsubsection{Strategic asset allocation}

\begin{mr}
Bonds are valued according to the augmented lower of cost or market principle, i.e.\ the initial book value is carried forward until the bond is sold or reaches its maturity. Equity and property  positions are valued according to the strict lower of cost or market principle, $BV_t^{\cdot} = \min(BV_{t-1}^{\cdot}, MV_t^{\cdot})$. 
\end{mr}

Notice that this rule implies that only bonds can have unrealized losses, all other assets' unrealized gains must be non-negative. 

\begin{mr}
New bonds are bought at par with a time to maturity of $10$ years. 
\end{mr}

\begin{mr}\label{MR:SAA}
The strategic asset allocation is kept approximately constant: the market value ratios (that is, cash amount over total market value, bonds over total market value, equities over total market value, and properties over total market value) are kept constant up to a predefined deviation. When an asset class breaches this bound, the portfolio is rebalanced such that the original targets, given by the market value ratios at $t=0$, are restored. The rebalancing is such that placement of assets with minimal unrealized gains is prioritized (to avoid unintended book value return). 
\end{mr}

\subsubsection{Negative surplus}
\begin{mr}\label{MR:UG}
When the gross surplus is negative, unrealized gains are realized until the gross surplus equals $0$, or no more positive unrealized gains exist. The selling order is: bonds before equity before property, and within those classes positions with large amounts of positive unrealized gains are prioritized. 
\end{mr}  

This rule is in place in order to avoid shareholder capital injections as much as possible. The gross surplus is updated after the application of this rule. 

%\begin{mr}\label{MR:not}
%After three consecutive years of negative gross surplus and when, at the same time, no more positive unrealized gains are available, the surplus fund is depleted in order to (try to) achieve a non-negative gross surplus. 
%\end{mr}

%This rule constitutes the only instance where the surplus fund may increase the gross surplus. Concretely, if $gs_t<0$ and the requirements of Rule~\ref{MR:not} hold, then the gross surplus may be augmented as $gs_t^{(1)} = gs_t + \min(-gs_t,SF_{t-1})$; at the same time the surplus fund is reduced according to $SF_t = SF_{t-1} - \min(-gs_t,SF_{t-1})$. 

\subsubsection{Positive surplus}
The following rules apply if the gross surplus that is calculated by the model after the portfolio has been aligned according to the strategic asset allocation is positive. The gross surplus is updated after the application of each of the following rules. 

Let $\vartheta = SF_0/LP_0$. 
Let $\tau_{t}$ denote the declared total participation rate at $t$. Notice that a participation rate of $\tau_t$ means that the amount of bonus declaration is given by 
\[
 ph_t^* 
 = \nu_t \cdot gph\cdot gs_t^+ + \eta_t\cdot SF_{t-1}
 = \sum_{x\in\mathcal{X}_{t}}\Big(\tau_t-\rho^x\Big)_+V_{t-1}^x
\]
where $\mathcal{X}_{t}$ is the set of model points $x$ which are active  at $t$, and $\rho^x$ and $V_{t-1}^x$ are the technical interest rate and previous mathematical reserve.  
%The following rule applies if $gs_t>0$. 

Let $v = 1/100$. We define the target rate   
\[
 \tau_t^* = \min\Big( (\tau_{t-1}+L_t^{10})/2, \tau_{t-1}+v\Big)
\]
and the target amount of participation 
$ta_t = \sum_{x\in\mathcal{X}_{t}}(\tau_t^*-\rho^x)_+ V_{t-1}^x$.
This choice for $v$ means that the target participation rate increases at most $1$ percentage point from the previous participation rate thus avoiding large upwards jumps. 
The target  is defined in terms of the previous participation rate, $\tau_{t-1}$, and the prevailing $10$Y-forward rate, $L_t^{10}$, it is therefore a combination of previous profit participation and general market expectation.

\begin{mr}\label{MR:UG_ctrl}
If $\tau_t > \tau_t^*$ and there are bonds $b\in\mathcal{A}_t$ with unrealized losses, $UG_t^b<0$, then these losses are incurred until either $\tau_t = \tau_t^*$ or all unrealized gains are non-negative. 
\end{mr}

\begin{mr}\label{MR:nu-eta} 
Assume $gs_t>0$. We distinguish two cases: 
\begin{enumerate}
\item
If $gph\cdot gs_t^+\ge ta_t$:
\begin{align*}
 \nu_t 
 &= \min\Big(
    1 , 
    \sum_{x\in\mathcal{X}_t}(\tau_t^* - \rho^x)_+V_{t-1}^x / (gph\cdot gs_t^+)
 \Big) \\ 
 %\eta_t 
 %&=
 %\Big( 
 %   SF_{t-1} - \vartheta (V_t+DB_t^{0}+\nu_t\, gph\cdot gs_t^+)
 %\Big)\Big/SF_{t-1}
 %\eta_t  
 %   &=
 %   \max\Big(0 ,
 %   \frac{SF_{t-1}-\vartheta(V_t+DB_t^{0}+\nu_t\, gph\cdot gs_t^+)}{(1+\vartheta)SF_{t-1}}
 %   \Big) 
 \eta_t^{(1)} &= 0
\end{align*}
%unless $SF_{t-1}=0$, in which case we set $\eta_t=0$.
\item 
If $gph\cdot gs_t^+ < ta_t$:
%we define $\nu_t=1$ and 
%
%have to further distinguish: 
%\begin{enumerate}
%\item
%If $gph\cdot gs_t^+ \ge \sum_{x\in\mathcal{X}_{t}}(\tau_t^*-v-\rho^x)_+ V_{t-1}^x$ then
%\[ 
% \eta_t 
% = 
% \Big( 
%  SF_{t-1} - \vartheta(V_t+DB_t^{0}+gph\cdot gs_t^+)
% \Big)_+\Big/SF_{t-1} 
%\]
%\item 
%If $gph\cdot gs_t^+ < \sum_{x\in\mathcal{X}_{t}}(\tau_t^*-v-\rho^x)_+ V_{t-1}^x$ then
\begin{align*}
    \nu_t &= 1\\
    \eta_t^{(0)} 
    &=
    \Big( 
        \sum_{x\in\mathcal{X}_{t}}(\tau_t^* - \rho^x)_+ V_{t-1}^x - gph\cdot gs_t^+
    \Big)_+\Big/SF_{t-1}\\
    \eta_t^{(1)}
    &=\min\Big(\frac{1}{2}, \eta_t^{(0)}\Big) 
    %\\
    %\eta_t^{(3)}  
    %&=
    %\frac{SF_{t-1}-\vartheta(V_t+DB_t^{0}+gph\cdot gs_t^+)}{(1+\vartheta)SF_{t-1}}\\ 
    %\eta_t 
    %&= 
    %\max\Big(\eta_t^{(2)}, \eta_t^{(3)}\Big) 
\end{align*} 
%\end{enumerate}
unless $SF_{t-1}=0$, in which case we set $\eta_t^{(1)}=0$.
\end{enumerate}
Given $\nu_t$ according to (1) or (2), we set
\begin{align*}
 \eta_t^{(2)} 
 &= 
 \frac{(1-\nu_t-\theta\nu_t)\cdot gph \cdot gs_t^+
          + SF_{t-1}
          - \theta(V_t+DB_t^{0} + DB_{t-1} - ph_t - sg_t^*)}{(1+\theta)SF_{t-1}}
\end{align*} 
unless $SF_{t-1}=0$, in which case we set $\eta_t^{(2)}=0$.
Finally, we define 
\[
 \eta_t 
 =  
 \min\Big(
  \max(\eta_t^{(1)}, \eta_t^{(2)}) , 1\Big).
\]
\end{mr} 

\begin{comment}
eta = (
        (
          (1-nu-theta*nu)*PH_ant * max(BG, 0)
          + SFtm1 
          - theta*(LPGt + DBtm1 - pht - sgtstar)
        ) /
          (
            (1+theta)*SFtm1
          )
      )
\end{comment}

The purpose of this rule is to follow the above defined target rate by managing the surplus fund allocations, however to reach this target not more than half of the existing surplus fund is provided. 
The exception to this limit on surplus fund injections is the case $\eta_t^{(2)} > \eta_t^{(1)}$. This last term is to ensure that the inequality 
\begin{equation}
\label{e:theta-con}
 SF_{t-1} + gph\cdot gs_t^+ - ph_t^*
 = SF_t
 \le \theta LP_t 
 = \theta\Big(
   V_t+DB_t^{0} + DB_{t-1} + ph_t^* - ph_t - sg_t^*
   \Big)
\end{equation}
holds for all time.

\subsubsection{Order of management rules}

\begin{mr}
The rules are applied in the order in which they are stated. 
\end{mr} 

As a consequence   it may happen, within one accounting step, that, e.g., Rule~\ref{MR:UG} is carried out after the rebalancing step in Rule~\ref{MR:SAA}. Thus capital gains may be realized by, e.g., selling an equity position so that the distribution of assets may no longer be in line with the strategic asset allocation. In such a case a misalignment of asset positions is carried forward along one accounting year and then rebalanced at the end of this year. This rule is chosen nevertheless in this form since short term misalignment is acceptable.

\section{Phenomenological assumptions and numerical evidence}
\label{sec:assump}

This section is concerned with phenomenological assumptions based on numerical experience. The underlying numerical experiments are carried out with respect to realistic life insurance data (cf.\ Section~\ref{sec:numALM}). This implies that the assumptions which are formulated below may hold only in an approximate sense. We consider a given statement or assumption as fulfilled if it holds up $0.5\,\%$ of the initial market value, $MV_0$, of the given asset portfolio. 

In the following the assumptions are compared to the numerical model over a range of nine parameters. We consider the base case parameterization from Section~\ref{sec:numALM}, and then vary the initial amount of unrealized gains and the amount of guaranteed premium payments. In the base case we have $UG_0/BV_0 = 0.05$ and premium payments are scaled by the unit factor $\pi_0 = 1$. In order to test the assumptions over a wide range of parameters we consider all combinations of 
\begin{equation}
\label{e:UGstart}
 UG_0/BV_0:
 \quad -0.10, \quad 0.05, \quad 0.20, 
\end{equation}
and 
\begin{equation}
\label{e:pi}
 \pi_0:
 \quad 0.95, \quad 1, \quad 1.05,
\end{equation}
which lead to very different relationships between asset and liability portfolios.   Indeed, the factor $UG_0/BV_0$ controls the market value of assets available to cover the liabilities. Owing to equation \eqref{e:gamma_def}, varying the premium income is equivalent to varying the average technical interest rate. Thus the variation given by $\pi_0$ can be interpreted as a variation of the relationship between the prevailing yield curve and the guaranteed minimum rate, i.e.\ as a variation of the relative conditions between current economy and liabilities.  

\subsection{Assumptions, evidence and discussion}

%\subsection{Liability run-off assumption~1: statement and discussion}
\begin{ass}
    \label{ass:runoff}
    The liabilities are in run-off such that:
    \begin{enumerate}
    \item 
    The projection horizon $T$\label{ref: T} corresponds to the run-off time of the liability portfolio, that is  
    $SF_T = LP_T = UG_T = 0$.
    \item 
    The expected life assurance provisions $E[LP_t]$ decrease geometrically: there is a fixed $1\le h < T$\label{ref: h} such that $E[LP_t]$ can be approximated by $\widehat{LP}_t := l_t^h\, LP_0$ where $l_t^h := 2^{-t/h}$ for $t<T$ and $l_T^h := 0$. 
    \item 
    Moreover, $h$ can be approximated as the duration of liability cash-flows: 
    \begin{align} 
 h_0 &= 
 \frac{\sum_{t=1}^T \,t\, P(0,t)\,(gbf_t+exp_t-pr_t)}{\sum_{t=1}^T P(0,t)\,(gbf_t+exp_t-pr_t)} \notag \\  
 cf_t^{BS}
 &= \phi(t, 2h_0, h_0/2)\max(FDB_0, 0)
 \notag \\
 h &= 
  \frac{\sum_{t=1}^T \,t\, P(0,t)\,(gbf_t+exp_t-pr_t + cf_t^{BS})}{\sum_{t=1}^T P(0,t)\,(gbf_t+exp_t-pr_t + cf_t^{BS})} 
  \label{e:h_liab}
    \end{align}
    \end{enumerate}
\end{ass}

%Since the portfolio is in run-off there is a time, $h$, where $E[LP_h] = LP_0/2$. Continuing from $h$ onwards there has to be a time, $h+h'$, such that $E[LP_{h+h'}] = E[LP_h]/2$. Assuming that the company's business model has been stable over time we have time  homogeneity in the sense that $h'=h$ and run-off of the liability book is geometric. This would not be satisfied if the company under consideration has taken up business only very recently but for companies with a longer history we view this as a very good approximation. 

%\subsection{Assumption~\ref{ass:runoff}: numerical evidence and discussion}
Here $\phi(\cdot,2h_0,h_0/2)$ denotes the normal density with mean $2h_0$ and standard deviation $h_0/2$. The idea of $cf_t^{BS}$ is to estimate the effect of future discretionary benefits on the duration of the liabilities. The simplest, albeit quite crude, estimator for $FDB$ is $FDB_0$. Hence these cash-flows are expected to be approximately proportional to $FDB_0$, and we consider only the positive part since future discretionary benefits are positive. The density function serves to distribute the effect of $FDB_0$ over time. The duration, $h_0$, of guaranteed cash-flows is now taken as an estimator for the peak of these cash-flows: the reasoning is that $h_0$ is already an estimator for the time it takes for the declared profits to accumulate, and $2h_0$ is the point where the policyholder payments from these declared benefits peak. 

The first part of assumption~\ref{ass:runoff} is trivial once $T$ is chosen sufficiently large. However, that the run-off is geometric and is, moreover, determined (approximately) by formula~\eqref{e:h_liab} requires some evidence. This is provided by Figure~\ref{fig:run-off}. Clearly, the above arguments for $h_0$, $cf_t^{BS}$ and $h$ are very heuristic, nevertheless the empirical evidence shows that the overall effects are as desired and the factor \eqref{e:h_liab} is a reasonable parameter controlling the geometric run-off. 

\begin{figure}[ht]
\includegraphics[width=5.2cm]{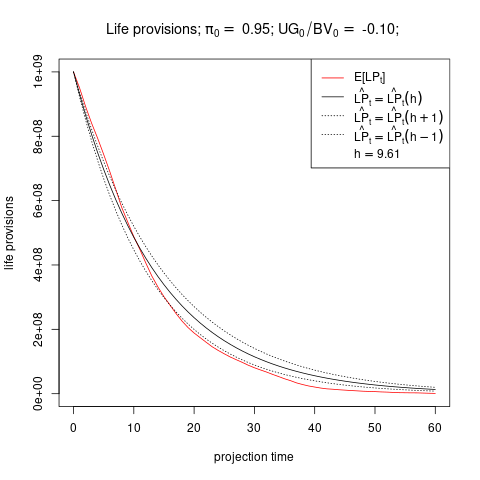}
\includegraphics[width=5.2cm]{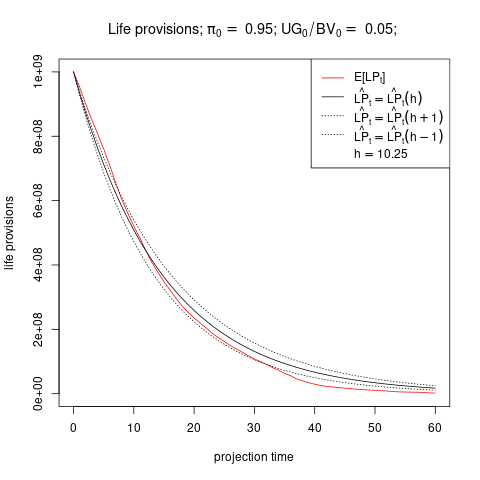}
\includegraphics[width=5.2cm]{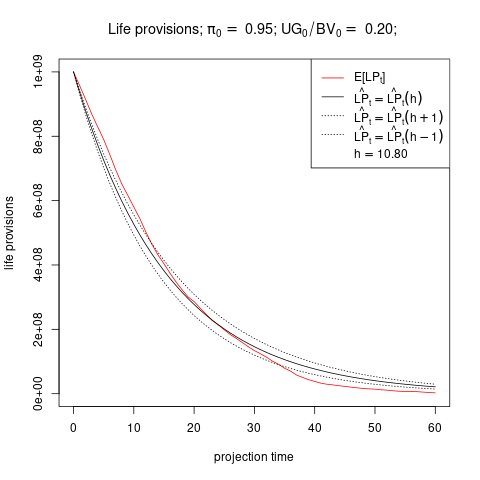}

\includegraphics[width=5.2cm]{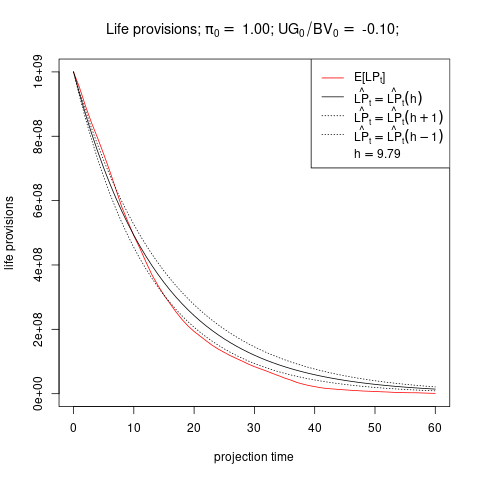}
\includegraphics[width=5.2cm]{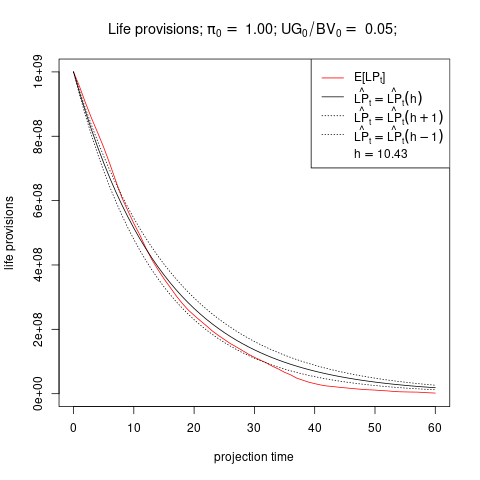}
\includegraphics[width=5.2cm]{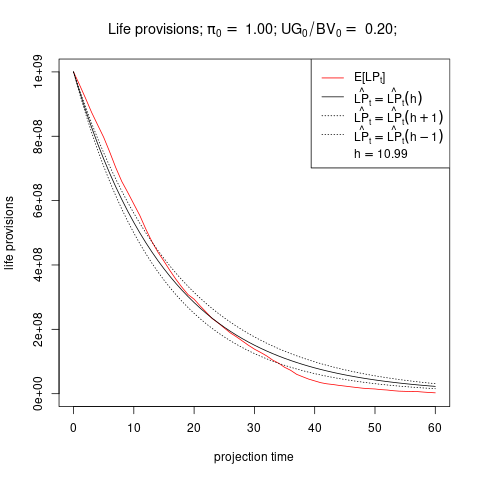}

\includegraphics[width=5.2cm]{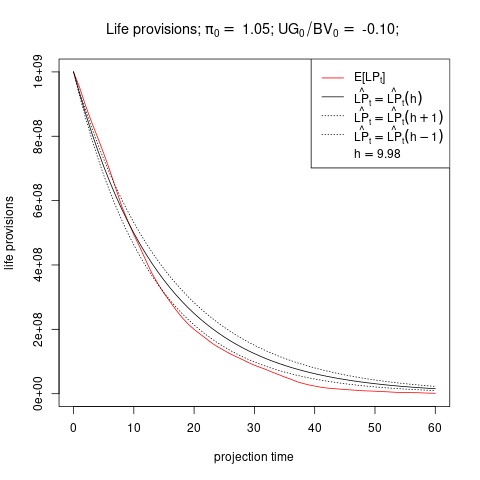}
\includegraphics[width=5.2cm]{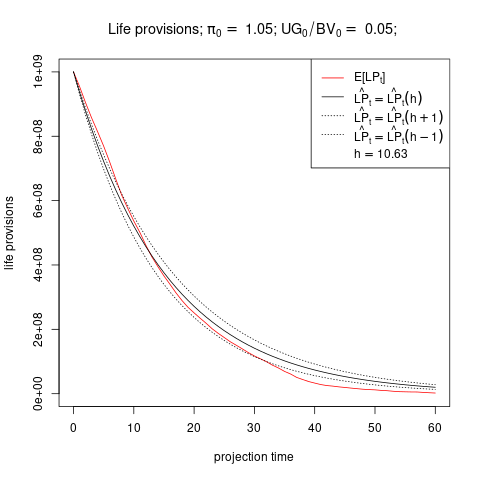}
\includegraphics[width=5.2cm]{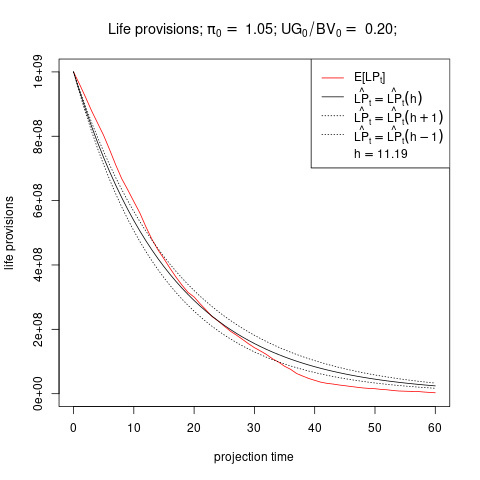} 
\caption{Numerical evidence for Assumption~\ref{ass:runoff}. The red line, $E[LP_t]$, is calculated empirically by means of the model described in Section~\ref{sec:numALM}. The factor $h$ is calculated using known quantities at time $0$ according to \eqref{e:h_liab}, and the dotted lines are included to show the effect of a variation $h\pm1$.}
\label{fig:run-off}
\end{figure}

%Model_parameter$sigma_max = max( FDB_BS/GBen , 0.75*sigma_start)
\begin{ass}
\label{ass2:sigma}
Let $h$ be given by \eqref{e:h_liab}.
Let $\sigma_0 = DB_0^{0}/LP_0$,
$\sigma_1 = \max(FDB_0/GB, 0.75\cdot\sigma_0)$,
$\sigma_t = \sigma_0 \max(h-t,0)/h + \sigma_1 \min(t/h, 1)$
and 
$\sigma^{DB}_t = \sigma_1 \min(t/h, 1)$. 
Then  $E[DB_t^{0}]$ and $E[DB_t]$ can be approximated by $\widehat{DB}_t^{0}$ and $\widehat{DB}_t$, respectively, where
\[ 
  \widehat{DB}_t = \sigma^{DB}_t\, \widehat{LP}_t.
\]
and  
\[ 
 \widehat{DB}_t^{0} + \widehat{DB}_t = \sigma_t \, \widehat{LP}_t
\]
for all $0\le t\le T$ .
\end{ass}
 
This assumption states that the initial fraction of declared benefits, $\sigma_0$, remains approximately constant in expectation, but may change over time linearly to $\sigma_1$ with a speed that is determined by $h$. Moreover, $E[DB_t]$ increases linearly from $0$ to $\sigma_1$ in $h$ time steps, whence $E[DB_t^{0}]$ cannot decrease too quickly. Numerical evidence for this assumption is provided in Figure~\ref{fig:ass2}.  This figure shows that the assumption appears to be satisfied only up to time $h$ and then holds rather poorly at later projection times. Nevertheless, we view this assumption as justified, since the overall behavior is represented correctly, and errors at later points in time are discounted.  Moreover, given that this assumption relies on the  calculation of $h$ in Assumption~\ref{ass:runoff} and the expectation future profits as estimated by the fraction $FDB_0/GB$ in the definition of $\sigma_1$ the results in Figure~\ref{fig:ass2} show a remarkable stability. 

%very poorly or not at all. However, Figure~\ref{fig:ass2W} contains a comparison $E[DB_t]$, $\widehat{DB}_t$ and $E[LP]_t$, $\widehat{LP}_t$, and this shows that the overall error in the assumption $E[DB_t] \le \widehat{DB}_t$ is negligible next to the estimation of $E[LP]$ by $\widehat{DB}$.

\begin{figure}[ht]
\includegraphics[width=5.2cm]{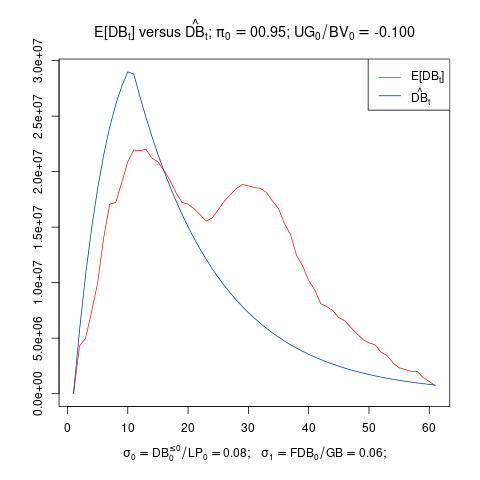}
\includegraphics[width=5.2cm]{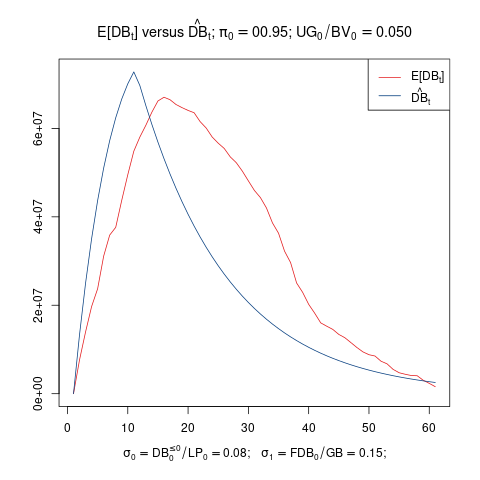}
\includegraphics[width=5.2cm]{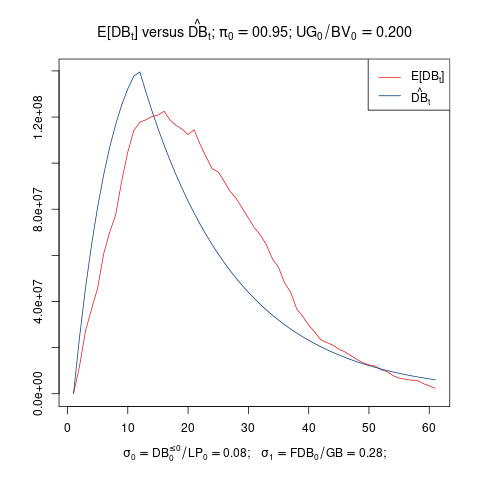}

\includegraphics[width=5.2cm]{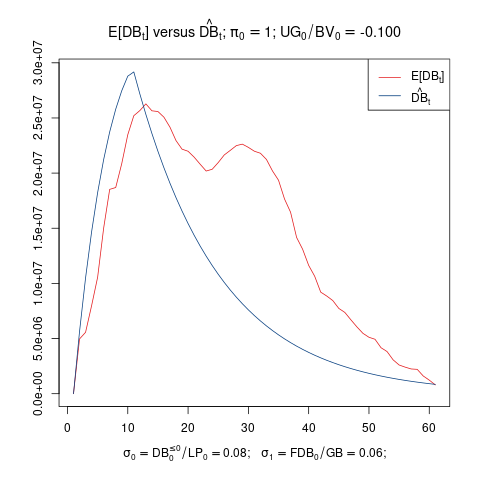}
\includegraphics[width=5.2cm]{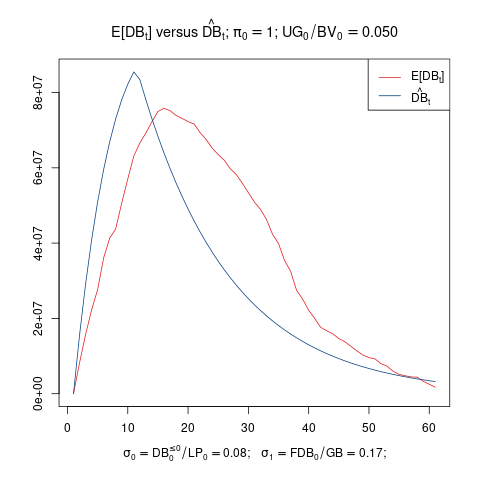}
\includegraphics[width=5.2cm]{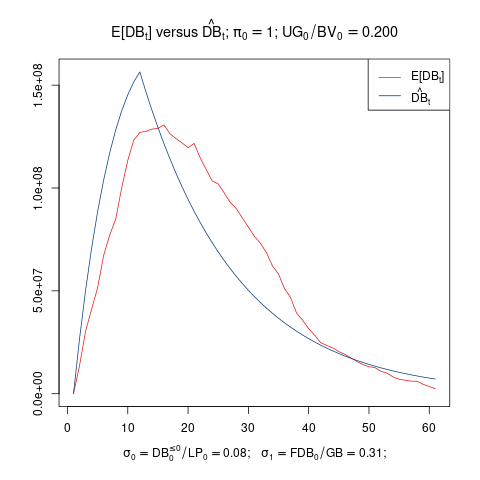}

\includegraphics[width=5.2cm]{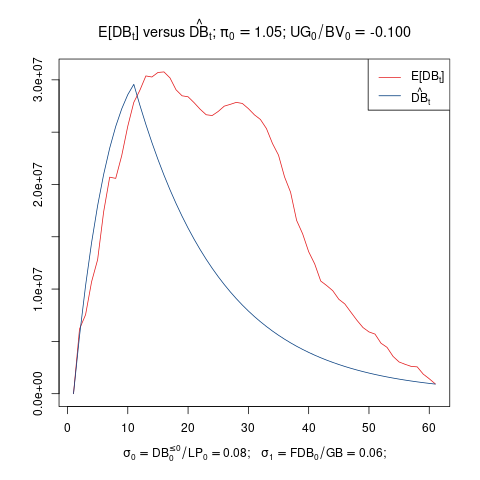}
\includegraphics[width=5.2cm]{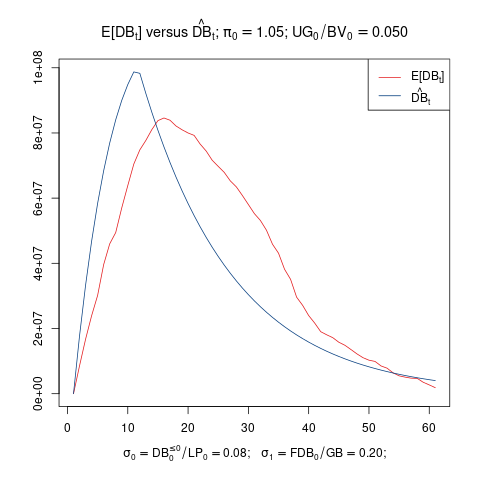}
\includegraphics[width=5.2cm]{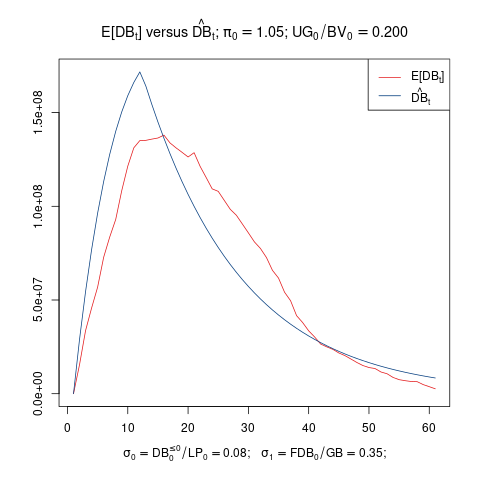}

\caption{Numerical evidence for Assumption~\ref{ass2:sigma}. $E[DB_t]$ is calculated numerically by means of the model described in Section~\ref{sec:numALM}. $\widehat{DB}_t$ is calculated with respect to $h$ as defined in \eqref{e:h_liab}.} 
\label{fig:ass2}
\end{figure}

\begin{comment}
\begin{figure}[ht]
\includegraphics[width=5.2cm]{}
%\includegraphics[width=5.2cm]{Images/20230907_5000/Ass2_sigmaW_prScale_0.95_UGpf_-0.025.png}
\includegraphics[width=5.2cm]{}
%\includegraphics[width=5.2cm]{Images/20230907_5000/Ass2_sigmaW_prScale_0.95_UGpf_0.125.png}
\includegraphics[width=5.2cm]{}

\includegraphics[width=5.2cm]{}
%\includegraphics[width=5.2cm]{Images/20230907_5000/Ass2_sigmaW_prScale_1_UGpf_-0.025.png}
\includegraphics[width=5.2cm]{}
%\includegraphics[width=5.2cm]{Images/20230907_5000/Ass2_sigmaW_prScale_1_UGpf_0.125.png}
\includegraphics[width=5.2cm]{}

\includegraphics[width=5.2cm]{}
%\includegraphics[width=5.2cm]{Images/20230907_5000/Ass2_sigmaW_prScale_1.05_UGpf_-0.025.png}
\includegraphics[width=5.2cm]{}
%\includegraphics[width=5.2cm]{Images/20230907_5000/Ass2_sigmaW_prScale_1.05_UGpf_0.125.png}
\includegraphics[width=5.2cm]{}

\caption{caption text} 
\label{fig:ass2W}
\end{figure}
\end{comment}

\begin{ass} \label{ass3:SF} 
The relation $SF_0/LP_0 =: \vartheta$\label{ref: theta} remains constant in expectation: $E[SF_t] = \vartheta E[LP_t]$ for all $0\le t\le T$.  (Cf.~\cite[A.~3.10]{HG19})
\end{ass}

Assumptions~\ref{ass2:sigma} and \ref{ass3:SF} are  statements about time-homogeneity. Management rules concerning bonus declarations should remain reasonably constant in the long run such that $\sigma$ and $\vartheta$ do not vary too strongly. The relevant point in this context is that these quantities should not vary arbitrarily but follow from target rates set by management rules. Assumption~\ref{ass3:SF} is comparable to the assumption concerning the `annual interest rate' in \cite[Section~4.2]{Gerstner08}. Figure~\ref{fig:ass3SF} contains empirical observation for Assumption~\ref{ass3:SF}. Clearly, this figure only shows evidence for $E[SF]_t/E[LP_t]\le \vartheta $. Nevertheless, Assumption~\ref{ass3:SF} is kept in its present form since a statement such as $0.75\cdot\vartheta \le E[SF]_t/E[LP_t]\le \vartheta$, while being more accurate, introduces unnecessary complexity. 
 
\begin{figure}[ht]
\includegraphics[width=5.2cm]{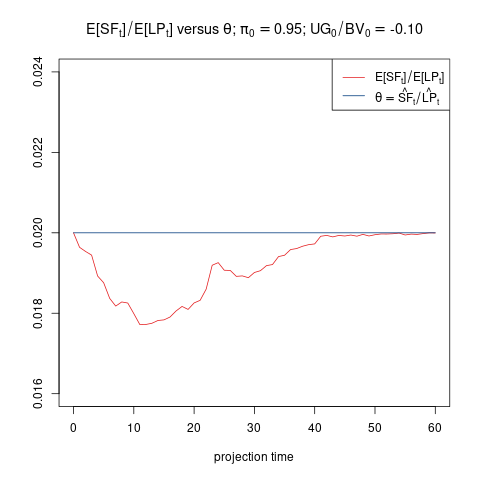}
\includegraphics[width=5.2cm]{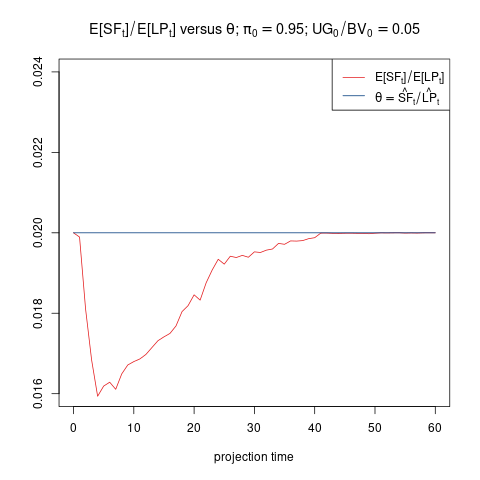}
\includegraphics[width=5.2cm]{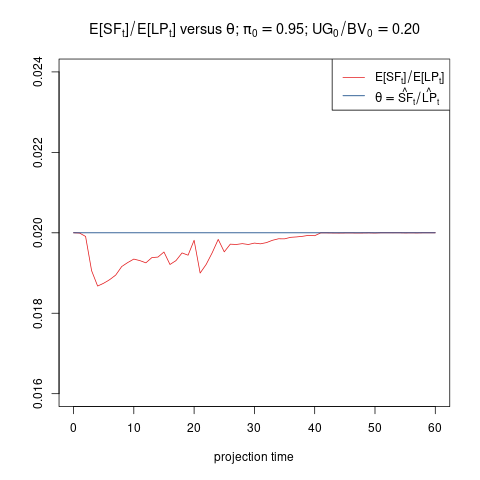}

\includegraphics[width=5.2cm]{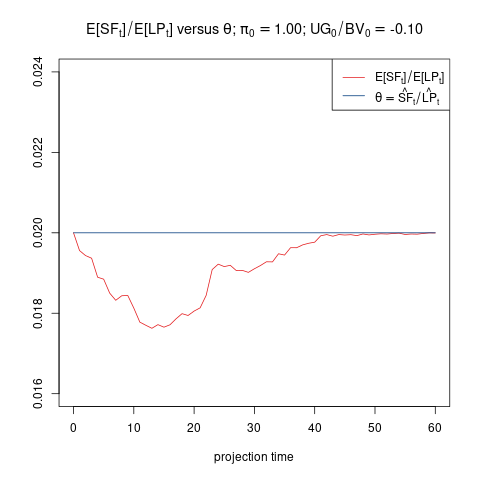}
\includegraphics[width=5.2cm]{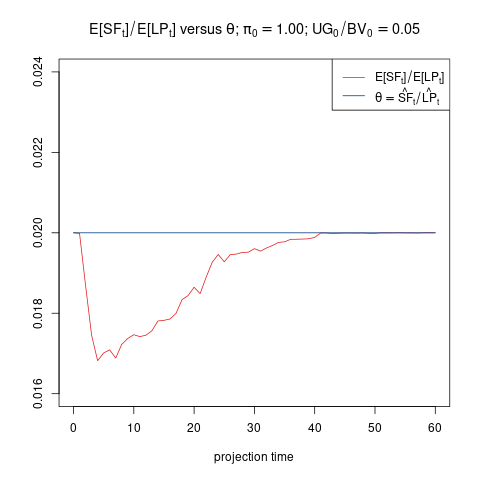}
\includegraphics[width=5.2cm]{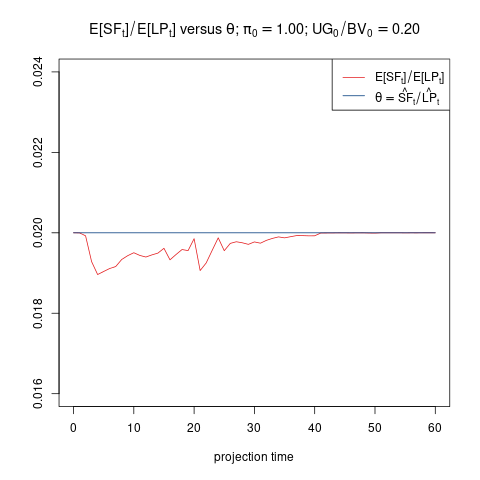}

\includegraphics[width=5.2cm]{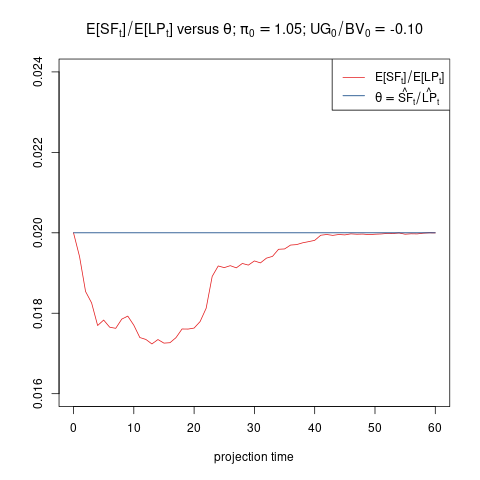}
\includegraphics[width=5.2cm]{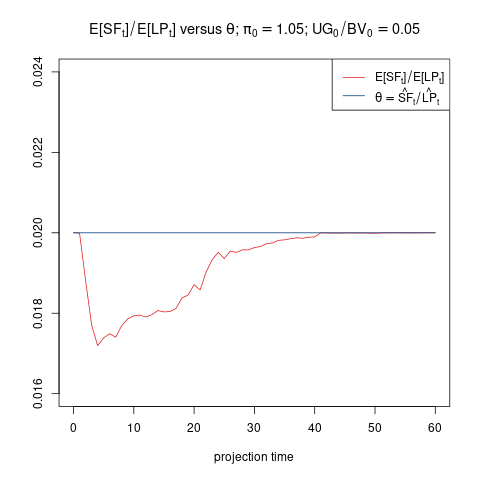}
\includegraphics[width=5.2cm]{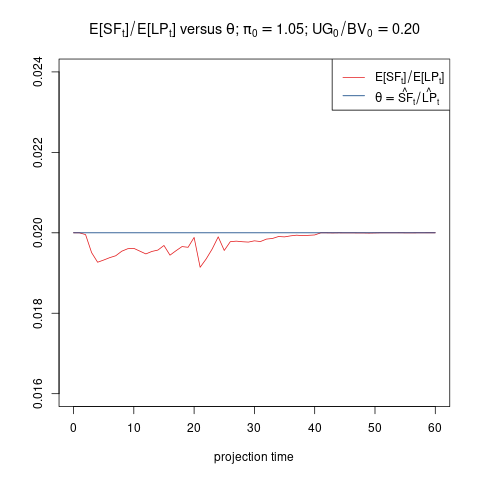}

\caption{Numerical evidence for Assumption~\ref{ass3:SF}. The quotient $E[SF_t]/E[LP_t]$ is calculated numerically by the model described in Section~\ref{sec:numALM}. The constant value is given by $SF_0/LP_0 = \vartheta = 2/100$.}
\label{fig:ass3SF}
\end{figure}

%\subsection{Surrender assumption}

\begin{ass}
    \label{ass:surr}
    The surrender gains, $sg_t^* = \chi_t DB_{t-1}$, can be estimated on average with the same factor, $\gamma_t$, as the technical gains in \eqref{e:gs_g}: 
    $E[sg_t^*] \le \max(\gamma_t,0) E[DB_{t-1}] \le \max(\gamma_t,0) \widehat{DB}_t$.  
\end{ass}
The factor $\gamma_t$ contains mortality, cost and surrender margins as a fraction of the full life assurance provision, $LP_{t-1}$. It is therefore reasonable to expect that the same factor can be used as an upper bound on the surrender margin arising from declared bonuses, $DB_{t-1}$, alone. Since the technical gains may also be negative only the positive part, $\max(\gamma_t,0)$, is considered.

%\subsection{Bonus benefit assumptions}
%Because of equation~\eqref{e:SF} the bonus benefit declaration at $t$ can be expressed as $\eta_t\cdot SF_{t-1} + \nu_t\cdot ph_t^*  = SF_{t-1} - SF_t + ph_t^*$. Management rules generally strive to keep profit declarations stable while, in accordance with assumptions~\ref{ass3:SF} and \ref{ass:geom}, $SF_t$ is expected to decrease geometrically over time. In order for $SF_t$ to decrease in expectation, the bonus benefit declarations must be strictly positive in expectation. To achieve this, a fraction of the profit share, $ph_t^*$, must also be declared to policyholders, at least in expectation. We turn this reasoning into an assumption along all scenarios. 
%\begin{ass}
%\label{ass:nu}
%There is a fixed $0<\nu<1$\label{ref: nu} such that the declarations satisfy
%$\eta_t\cdot SF_{t-1} + \nu_t\cdot ph_t^* \ge \nu\cdot ph_t^*$ for all $1\le t\le T$.
%\end{ass}

%\begin{ass}
%\label{ass:mu}
%Assume that $\mu_k^{s+1}$ is determined by the geometric run-off assumption~\ref{ass:geom}: 
%$\mu_k^{s+1}  = \frac{l_s^h - l_{s+1}^h}{l_k^h}$.
%\end{ass}
%Notice that, for fixed $k$, this definition entails $\sum_{s=k}^{T-1} \mu_k^{s+1} = 1-l_T^h/l_k^h = 1$; cf.~\eqref{e:mu-le1}. That is, run-off is complete at $T$.

\begin{ass}\label{ass:cov}
\begin{enumerate}
\item 
The correlation between $B_t^{-1}$ and $DB_{t-1}$ is negative: $\operatorname{Cor}[B_t^{-1}, DB_{t-1}] \le 0$. 
\item 
The relative covariance of $B_t^{-1}F_{t-1}$ and $DB_{t-1}+SF_{t-1}$ satisfies 
\begin{align*}
 &\operatorname{Cov}\Big[ 
  \frac{B_t^{-1}F_{t-1}}{E\left[B_t^{-1}F_{t-1}\right]}, 
  \frac{DB_{t-1}+SF_{t-1}}{E\left[DB_{t-1}+SF_{t-1}\right]} 
 \Big] \\ 
 &\phantom{X} =
 \operatorname{Cor}\Big[B_t^{-1}F_{t-1}, DB_{t-1}+SF_{t-1} \Big]
 \cdot CV\Big[B_t^{-1}F_{t-1}\Big]\cdot CV\Big[DB_{t-1}+SF_{t-1}\Big] \\
  &\phantom{X} 
  \le 1
\end{align*}
for $1\le t\le T$ where  $CV$ denotes the  coefficient of variation. 
\end{enumerate}
\end{ass}

The first part of this assumption is straightforward  since a low discount factor is related to high nominal rates and this yields high nominal returns and consequently high reserves $DB_{t-1}$. 
In the long run this argument applies also to the second part of the assumption, i.e.\ to  the correlation of $B_t^{-1}F_{t-1}$ and $DB_{t-1}+SF_{t-1}$ since, after some time, the behaviour of $B_t^{-1}F_{t-1}$ should be dominated by $B_t^{-1}$. In the short run when the correlation might be positive, however, we expect the product, $CV[B_t^{-1}F_{t-1}]\cdot CV[DB_{t-1}+SF_{t-1}]$ of the coefficients of variation to be small. Numerical evidence for these heuristic ideas is provided in Figure~\ref{fig:ass_cov}. 

\begin{figure}[ht]
\includegraphics[width=5.2cm]{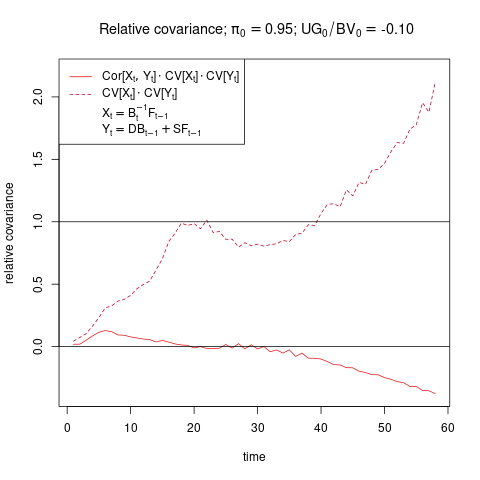}
\includegraphics[width=5.2cm]{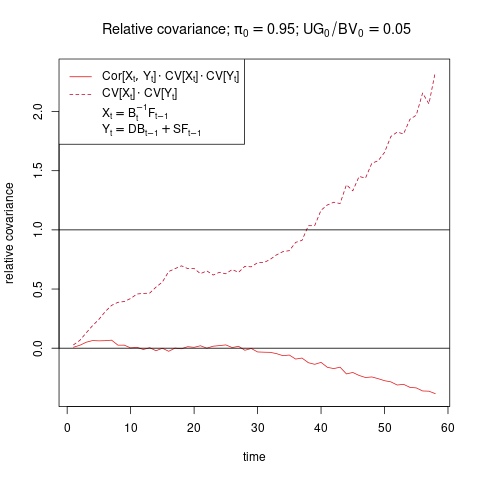}
\includegraphics[width=5.2cm]{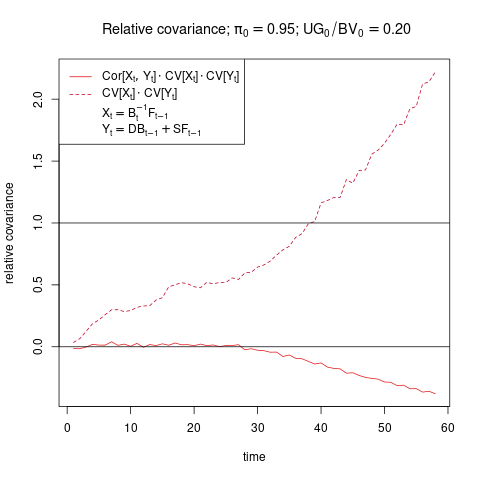}

\includegraphics[width=5.2cm]{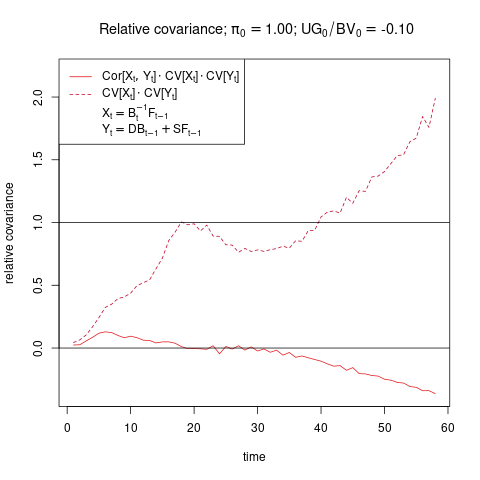}
\includegraphics[width=5.2cm]{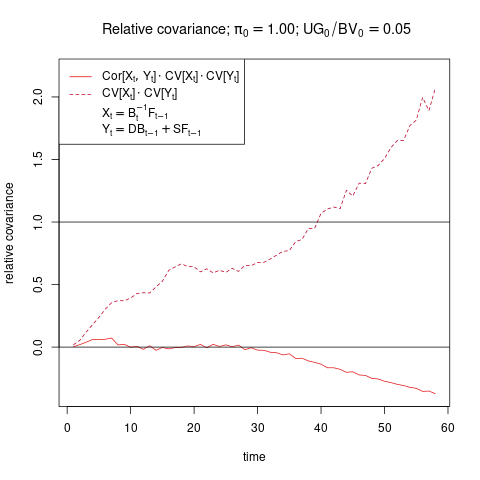}
\includegraphics[width=5.2cm]{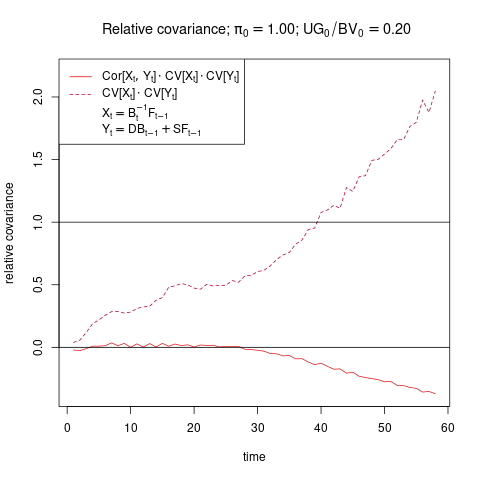}

\includegraphics[width=5.2cm]{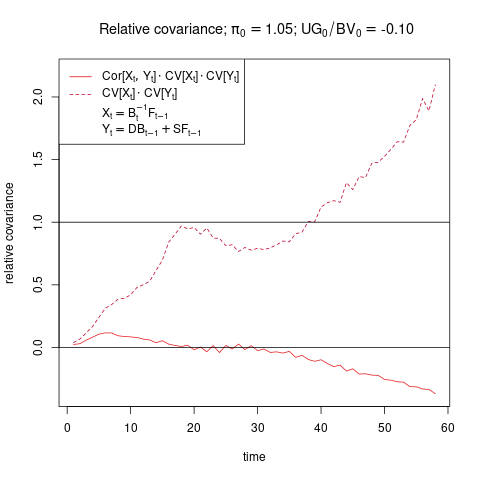}
\includegraphics[width=5.2cm]{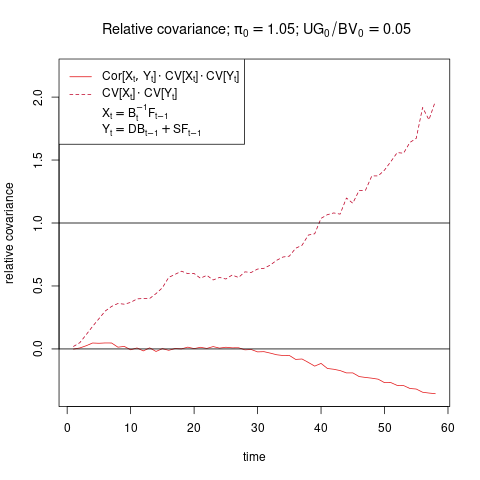}
\includegraphics[width=5.2cm]{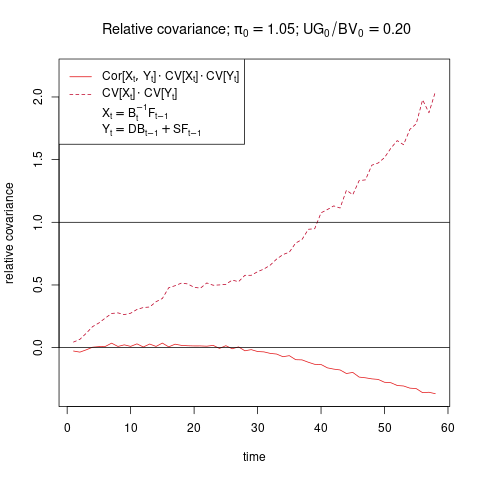}

\caption{Numerical evidence for Assumption~\ref{ass:cov}. The relative covariance  is calculated numerically by the model described in Section~\ref{sec:numALM}, additionally the product of the coefficients for variation is indicated by the dotted line.} 
\label{fig:ass_cov}
\end{figure}

%\subsection{Gross surplus assumptions}
According to ~\eqref{e:gs_g} the gross surplus is given by 
\begin{align*}
    gs_t
    =
    F_{t-1} (LP_{t-1} + SF_{t-1}) 
    + F_{t-1}UG_{t-1} - E[\Delta UG_t|\mathcal{F}_{t-1}] 
    + ROA_t - E[ROA_t|\mathcal{F}_{t-1}]
    - (\rho_t-\gamma_t) V_{t-1} 
\end{align*}
where $\rho_t$ and $\gamma_t$ may, in general, also depend on the stochastic interest rate curve via dynamic surrender. 
 
For the purpose of estimating terms $III$ and $COG$ in \eqref{e:fdb-rep}, we make the following simplifying assumptions. The principle idea behind these assumptions is that the main source of stochasticity in $gs_t$ is the forward rate $F_{t-1}$ whence all other quantities are replaced by their expected values.  The simplified model of $gs_t$ will be denoted by $\widehat{gs}_t$.

\begin{ass}
    \label{ass:tech}
    In $\widehat{gs}_t$ the technical interest rate $\rho_t$ and the technical gains $\gamma_t$ are deterministic functions of $t$. More precisely, $\rho_t$ is calculated as the $LPG_t$-weighted mean guaranteed rate, and $\gamma_t$ is calculated according to 
\begin{equation}
    \label{e:gamma_hat}
 \gamma_{t}(1-(\sigma_{t-1}-\sigma^{DB}_{t-1})) LPG_{t-1}
 = 
 \rho_{t}\Big(1-(\sigma_{t-1}-\sigma^{DB}_{t-1})\Big) LPG_{t-1} - \Delta LPG_t + pr_t - gbf_t - exp_t 
 %+ sg_t^*
\end{equation} 
\end{ass}

Notice that \eqref{e:gamma_hat} coincides with \eqref{e:gamma_def} up to the substitution $V_{t-1} = (1-(\sigma_{t-1}-\sigma^{DB}_{t-1})) LPG_{t-1}$, which is consistent with Assumption~\ref{ass2:sigma}, and we omitted the surrender gains $sg_t^*$. This definition allows us to compute $\gamma_t$ a priori from the given (deterministic) data. Figure~\ref{fig:gamma} contains plots of $\gamma_t$ and also of $\gamma_t + E[sg_t^*]$, and confirms that the effect of $sg_t^*$ should be expected to be quite negligible.

\begin{figure}[ht]
\includegraphics[width=5.2cm]{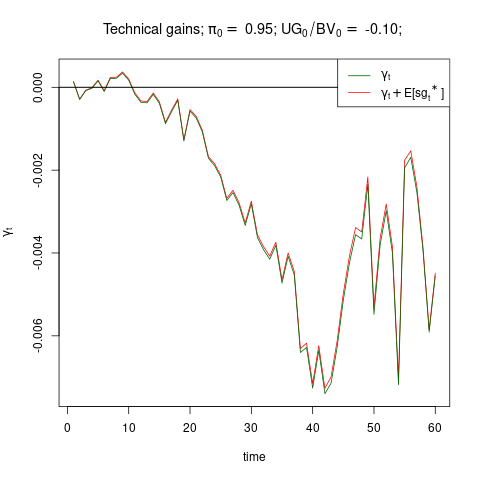}
\includegraphics[width=5.2cm]{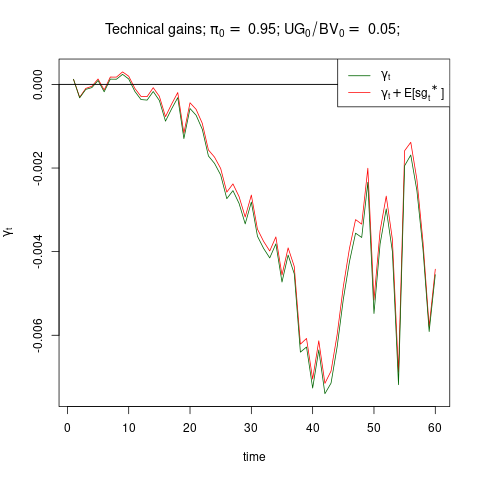}
\includegraphics[width=5.2cm]{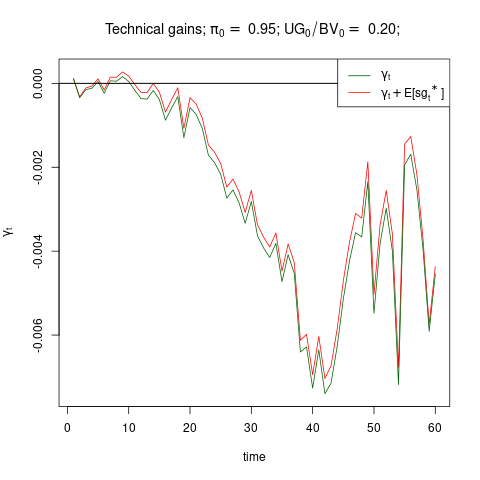}

\includegraphics[width=5.2cm]{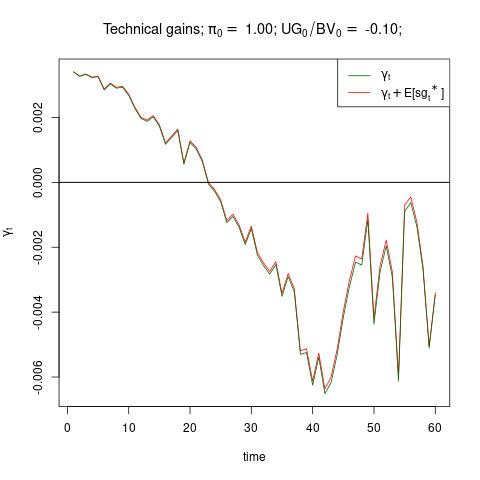}
\includegraphics[width=5.2cm]{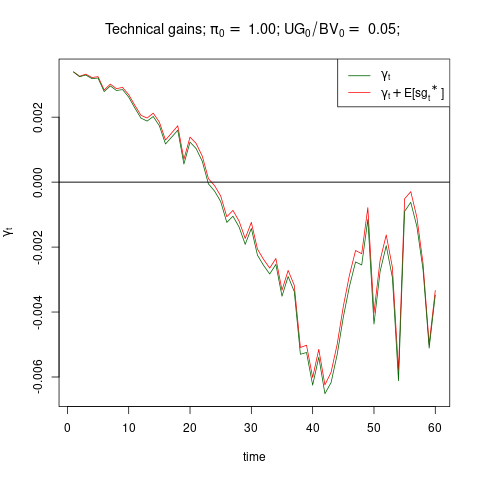}
\includegraphics[width=5.2cm]{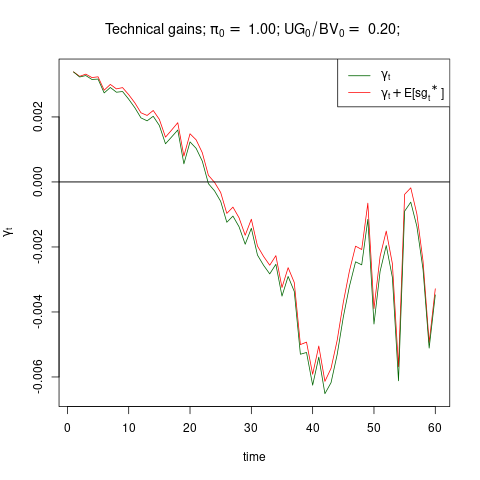}

\includegraphics[width=5.2cm]{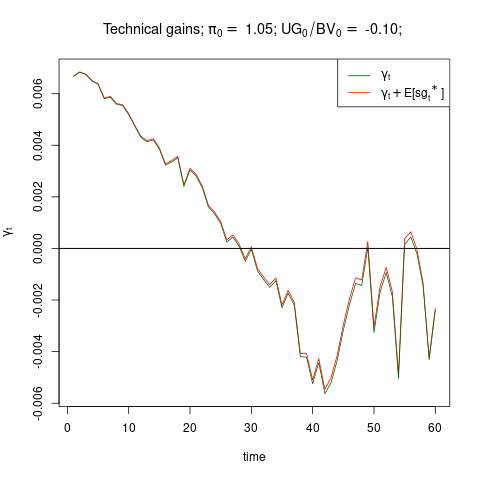}
\includegraphics[width=5.2cm]{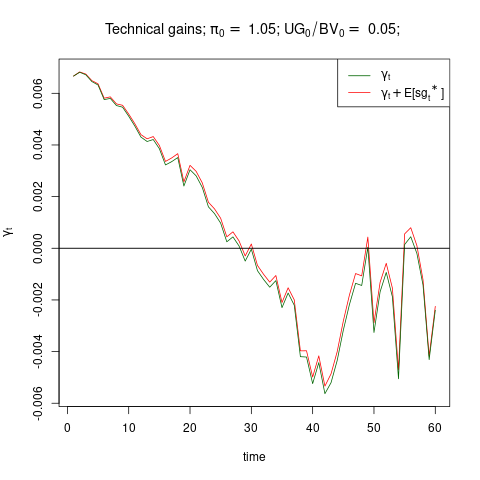}
\includegraphics[width=5.2cm]{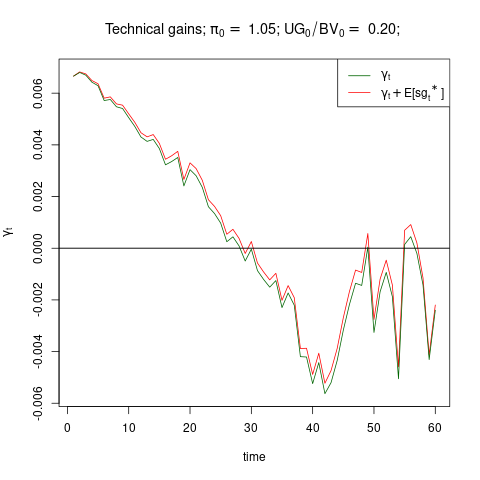}

\caption{Numerical evidence for Assumption~\ref{ass:tech}. It is shown that the effect of the model dependent quantity $sg_t^*$, which has been calculated by means of the numerical model described in Section~\ref{sec:numALM}, is negligible on average, such that \eqref{e:gamma_hat} is a reasonable approximation for the technical gains.}
\label{fig:gamma}
\end{figure}
 
\begin{ass}
\label{ass6:ROA}
In $\widehat{gs}_t$ the return $ROA_t$ is predictable, i.e.\ $\mathcal{F}_{t-1}$-measurable, and realizations of unrealized gains are determined by a fixed number $1< d < T$\label{ref: d}:
\begin{enumerate}[\up (1)]
    \item 
    $ROA_t - E[ROA_t|\mathcal{F}_{t-1}] = 0$;
    \item 
    $F_{t-1} UG_{t-1} - E[\Delta UG_t|\mathcal{F}_{t-1}] 
    = P(0,t)^{-1}(l_{t-1}^d - l_t^d) UG_0$
    where $l_t^d := 2^{-t/d}$ for $t<T$ and $l_T^d:=0$.
\end{enumerate}
Moreover, $d$ is given as the duration of the initial bond portfolio: $d = (\sum_{t=1}^T\sum_b t\,P(0,t)\,cf_t^b)/(\sum_{t=1}^T\sum_b P(0,t)\,cf_t^b)$ where $b$ runs over all initially held bonds. 
\end{ass}

To compare the first part of this assumption against numerical evidence, we consider the quantity  $(ROA_t - E[ROA_t|\mathcal{F}_{t-1}])/BV_{t-1}$ since $ROA_t$ is the total return on the book value $BV_{t-1}$. Figure~\ref{fig:ass6ROA1} shows observations of expected values and standard deviations, that is  $E[(ROA_t - E[ROA_t|\mathcal{F}_{t-1}])/BV_{t-1}]$ and  $SD[(ROA_t - E[ROA_t|\mathcal{F}_{t-1}])/BV_{t-1}]$. While the expected values vanish due to the tower property of the conditional expectation, the fact that also the standard deviations are (relatively) small is significant for Assumption~\ref{ass6:ROA}.

The second part of Assumption~\ref{ass6:ROA} is studied numerically in Figure~\ref{fig:ass6ROA2}. Again it can be observed that the assumption is not satisfied accurately, but nevertheless the overall behaviour is reflected quite well. 

\begin{figure}[ht]
\includegraphics[width=5.2cm]{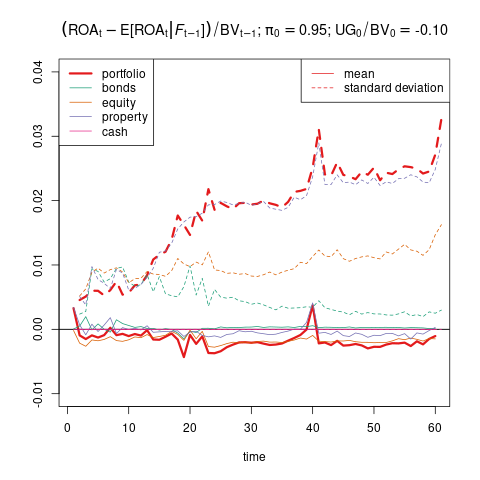}
\includegraphics[width=5.2cm]{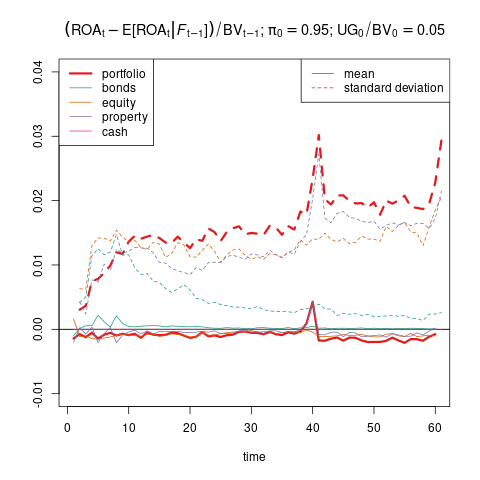}
\includegraphics[width=5.2cm]{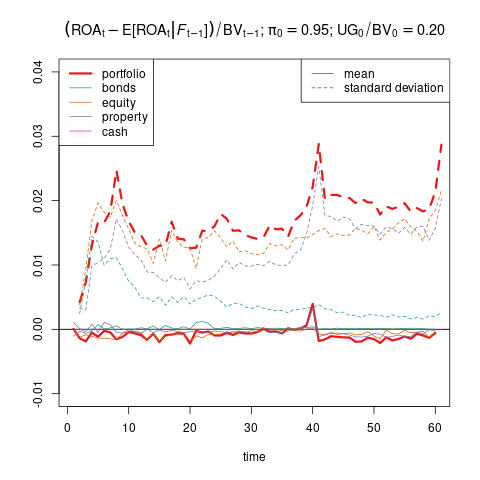}

\includegraphics[width=5.2cm]{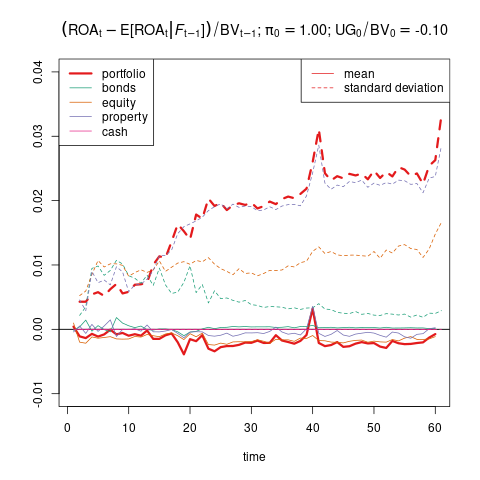}
\includegraphics[width=5.2cm]{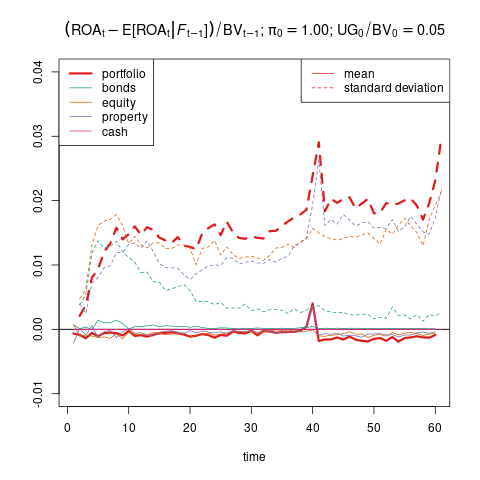}
\includegraphics[width=5.2cm]{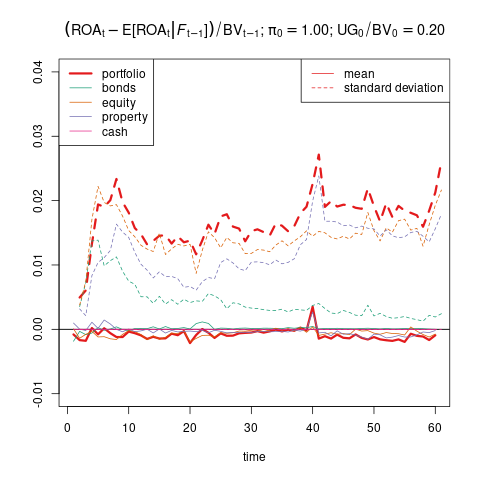}

\includegraphics[width=5.2cm]{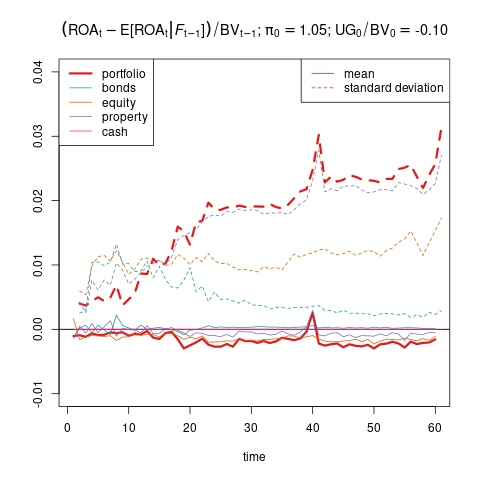}
\includegraphics[width=5.2cm]{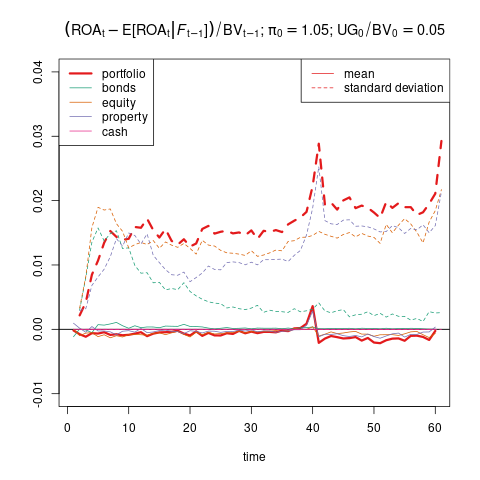}
\includegraphics[width=5.2cm]{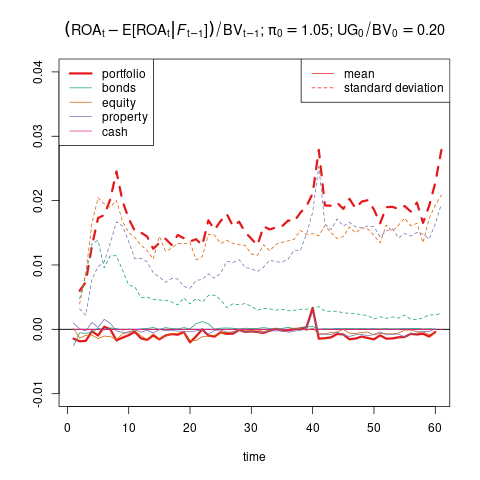}

\caption{Numerical evidence for the first part of Assumption~\ref{ass6:ROA}.} 
\label{fig:ass6ROA1}
\end{figure}

\begin{figure}[ht]
\includegraphics[width=5.2cm]{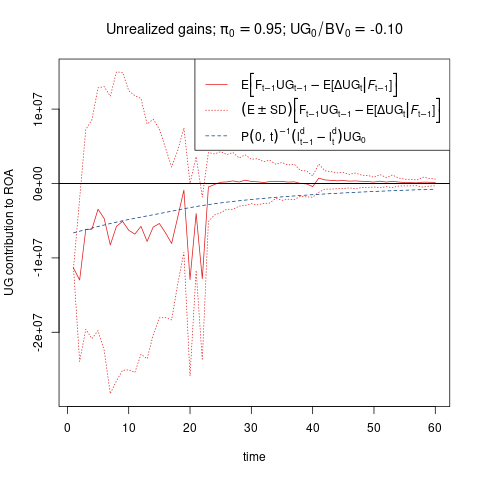}
\includegraphics[width=5.2cm]{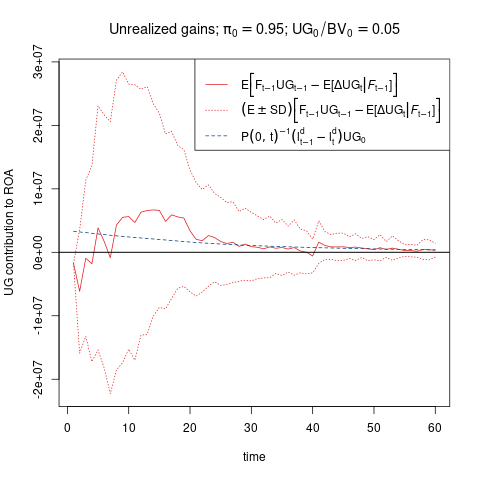}
\includegraphics[width=5.2cm]{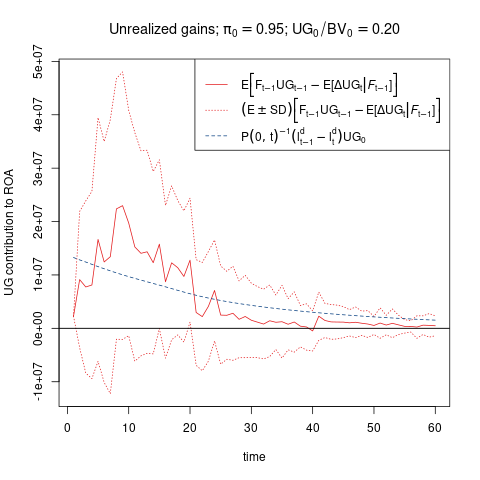}

\includegraphics[width=5.2cm]{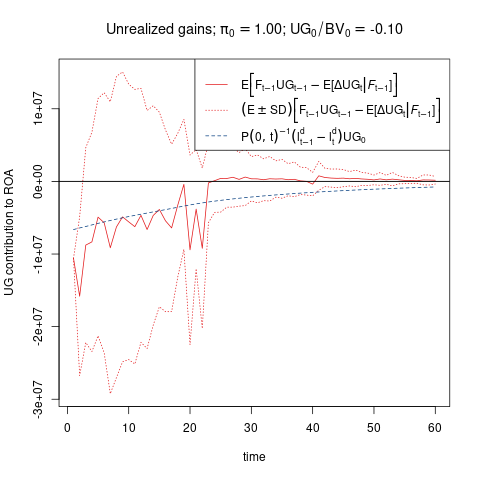}
\includegraphics[width=5.2cm]{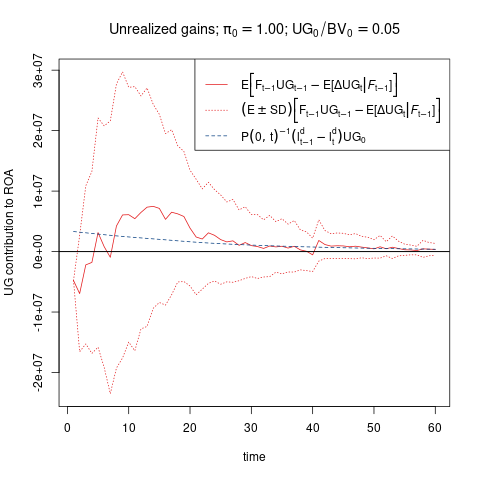}
\includegraphics[width=5.2cm]{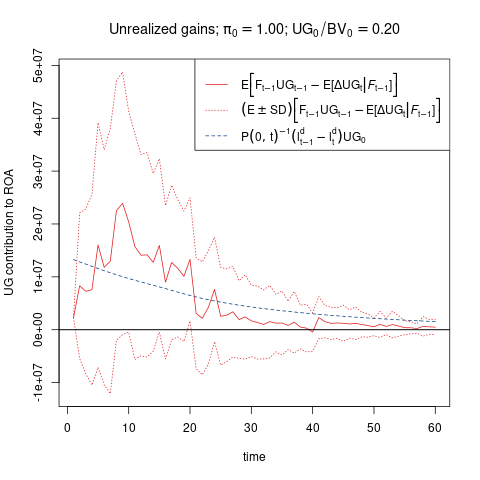}

\includegraphics[width=5.2cm]{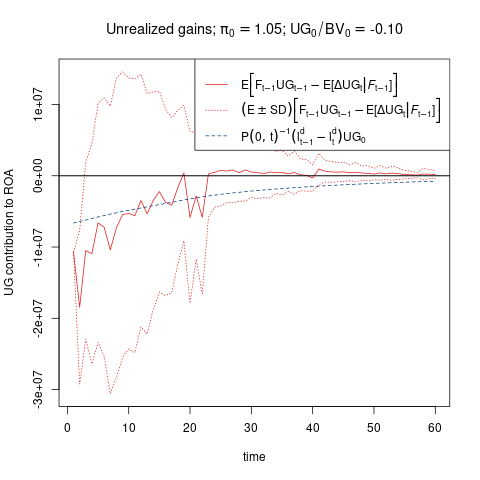}
\includegraphics[width=5.2cm]{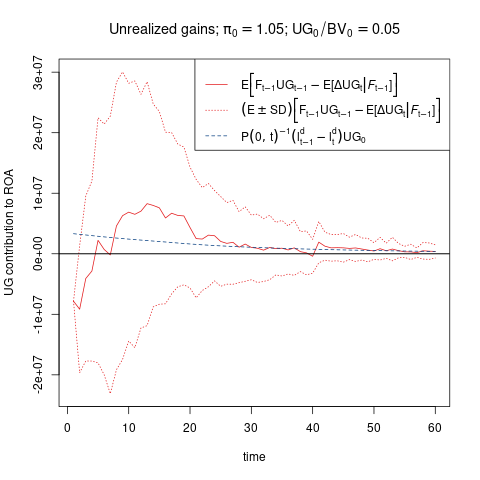}
\includegraphics[width=5.2cm]{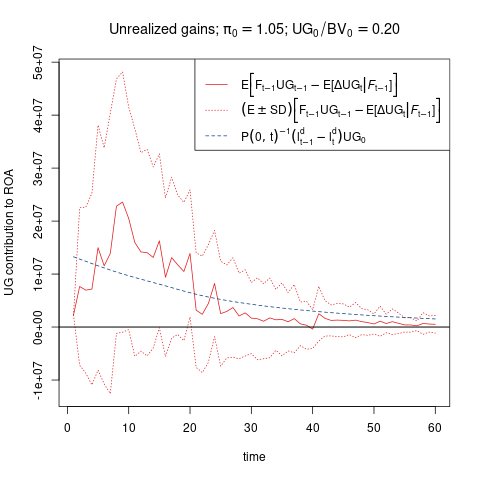}

\caption{Numerical evidence for the second part of Assumption~\ref{ass6:ROA}. The expected values and standard deviations are calculated by means of the numerical model described in Section~\ref{sec:numALM}, and the dashed line corresponds to the assumption.} 
\label{fig:ass6ROA2}
\end{figure}

%\begin{ass}\label{ass:cov}
%The following hold:
%\begin{enumerate}
%\item 
%$CV(F_{t-1}B_t^{-1})\cdot CV(DB_{t-1}+SF_{t-1})\le1$ for all $1\le t\le T$;
%\todo{Nur fuer $\widehat{III}_{ub}$ notwendig}
%\item 
%$0 \le Cor(F_t, DB_t+SF_t)$ for all $1\le t\le T$;
%\todo{brauch ich das uerberhaupt? -- Nein!}
%\item 
%$\sum_{t=1}^T Cov(B_t^{-1}F_{t-1}, SF_{t-1}) \ge 0$;
%\todo{Nur fuer $\widehat{III}_{lb}$ notwendig}
%\end{enumerate}
%\end{ass}

Invoking the above assumptions \ref{ass:runoff}, \ref{ass6:ROA},  \ref{ass:tech}, \ref{ass2:sigma} and \ref{ass3:SF}, we define
\begin{align}
\label{e:gshat}
    \widehat{gs}_t 
    :&= 
    F_{t-1} E[BV_{t-1}] 
    + P(0,t)^{-1}(l_{t-1}^d - l_t^d)UG_0
    - \rho_t V_{t-1} + \gamma_t LP_{t-1} \\
    \notag 
    &=
    \Big(
    F_{t-1} 
    + P(0,t)^{-1}\frac{l_{t-1}^d-l_t^d}{l_{t-1}^h}\frac{UG_0}{(1+\vartheta)LP_0}
    -\frac{(1-\sigma)(\rho_t-\gamma_t)}{1+\vartheta}
    \Big) (1+\vartheta)l_{t-1}^h LP_0 
\end{align}
to be used as a model for $gs_t$ in the estimation of $COG$. Due to its relevance we state this conclusion as an equation: 
\begin{equation}\label{e:ass8}
 gs_t = \widehat{gs}_t
\end{equation}
with the understanding that this equality is to be understood in the approximative sense. Since we aim to use $\widehat{gs}_t$ in order to calculate an upper bound, $\widehat{COG}$, for the cost of guarantee, $COG$,  equation~\eqref{e:ass8} is justified if 
\begin{equation}\label{e:ass8test}
 E\Big[\sum_{t=1}^T B_t^{-1} gs_t^- \Big] 
 \le 
 E\Big[\sum_{t=1}^T B_t^{-1} \widehat{gs}_t^- \Big] .
\end{equation}
Empirical evidence for assertions \eqref{e:ass8} and \eqref{e:ass8test} is contained in Figures~\ref{fig:ass8} and \ref{fig:ass8test}.

\begin{figure}[ht]
\includegraphics[width=5.2cm]{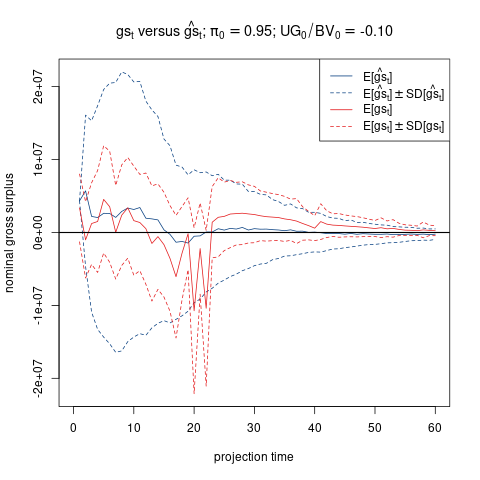}
\includegraphics[width=5.2cm]{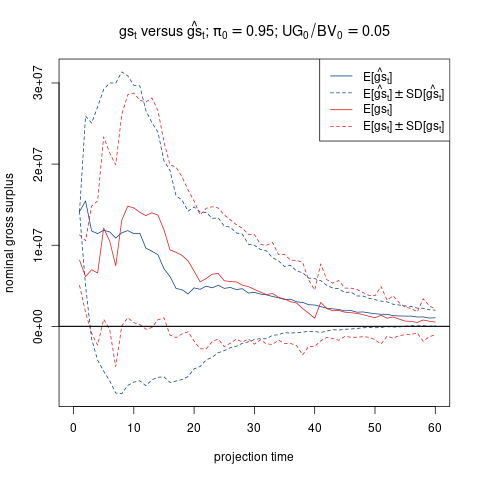}
\includegraphics[width=5.2cm]{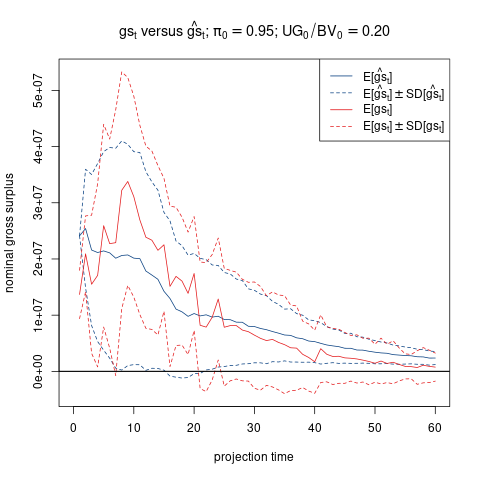}

\includegraphics[width=5.2cm]{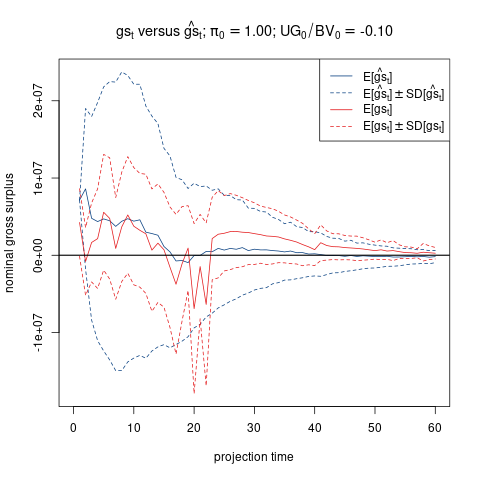}
\includegraphics[width=5.2cm]{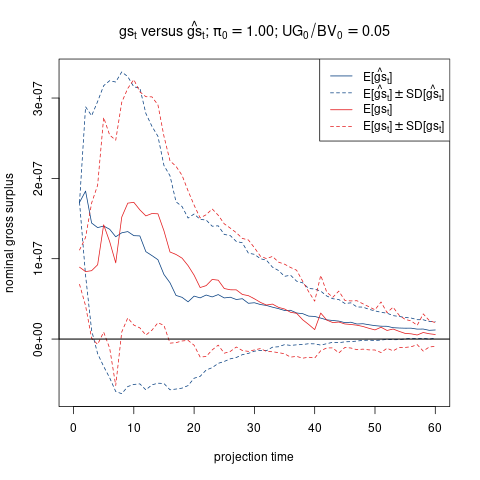}
\includegraphics[width=5.2cm]{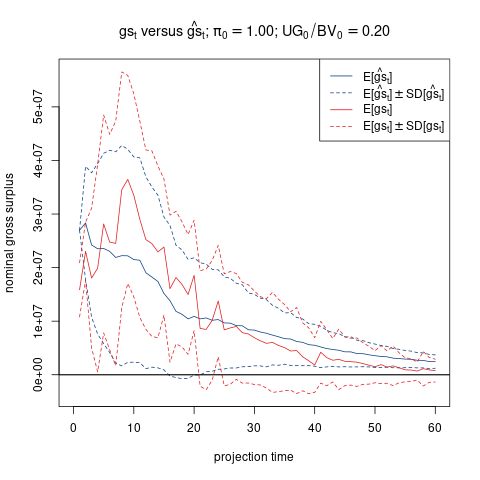}

\includegraphics[width=5.2cm]{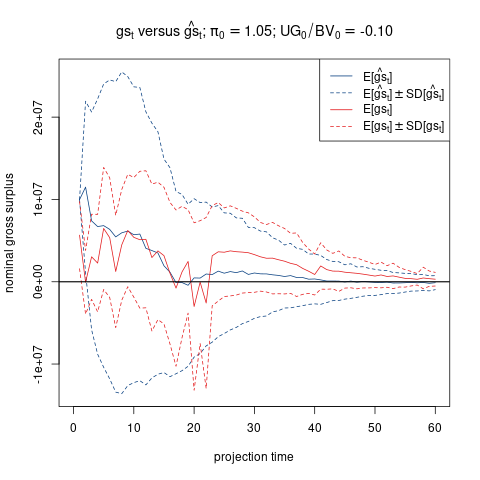}
\includegraphics[width=5.2cm]{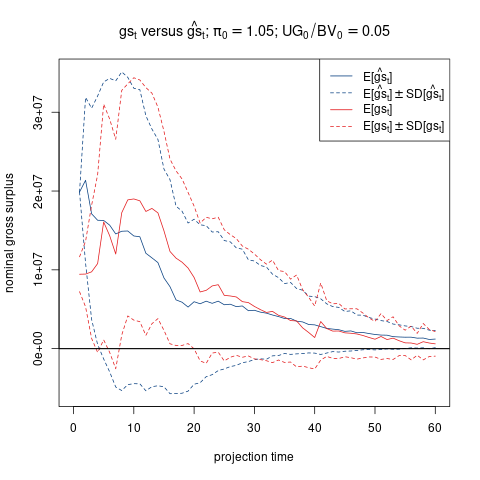}
\includegraphics[width=5.2cm]{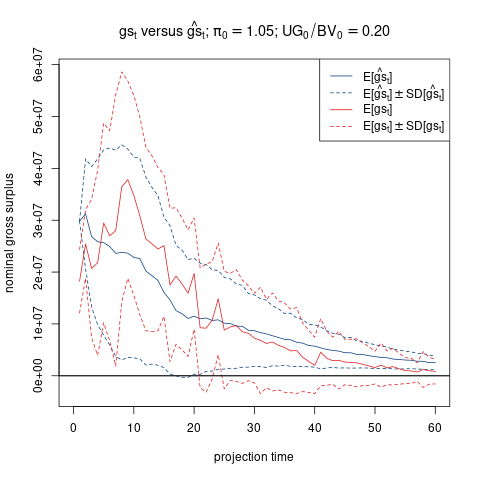}

\caption{Numerical -ex post- evidence for the validity of the approximating model \eqref{e:gshat}. Values are nominal, i.e.\ without discounting. It can be observed that the red lines, corresponding to expectation and standard deviation of the empirically modelled gross surplus (according to Section~\ref{sec:numALM}) have a tendency to remain above the $x$-axis. This behaviour is due to the management rules of Section~\ref{sec:MR} which try to avoid shareholder capital injections in order to reduce the cost of guarantee, $COG$, defined in \eqref{e:cog}.  }
\label{fig:ass8}
\end{figure}

\begin{figure}[ht]
\includegraphics[width=5.2cm]{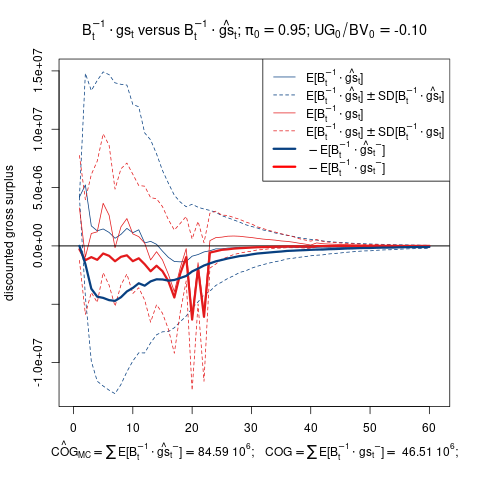}
\includegraphics[width=5.2cm]{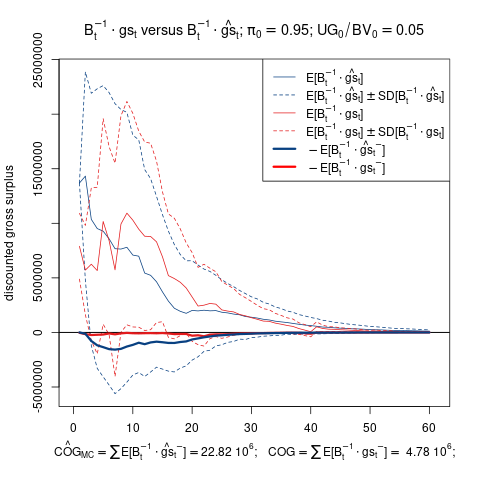}
\includegraphics[width=5.2cm]{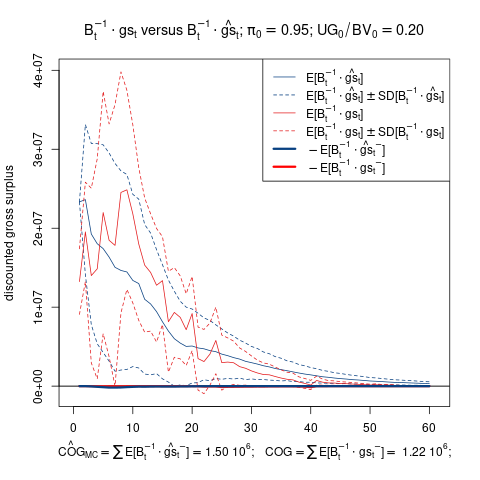}

\includegraphics[width=5.2cm]{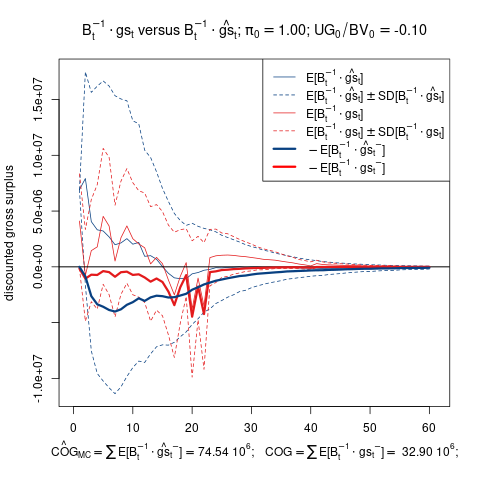}
\includegraphics[width=5.2cm]{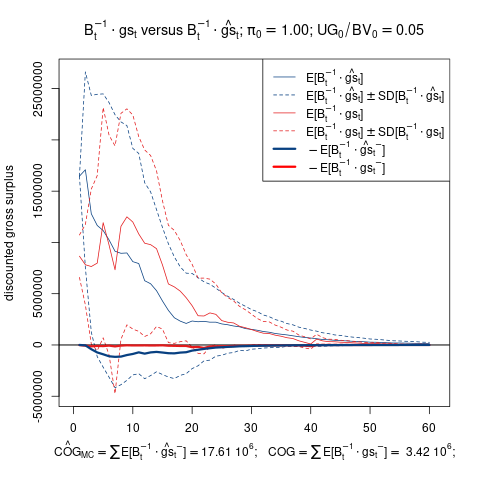}
\includegraphics[width=5.2cm]{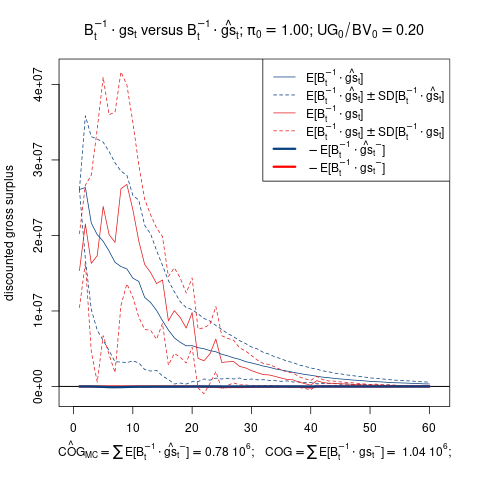}

\includegraphics[width=5.2cm]{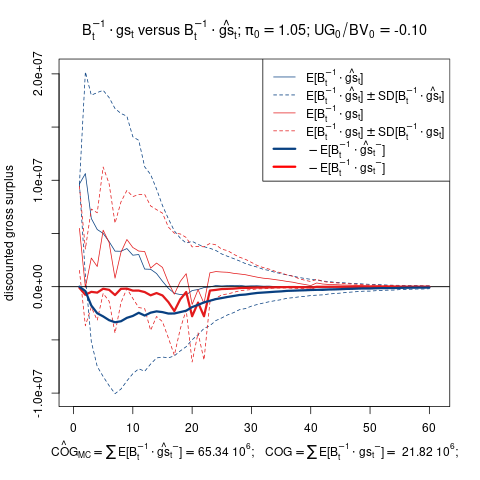}
\includegraphics[width=5.2cm]{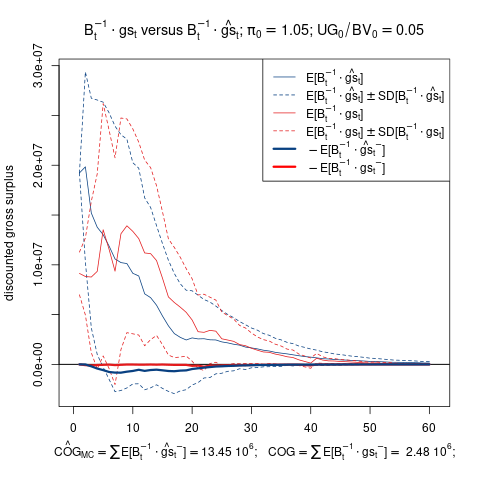}
\includegraphics[width=5.2cm]{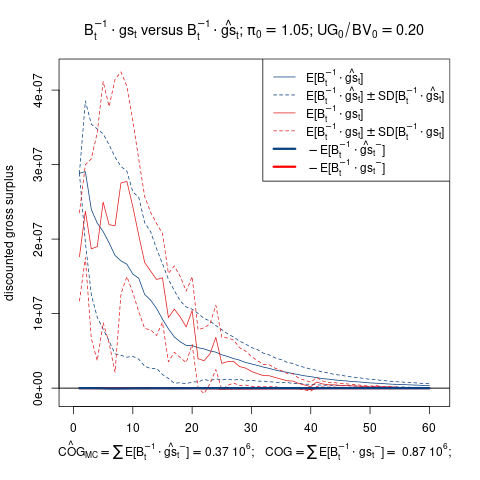}

\caption{Numerical -ex post- evidence for the validity of the approximating model \eqref{e:gshat}. Values are discounted along the appropriate interest rate scenarios. It can be observed that the red lines, corresponding to expectation and standard deviation of the empirically modelled gross surplus (according to Section~\ref{sec:numALM}) have a tendency to remain above the $x$-axis. This behaviour is due to the management rules of Section~\ref{sec:MR} which try to avoid shareholder capital injections in order to reduce the cost of guarantee, $COG$, defined in \eqref{e:cog}. The empirical values $COG$, calculated numerically (according to Section~\ref{sec:numALM}), and $\widehat{COG}_{MC}$, calculated also numerically (according to Section~\ref{sec:numALM}) but with respect to the simplified model \eqref{e:gshat} are displayed below the plots. 
} 
\label{fig:ass8test}
\end{figure}

\begin{remark}
The above assumptions are a streamlined version of those in \cite{GH22}, enhanced with numerical tests. Concretely, Assumption~\ref{ass:runoff} corresponds to \cite[Ass.~4.1 \& 4.2]{GH22}; Assumption~\ref{ass2:sigma} corresponds to \cite[Ass.~4.3]{GH22}; Assumption~\ref{ass3:SF} corresponds to \cite[Ass.~4.4]{GH22}; Assumption~\ref{ass:surr} corresponds to \cite[Ass.~4.5]{GH22}; 
Assumption~\ref{ass:cov} has similar elements as \cite[Ass.~4.10]{GH22}, but the assumption is not as strong (\cite[Ass.~4.10]{GH22} is not supported by our numerical results as the dotted line in Figure~\ref{fig:ass_cov} shows); 
Assumption~\ref{ass:tech} corresponds to \cite[Ass.~4.8]{GH22}; 
Assumption~\ref{ass6:ROA} corresponds to \cite[Ass.~4.9]{GH22}. 
The necessity of \cite[Ass.~4.6 \& 4.7]{GH22} is no longer given since we have simplified the calculation of the lower bound estimate.  
\end{remark}

\section{Estimation  of future discretionary benefits}\label{sec:est}
Based on the assumptions which are stated and discussed together with numerical evidence in the previous Section~\ref{sec:assump} we  shall now derive estimates, $\widehat{LB}$ and $\widehat{UB}$, for lower and upper bounds of the $FDB$. The merit of these bounds lies in the fact that they can be computed a priori without the need for Monte Carlo methods. Concretely, we shall estimate the terms $I$, $II$, $III$ and $COG$ from the representation~\eqref{e:fdb-rep}, and the corresponding estimators will be denoted by $\widehat{(\cdot)}$.
 
\subsection{Estimating $I$}
The portfolio is in run-off, and assumption~\ref{ass:runoff} implies correspondingly the trivial estimate
\begin{equation}
\label{e:Ihat}
    \widehat{I} = 0
\end{equation}
for $I$.

\subsection{Estimating $II$}

Assumptions~\ref{ass2:sigma}, \ref{ass:surr} and \ref{ass:cov} imply that an upper bound for $II$ can be estimated as
\begin{align}
\label{e:II_hat}
    \widehat{II} 
    := 
    (1-gph)\sum_{t=2}^T
    %%%\frac{5\% \cdot 10\%}{2} 
    P(0,t) \,
    \gamma_t^{+}\,
    \sigma^{DB}_t \, l_{t-1}^h \, LP_0.
\end{align}

\begin{comment}
%%%%CODE:
    ####################################
    ## Calculate IIhat 
    
    IIhat = 0
    for( t in c(2:Time)){
      IIhat = IIhat + P(FW_vec,0,t-1)*l_fct(t,h)*max(t/h,1)*sigma*gamma_vec[t] / 2
      ## division by 2 reflects abnehmenden RKabschlag: gamma ~ 5% 10% / 2
      ## wenn StornoWK = 5 % und RKAbschlag = 10 % 
      #if(IIhat < 0) browser()
    }
    IIhat = IIhat*(1-gph)*LP_0
\end{comment} 

\subsection{Estimating $III$} 
Invoking assumptions~\ref{ass2:sigma}, \ref{ass3:SF} and ~\ref{ass:cov} we obtain the estimate
\begin{align}
\notag 
 III 
 &\le 2(1-gph)\sum_{t=1}^T\Big(P(0,t-1) - P(0,t)\Big)E\Big[DB_{t-1}+SF_{t-1}\Big]  \\ 
 \notag 
 &\le 
 2(1-gph)\sum_{t=1}^T\Big(P(0,t-1) - P(0,t)\Big)\Big(\sigma^{DB}_t + \theta\Big)l_{t-1}^h LP_0 \\ 
 &=:
 \widehat{III}
 \label{e:IIIhat}
\end{align}
as an upper bound of $III$. 

\begin{comment}
and, similarly, 
\begin{align}
\notag 
 III 
 &\ge 
 (1-gph)\sum_{t=1}^T\Big(P(0,t-1) - P(0,t)\Big) \theta   l_{t-1}^h LP_0 \\ 
 &=:
 \widehat{III}_{lb} .
\end{align}
We thus define 
\begin{equation}
 \widehat{III} 
 := \frac{\widehat{III}_{lb}+\widehat{III}_{ub}}{2}
\end{equation}
as an estimator for $III$. 
\end{comment}

\subsection{Estimating $COG$}
Making use of assumptions \ref{ass:runoff}, \ref{ass6:ROA},  \ref{ass:tech}, \ref{ass2:sigma} and \ref{ass3:SF}, we model $gs_t$ by $\widehat{gs}_t$ where the latter is defined in \eqref{e:gshat}. 
Correspondingly the cost of guarantee, $COG$, defined in equation~\eqref{e:cog} is estimated as 
\begin{equation}\label{e:COGhat}
  \widehat{COG}
  = E\Big[\sum_{t=1}^T B_t^{-1} \widehat{gs}_t^-  \Big] 
  = \sum_{t=1}^T \mathcal{O}_t^- (1+\vartheta)l_{t-1}^h LP_0  
\end{equation}
and
\begin{equation}
    \label{e:O}
 \mathcal{O}_t^{-}
 := E\Big[B_t^{-1}\Big(
   F_{t-1} 
    + P(0,t)^{-1}\frac{l_{t-1}^d-l_t^d}{l_{t-1}^h}\frac{UG_0}{(1+\vartheta)LP_0}
    - \frac{(1-\sigma_t)(\rho_t-\gamma_t)}{1+\vartheta}
    %(\rho_s-\gamma_s)\frac{1-\sigma}{1+\vartheta}
 \Big)^{-}\Big]
\end{equation} \label{ref: O_s}
is the value at $0$ of the floorlet (i.e., negative caplet) with maturity $t-1$ and payment $\min(F_{t-1} - k_t, 0)$ at settlement date $t$,  where 
the strike is given by 
\begin{equation}
    \label{e:strike}
    k_t
    := 
    - P(0,t)^{-1}\frac{l_{t-1}^d-l_t^d}{l_{t-1}^h}\frac{UG_0}{(1+\vartheta)LP_0}
    + \frac{(1-\sigma_t)(\rho_t-\gamma_t)}{1+\vartheta}.
\end{equation} \label{ref: k_s}
%\[
% \min\Big(
%   F_{t-1} 
%    + P(0,t)\frac{l_{t-1}^d-l_t^d}{l_{t-1}^h}\frac{UG_0^-}%{(1+\vartheta)LP_0}
%    - \frac{(1-\sigma)(\rho_t-\gamma_t)}{1+\vartheta}
% , 0\Big)
%\]
In the normal model this value is given by the Black formula
 (\cite{Black76,BM06})
\begin{equation} 
\label{e:Black} 
    \mathcal{O}_t^{-} 
    = 
    P(0,t)\cdot\Big(
     \pm ( F_{t-1}^0 - k_t )\Phi( - \kappa_t) + \textup{IV}_t \sqrt{t}\phi( - \kappa_t)
     \Big)
\end{equation}
\begin{comment}%%%%%%%%%%
\todo{reference:\\
equ (23)ff in \href{https://mfe.baruch.cuny.edu/wp-content/uploads/2019/12/IRC\_Lecture5\_2019.pdf}{https://mfe.baruch.cuny.edu/wp-content/uploads/2019/12/IRC\_Lecture5\_2019.pdf}
}
\end{comment} 
where $\Phi$ and $\phi$ are the normal cumulative distribution and density functions, respectively.  
Further,
\[
  \kappa_t  
  := 
  \frac{ F_{t-1}^0 - k_t }
   { \textup{IV}_t\sqrt{t} } 
\]
where $F_{t-1}^0 = P(0,t-1) - P(0,t)$ is the forward rate prevailing at time $0$, 
and $\textup{IV}_t$\label{ref: IV_s} is the caplet implied volatility known from market data.

\subsection{Estimating $FDB$}
These estimates yield a lower bound, $\widehat{LB}$, and an upper bound, $\widehat{UB}$, for $FDB$: 
\begin{equation}
    \label{e:est-int}
    \widehat{LB} 
    \le FDB \le \widehat{UB}
\end{equation}
where 
\begin{align}
    \label{e:LB}
    \widehat{LB}
    &:= 
    FDB_0
    - \widehat{II}
    - \widehat{III}%_{ub} 
    \\
    \label{e:UB}
    \widehat{UB}
    &:=
   FDB_0 
   + gph\cdot\widehat{COG}.
  %  - \widehat{III}_{lb}. 
\end{align}
If the difference $\eps = (\widehat{UB}-\widehat{LB})/2$ is sufficiently small (e.g., in comparison to $MV_0$), then 
\begin{equation}\label{FDBhat}
\widehat{FDB} 
= \frac{\widehat{LB}+\widehat{UB}}{2}
= SF_0 + gph\Big(LP_0+UG_0-GB\Big) 
  + \frac{
    gph\cdot\widehat{COG}  
     - \widehat{II}
    - \widehat{III} 
     }{2} 
\end{equation}
may be used as a reasonable estimator for $FDB$ with maximal error given by $\eps$.

\subsection{Comparison to numerically calculated values}\label{sec:num_stu}
The purpose of this section is to apply the model of Section~\ref{sec:numALM} to a realistic life insurance portfolio, and to compare the results with the estimates involved in calculating \eqref{FDBhat}. The life insurance portfolio that we utilize to this end consists of anonymous, aggregated and randomized asset and liability data from multiple life insurance companies. Moreover, the portfolio is perturbed initially according to \eqref{e:UGstart} and \eqref{e:pi} such that we obtain nine different cases. These correspond to quite different relations between and assets and liabilities, and are therefore considered to be representative  for a wide variety of possible company data and economic environments. Nevertheless, the conclusions to be drawn from this sections cannot be definite but only those that follow from a specific numerical study. The results from this study are summarized in Table~\ref{tab:num_res}.

Given that the assumptions which go into the derivation of the estimation formula~\eqref{FDBhat} are verified numerically in Section~\ref{sec:assump} it is to be expected that the empirical and estimated values in Table~\ref{tab:num_res} are consistent. This is indeed the case, and it can be observed that the values agree quite well in most cases, particularly when viewed relative to the initial market value $MV_0 = LP_0+SF_0+UG_0$. 

% latex table generated in R 4.0.2 by xtable 1.8-4 package
% Thu Sep  7 20:36:30 2023
\begin{table}[ht] 
\centering
\begin{tabular}{rrrrrrrrrrr}
 %&  \\ 
  \hline\hline 
$\pi_0$ & 0.95 & 0.95 & 0.95 & 1.00 & 1.00 & 1.00 & 1.05 & 1.05 & 1.05 & \eqref{e:pi} \\ 
  $UG_0/BV_0$ & -0.10 & 0.05 & 0.20 & -0.10 & 0.05 & 0.20 & -0.10 & 0.05 & 0.20 & \eqref{e:UGstart}\\
  \hline 
  $LP_0$ & 1000.00 & 1000.00 & 1000.00 & 1000.00 & 1000.00 & 1000.00 & 1000.00 & 1000.00 & 1000.00 & Sec.~\ref{sec:li} \\ 
  $SF_0$ & 20.00 & 20.00 & 20.00 & 20.00 & 20.00 & 20.00 & 20.00 & 20.00 & 20.00 & Sec.~\ref{sec:li}\\ 
  $MV_0$ & 918.00 & 1071.00 & 1224.00 & 918.00 & 1071.00 & 1224.00 & 918.00 & 1071.00 & 1224.00 & Sec.~\ref{sec:li}\\ 
  $UG_0$ & -102.00 & 51.00 & 204.00 & -102.00 & 51.00 & 204.00 & -102.00 & 51.00 & 204.00 & Sec.~\ref{sec:li}\\ 
  $FDB_{0}$ & 9.72 & 133.58 & 257.43 & 29.69 & 153.54 & 277.40 & 49.66 & 173.51 & 297.36 & Thm.~\ref{thm:rep} \\ 
  $GB$ & 910.70 & 910.70 & 910.70 & 886.03 & 886.03 & 886.03 & 861.37 & 861.37 & 861.37 & \eqref{e:GB} \\ 
  \hline 
  $E[B_T^{-1}MV_T]$ & 0.10 & 0.20 & 0.29 & 0.11 & 0.22 & 0.31 & 0.13 & 0.24 & 0.33 & Sec.~\ref{sec:li}\\ 
  $TAX$ & 2.51 & 9.03 & 17.73 & 3.18 & 10.40 & 19.21 & 4.05 & 11.69 & 20.62 & Sec.~\ref{sec:li}\\ 
  $VIF$ & -39.46 & 20.59 & 48.61 & -23.96 & 25.81 & 52.93 & -10.43 & 30.38 & 57.06 & Sec.~\ref{sec:li}\\
  \hline 
  $SHG$ & 7.06 & 25.37 & 49.83 & 8.94 & 29.23 & 53.97 & 11.39 & 32.86 & 57.93 & Sec.~\ref{sec:li}\\ 
  $COG$ 
  & 46.51 & 4.78 & 1.22 
  & 32.90 & 3.42 & 1.04 
  & 21.82 & 2.48 & 0.87 & \eqref{e:cog} \\ 
  $\widehat{COG}_{MC}$ 
  & 84.59 & 22.82 & 1.50 
  & 74.54 & 17.61 & 0.78
  & 65.34 & 13.45 & 0.37 &  Fig.~\ref{fig:ass8test}
  \\ 
  $\widehat{COG}$ & 74.62 & 23.57 & 5.87 & 64.61 & 19.33 & 4.72 & 55.84 & 15.95 & 3.85 & \eqref{e:COGhat} \\ 
  \hline 
  $I$ & 0.09 & 0.19 & 0.27 & 0.10 & 0.20 & 0.29 & 0.12 & 0.22 & 0.30 & Thm.~\ref{thm:rep} \\ 
  $\widehat{I}$ & 0.00 & 0.00 & 0.00 & 0.00 & 0.00 & 0.00 & 0.00 & 0.00 & 0.00 & \eqref{e:Ihat} \\ 
  $II$ & 0.04 & 0.09 & 0.16 & 0.04 & 0.10 & 0.17 & 0.05 & 0.11 & 0.18 & Thm.~\ref{thm:rep} \\ 
  $\widehat{II}$ & 0.01 & 0.01 & 0.01 & 0.26 & 0.77 & 1.39 & 0.56 & 1.92 & 3.31  & \eqref{e:II_hat} \\ 
  $III$ & 3.30 & 6.77 & 11.44 & 3.69 & 7.53 & 12.22 & 4.15 & 8.21 & 12.96 & Thm.~\ref{thm:rep} \\ 
  $\widehat{III}$ & 6.33 & 12.88 & 23.33 & 6.40 & 14.94 & 25.82 & 6.48 & 17.16 & 28.46  & \eqref{e:IIIhat} \\ 
  \hline 
  $LT_{CF}$ & 1.04 & 0.44 & 0.63 & 0.81 & 0.32 & -0.16 & -0.62 & 1.82 & 0.07 \\ 
  $LT_{rep}$ & 0.20 & 0.08 & 0.12 & 0.15 & 0.06 & -0.03 & -0.12 & 0.35 & 0.01 \\ 
  $FDB_{CF}^{mce}$ & 0.30 & 0.73 & 0.91 & 0.38 & 0.76 & 0.93 & 0.46 & 0.77 & 0.94 \\ 
  $FDB_{rep}^{mce}$ & 0.38 & 0.12 & 0.10 & 0.33 & 0.11 & 0.10 & 0.27 & 0.10 & 0.10 
  \\ 
  \vspace{-2ex}
  \\
  \hline 
  \multicolumn{1}{|c}{$FDB_{CF}$} & 43.11 & 130.04 & 246.04 & 51.83 & 148.22 & 265.69 & 63.50 & 165.50 &  {284.56} 
  &  \multicolumn{1}{c|}{\eqref{e:FDB}} \\ 
  \multicolumn{1}{|c}{$FDB_{rep}$} & 43.95 & 130.40 & 246.55 & 52.48 & 148.48 & 265.56 & 62.99 & 166.98 &  {284.62}
  &  \multicolumn{1}{c|}{\eqref{e:fdb-rep}} \\ 
  \multicolumn{1}{|c}{$\widehat{FDB}$} & 36.76 & 136.67 & 248.13 & 52.51 & 153.51 & 265.70 & 68.74 & 170.43 & {283.03} 
  &  \multicolumn{1}{c|}{\eqref{FDBhat}} \\ 
  \hline 
  \\ 
  \vspace{-4.5ex}
  \\
  $\widehat{LB}$ & 3.39 & 120.68 & 234.09 & 23.03 & 137.83 & 250.19 & 42.62 & 154.43 & 265.59 & \eqref{e:LB} \\ 
  $\widehat{UB}$ & 70.13 & 152.65 & 262.18 & 81.99 & 169.19 & 281.22 & 94.86 & 186.42 & 300.48 & \eqref{e:UB} \\ 
  %% delta abs values: 
  %$\delta$ & -7.19 & 6.27 & 1.59 & 0.03 & 5.03 & 0.14 & 5.74 & 3.45 & -1.59 
  %% delta rel values 
  $\delta/MV_0$ &
  -0.73\,\%& 0.60\,\% & 0.13\,\% & 0.14\,\% & 0.48\,\% & 0.02\,\% & 0.56\,\% & 0.32\,\% & -0.13\,\%
  \\ 
  %% eps abs values 
  %$\eps$ & 33.37 & 15.98 & 14.05 & 29.48 & 15.68 & 15.51 & 26.12 & 15.99 & 17.44  
  %% eps rel values
  $\eps/MV_0$ &
  3.64\,\% & 1.49\,\% & 1.15\,\% & 3.21\,\% & 1.46\,\% & 1.27\,\% & 2.85\,\% & 1.49\,\% & 1.42\,\%
  \\ 
   \hline\hline 
\end{tabular}
\vspace{2ex} 
\caption{Values in $10^6$ units of currency. $FDB_{CF}$ denotes the numerically calculated quantity corresponding to the cash-flow based definition \eqref{FDBhat}, and $FDB_{rep}$ denotes the numerically calculated quantity corresponding to the representation \eqref{e:fdb-rep}.
The leakage test quantities, $LT_{CF}$ and $LT_{rep}$, are obtained by subtracting the numerically calculated right hand side from the left hand side of equation~\eqref{e:noL} where $FDB$ is given by $FDB_{CF}$ and $FDB_{rep}$, respectively. Similarly, $FDB_{CF}^{mce}$ and $FDB_{rep}^{mce}$ are the Monte Carlo errors corresponding to the numerically calculated values $FDB_{CF}$ and $FDB_{rep}$, respectively. Finally, $\delta = \widehat{FDB} - FDB_{rep}$ and $\eps = (\widehat{UB}-\widehat{LB})/2$ are expressed in percent relative to $MV_0$, and the estimation is considered successful if $|\delta|<\eps$, which holds in all cases. The last column provides references to definitions for most quantities. 
} 
 \label{tab:num_res}
\end{table}

Table~\ref{tab:num_res} is compartmentalized as follows. The top two lines consist of the parameters $\pi_0$ and $UG_0/BV_0$ which serve to vary the relationship between assets and liabilities. The quantities $LP_0$, $SF_0$, $MV_0$ and $UG_0$ are starting values, and $FDB_0$ and $GB$ are calculated deterministically. 

The value $E[B_T^{-1}MV_T]$ corresponds to the expected discounted final market value of the asset portfolio. Since the company is modeled with respect to run-off assumptions, this quantity is expected to be small, as is verified here. In fact, smallness of this term is a necessary condition: if it is not verified then the projection horizon has to be increased. The quantity $TAX$ represents the value of corporate tax payments. The term $VIF$ corresponds to the value of in-force business. This is the value of the life insurance portfolio from the shareholder's perspective. The \emph{market consistent embedded value} that is often disclosed in financial reports is defined accordingly as $MCEV = \revise{FC_0} + VIF$\revise{, where $FC_0$ is free capital at time 0.}. It is the shareholder's objective to maximize $VIF$, cf.\ the control problem in Section~\ref{sec:con}. It can be observed in Table~\ref{tab:num_res} that this value is subject to quite a significant variation depending on the initial conditions. 

The value of in-force business is defined in Section~\ref{sec:li} as the difference of shareholder gains and cost of guarantee, $VIF = SHG - COG$. In Table~\ref{tab:num_res}, $SHG$ and $COG$ are calculated numerically by means of the model set-up in Section~\ref{sec:numALM}. The values $\widehat{COG}_{MC}$ and $\widehat{COG}$ correspond to the Monte-Carlo valuation of \eqref{e:COGhat} and the valuation of \eqref{e:COGhat} by means of the Black formula~\eqref{e:Black}, respectively. If the model were calibrated perfectly we would have $\widehat{COG}_{MC} = \widehat{COG}$.  For our purpose of estimating the upper bound it is relevant that $\widehat{COG} \ge COG$. This holds for all cases. In fact, we observe an over-estimation. This over-estimation is related to Figure~\ref{fig:ass8test} where it can be seen graphically that $\widehat{COG}_{MC}$, and therefore also $\widehat{COG}$, does not take the $COG$-reducing effect of management rules into account. When $COG$ and $\widehat{COG}_{MC}$ are very low, which occurs in every third column corresponding to a high fraction $UG_0/BV_0$, we observe that $\widehat{COG}_{MC}$ is much smaller than $\widehat{COG}$ indicating imperfect model calibration. In these cases it may also happen that $\widehat{COG}_{MC} < COG$, however since this occurs only for negligible values of $COG$ this observation does not impede the applicability of the upper bound formula~\eqref{e:UB}. 

The terms $I$, $II$ and $III$ in Table~\ref{tab:num_res} are calculated numerically according to their defining formulas in Theorem~\ref{thm:rep}. It can be observed that the estimates $\widehat{I} = 0$ and $\widehat{II}$ are quite sharp, and that $\widehat{II}$ might, at least according to this study, equally well have been set to $0$. Indeed, $\widehat{II}$ corresponds to shareholder gains on surrender penalties stemming from future profit declarations. This involves surrender probabilities and penalty factors which typically decline to $0$ towards the maturity of a contract. From these considerations it can already be deduced that $\widehat{II}$ should be expected to be quite small in comparison to $FDB$. The third estimate, $\widehat{III}$, shows an over-estimation. This can be related to Assumption~\ref{fig:ass_cov} where the relative covariance is estimated by $1$ while Figure~\ref{fig:ass_cov} shows that this covariance might have been estimated by $0$. If this were the employed estimate then the factor $2$ in \eqref{e:IIIhat} would be replaced by $1$, and $\widehat{III}$ and $III$ in Table~\ref{tab:num_res} would agree quite well. However, we do not strive to have the sharpest estimates $\widehat{LB}$ and $\widehat{UB}$, and therefore prefer these more generous -but also more stable- assumptions. 

The leakage and Monte Carlo error results in Table~\ref{tab:num_res} show the usefulness of representation~\eqref{e:fdb-rep} in numerical calculations. Both are much smaller when the representation formula is used to compute the future discretionary benefits. This is not surprising since $FDB_0$ depends only on the guaranteed benefits but is independent from stochastic scenarios, and this already determines $FDB_{rep}$ to a large extent -- as can be verified by reading off the other terms from Table~\ref{tab:num_res}. 

The boxed results containing $FDB_{CF}$, $FDB_{rep}$ and $\widehat{FDB}$ contain the core of Table~\ref{tab:num_res}. First of all, notice that  $FDB_{CF}$ and $FDB_{rep}$ coincide up to their respective Monte Carlo errors, as it should be according to Theorem~\ref{thm:rep}. Moreover, the difference $\delta = \widehat{FDB} - FDB_{rep}$ is quite small compared to $MV_0$, and satisfies $|\delta|<\eps$ in all cases. Notice, finally, that $FDB_0$ while always in the interval from $\widehat{LB}$ to $\widehat{UB}$ is not generally a good estimator for $FDB$ because it would neglect the effect of the guarantee cost $COG$.

%Arrays: Liabilities, Assets, Cash flows....
%\subsection{Reduced Monte Carlo error}
%\subsection{Impact of management rules}

%\subsection{Deflator test}

%\subsection{Leakage test}

%\subsection{Book value test}

%\subsection{Representation vs.\ cash flow calculation}

%\subsection{Monte Carlo error}

%\subsection{Strategic asset allocation}

%\subsection{Analytical estimation}

\section{Application to publicly available reporting data} 

The purpose of this section is to apply the estimation formulae~\eqref{e:LB} and \eqref{e:UB} for lower and upper bound to publicly available data from a real life insurer (Allianz Lebensversicherung~AG in Germany), and to compare the results to the  value, $FDB_{CO}$,  that has been derived by the company. In Section~\ref{sec:All_data} we collect the necessary data, the estimates and a discussion are then provided in Section~\ref{sec:All_est}. These estimations are carried out for the accounting years 2017-\revise{2023}, and are therefore subject to quite different circumstances: low interest rate environment and positive unrealized gains for 2017-\revise{2021}, and medium-high yield curve in 2022 accompanied by negative unrealized gains.

\subsection{Allianz~Lebensversicherungs-AG: publicly reported values} \label{sec:All_data}
Formulae~\eqref{e:LB} and \eqref{e:UB} hinge on the knowledge of specific balance sheet items. These items are contained in publicly available data, and provided for further reference in Table~\ref{table:values}.  The respective sources are listed in Table~\ref{tbl:source}.

\begin{table}[ht]
\centering
\begin{tabular}{lrrrrrrr}
Quantity                      
& \multicolumn{1}{l}{2017} 
& \multicolumn{1}{c}{2018} 
& \multicolumn{1}{c}{2019}  
& \multicolumn{1}{c}{2020}
& \multicolumn{1}{c}{2021}
& \multicolumn{1}{c}{2022}
& \multicolumn{1}{c}{\revise{2023}}\\ 
\hline
$L_0 = LP_0+SF_0$                     
& 189.8 & 201.2 & 219.6 & 235.1  & 246.1 & 247.6 
& 248.3
\\
$UG_0$ 
& 41.4 & 32.8 & 54.0 & 66.2 & 55.1 & -16.3 
& -9.4
\\
$SF_0$ 
& 10.5 & 11.0 & 11.5 & 12.3 & 12.5 & 11.6 
& 11.5
\\
Solvency~II value of~$SF_0$ 
& 10.4 & 10.5 & 11.3 & 12 & 12.2 & 9.0 
& 10.9
\\
$GB$ 
& 154.1 & 158.8 & 195.2 & 223.9 & 228.4 & 163.6 
& 174.2
\\
$FDB_{CO}$ 
& 48.6 & 46.2 & 47.4 & 44.7 & 37.6 &39.5 
& 36.9
\\
$GC$ 
&3.3 & 3.2 & 4.2  & 3.7 & 3.5 & 3.8 
& 4.3
\\
$IRR$ 
& 7.9 & 12.5 & 14.8 & 17.4 & 19.6 & 19 
& 18.3
\\
\hline
\end{tabular}
\vspace{2ex} 
\caption{Allianz Lebensversicherungs-AG: public data for 2017-\revise{2023}, values are in billion euros. Sources are contained in Table~\ref{tbl:source}.}
\label{table:values}
\end{table}

The local GAAP value, $L_0$, of life insurance with profit participation is adjusted due to a necessary regrouping of business. This step is explained respectively on the first of the two cited page numbers of the SCR report \cite{SFCR} for each accounting year. 
The value of $UG_0$ is scaled to $L_0$, which is in line with the general assumption \eqref{e:BV = LP + SF}. The reason behind this scaling is that according to \cite[§~3]{MindZV} only the fraction of the capital gains, corresponding to the assets scaled to cover the average value of liabilities in the accounting year under consideration, contribute to the gross surplus. Moreover, the value $GB$ that is used to calculate $FDB_0$ is augmented by the going concern reserve $GC$. This is because this reserve is a cash-flow that leaves the model (as explained in the quoted source in Table~\ref{tbl:source}) but this cash-flow is not accounted for in our no-leakage principle~\ref{e:noL}, and we do not know to which cash-flow in the comprehensive list recorded in the delegated act \cite[Article~28]{L2} this going concern reserve should be attributed to.

\begin{table}[ht]
\begin{minipage}{\textwidth}
\tiny{
\centering
\begin{tabular}{lrrrrrrr}
Quantity                
& Source 2017   & Source 2018  & Source 2019  
& Source 2020 & Source 2021 & Source 2022 
& \revise{Source 2023}
\\ 
\hline
$L_0$\footnote{Versicherung mit Überschussbeteiligung} 
& \cite[p.~46,~52]{SFCR} & \cite[p.~42,~46]{SFCR} 
&\cite[p.~42,~46]{SFCR} & \cite[p.~50,~56]{SFCR}
& \cite[p.~49,~55]{SFCR} & \cite[p.~55,~55]{SFCR} 
& \cite[p.~54,~54]{SFCR} 
\\
$UG_0$\footnote{Summe der in die \"Uberschussbeteiligung   einzubeziehenden Kapitalanlagen}                      
 &\cite[p.~46]{GB} &\cite[p.~42]{GB} & \cite[p.~46]{GB}  &\cite[p.~49]{GB} &\cite[p.~50]{GB} & \cite[p.~50]{GB} 
 & \cite[p.~51]{GB} 
 \\
$SF_0$\footnote{Rückstellung für Beitragsrückerstattung abzüglich
 festgelegte, aber noch nicht zugeteilte Teile} 
 &\cite[p.~55]{GB} & \cite[p.~51]{GB} & \cite[p.~55]{GB} 
 &\cite[p.~30]{GB} & \cite[p.~60]{GB} & \cite[p.~59]{GB}
 & \cite[p.~60]{GB}\\

SII value of~$SF_0$\footnote{Überschussfonds} 
& \cite[p.~52]{SFCR} & \cite[p.~46]{SFCR} & \cite[p.~46]{SFCR} 
& \cite[p.~56]{SFCR} & \cite[p.~55]{SFCR} & \cite[p.~55]{SFCR} 
& \cite[p.~54]{SFCR} 
\\
$GB$\footnote{Bester Schätzwert: Wert für garantierte Leistungen} 
& \cite[p.~46]{SFCR} & \cite[p.~42]{SFCR}  & \cite[p.~42]{SFCR} 
& \cite[p.~50]{SFCR} & \cite[p.~49]{SFCR} & \cite[p.~49]{SFCR}
& \cite[p.~48]{SFCR}
\\
$FDB_{CO}$\footnote{Bester Schätzwert: zukünftige Überschussbeteiligung} 
& \cite[p.~46]{SFCR} &  \cite[p.~42]{SFCR} &  \cite[p.~42]{SFCR}      & \cite[p.~50]{SFCR}&  \cite[p.~49]{SFCR} 
&\cite[p.~49]{SFCR} & \cite[p.~48]{SFCR}
\\
$GC$\footnote{Going concern reserve} 
& \cite[p.~52]{SFCR} & \cite[p.~46]{SFCR} & \cite[p.~52]{SFCR}    & \cite[p.~49]{SFCR}  & \cite[p.~50]{SFCR} & \cite[p.~52]{SFCR}
 & \cite[p.~54]{SFCR}
\\
$IRR$\footnote{Interest rate reserve (Zinszusatzrückstellung)}
& \cite[p.~51]{SFCR} &\cite[p.~50]{SFCR} &\cite[p.~51]{SFCR} 
& \cite[p.~51]{SFCR} & \cite[p.~50]{SFCR} &   \cite[p.~59]{GB} 
& \cite[p.~60]{GB}
\\
\hline
\end{tabular}
\vspace{2ex} 
\caption{Allianz Lebensversicherungs-AG. Publicly available sources for the data in Table~\ref{table:values}. The source is given as a web-page where the Financial Statement~\cite{GB} and SFC Report~\cite{SFCR} can be accessed for the indicated years, and the page numbers refer to reports of the corresponding years.}
\label{tbl:source}
}
\end{minipage}
\end{table}

The relevant risk free rate providing the deterministic discount factors $P(0,t)$ and the initial forwards $F_t^0$ is published by EIOPA at \cite{RFR}. 

All other data, for instance technical interest rate and technical gains rate, are as in \cite[Section~6]{GH22}. In particular, the policyholder fraction of the positive gross surplus is $gph = 75.5\,\%$.

\subsection{Estimation}\label{sec:All_est}
The estimation formulae~\eqref{e:LB} and \eqref{e:UB} have been implemented in $R$, and evaluated on the data of Section~\ref{sec:All_data}. The results of this implementation are displayed in Table~\ref{tab:0}.\footnote{The $R$-code, together with the data, can be provided in executable form upon request to the authors to reproduce Table~\ref{tab:0}.}

The estimations in Table~\ref{tab:0} are subject to considerable uncertainty. In \cite{GH22} we have studied sensitivities of the formulas~\eqref{e:LB} and \eqref{e:UB} with respect to parameter choices concerning, e.g., technical interest rate, technical gains, or market implied volatility. It was observed that these sensitivities impact the estimation only marginally. This is due to the fact that $\widehat{FDB}$ is determined to a large extent already by $FDB_0$. However, $FDB_0$ also carries estimation uncertainty, although this may not be obvious at first sight. Indeed, in principle all the data needed to calculate $FDB_0$ are contained in Table~\ref{table:values}, apart from $gph$. In theory these objects are all clearly defined, but in practice we had to: adjust $L_0$ to reflect only life insurance with profit participation, rescale $UG_0$ to $L_0$, increase $GB$ by $GC$, and deduct $SF_0$ from $\widehat{FDB}$. These adjustments were in order so that all items can be interpreted correctly. However, due to incomplete information we cannot know if these were all the necessary adjustments. For example, it may be possible that the regrouping that leads to a rescaling of $L_0$ should also have induced a rescaling of $SF_0$, followed by an appropriate scaling of $UG_0$. The point here is that the calculation of $FDB_0$ is only seemingly trivial but involves in practice a lot of information about the various positions.  

\begin{table}[H]\phantom{X}
%\\
%\hline\hline
\begin{align*}
&
\begin{matrix}
  %& LP\_0 & SF\_0 & UG\_0 & GB 
  & 
  FDB_{CO}  & \widehat{FDB} & \widehat{LB} & \widehat{UB} & \eps & \delta & FDB_0 & \widehat{II} & \widehat{III} & \widehat{COG} \\ 
  \\
 2017 
 %& 179.40 & 10.40 & 41.40 & 157.37 
 & 48.60 & 47.35 & 42.99 & 51.71 & 4.36 & -1.25 & 47.57 & 1.08 & 3.51 & 5.51 \\ 
  2018 
  %& 190.20 & 11.00 & 32.80 & 161.99 
  & 46.20 & 45.82 & 40.46 & 51.17 & 5.35 & -0.38 & 45.76 & 1.17 & 4.13 & 7.21 \\ 
  2019 
  %& 208.10 & 11.50 & 54.00 & 199.43 
  & 47.40 & 47.92 & 42.41 & 53.44 & 5.52 & 0.52 & 47.01 & 1.57 & 3.03 & 8.58 \\ 
  2020 
  %& 222.87 & 12.26 & 66.24 & 227.56 
  & 44.74 & 48.44 & 41.96 & 54.91 & 6.47 & 3.70 & 46.16 & 1.88 & 2.32 & 11.66 \\ 
  %\revise{ALT 2021} 
  %%& 233.59 & 12.50 & 55.08 & 231.90 
  %& 37.61 & 45.78 & 37.98 & 53.58 & 7.80 & 8.17 & 42.58 & 2.01 & 2.59 & 14.67 \\
%% NEU Allianz2021 
  \revise{2021} 
  %& 233.59 & 12.50 & 55.08 & 231.90 
  & 37.61 &   43.69 & 37.43 & 49.95 & 6.26 & 6.08 & 42.58 & 1.89 &  3.25 & 9.83  
  \\ 
  %\revise{ALT 2022} 
  %%& 235.89 & 11.62 & -16.32 & 167.42 
  %& 39.50 & 51.07 & 34.06 & 68.08 & 17.01 & 11.57 & 39.12 & 1.88 & 3.17 & 38.62 \\ 
  \revise{2022} 
  %& 235.89 & 11.62 & -16.32 & 167.42 
  & 39.50 &    37.25 & 30.97 & 43.53 & 6.28 & -2.25 & 39.12 & 1.34   & 6.81 & 5.88 \\ 
  \revise{2023} 
  %& 236.84 & 11.48 & -9.45 & 178.58 
  & 36.83 &  35.29 & 28.91 & 41.67 & 6.38 & -1.54 & 36.61 & 1.39 & 6.31 & 6.75 \\ 
  \\
  \hline
  \\
%   \end{matrix}
%   \\
%&
%\begin{matrix}
  %%%& LP\_0 & SF\_0 & UG\_0 & GB 
%  & FDB_{CO} & \widehat{FDB} & \widehat{LB} & \widehat{UB} & \eps & \delta & FDB_0 & \widehat{II} & \widehat{III} & \widehat{COG} \\ 
 2017 
 %& 77.60 & 4.50 & 17.91 & 68.07 
 & 21.02 & 20.48 & 18.59 & 22.37 & 1.89 & -0.54 & 20.58 & 0.47 &  1.52 & 2.38 \\ 
  2018 
  %& 81.28 & 4.70 & 14.02 & 69.22 
  & 19.74 & 19.58 & 17.29 & 21.87 & 2.29 & -0.16 & 19.56 & 0.50 &  1.76 & 3.08 \\ 
  2019 
  %& 76.06 & 4.20 & 19.74 & 72.89 
  & 17.32 & 17.52 & 15.50 & 19.53 & 2.02 & 0.19 & 17.18 & 0.57 &  1.11 & 3.14 \\ 
  2020 
  %& 73.95 & 4.07 & 21.98 & 75.51 
  & 14.85 & 16.07 & 13.92 & 18.22 & 2.15 & 1.23 & 15.32 & 0.62 &   0.77 & 3.87 \\ 
  %\revise{ALT2021} 
  %%& 77.56 & 4.15 & 18.29 & 77.00 
  %& 12.49 & 15.20 & 12.61 & 17.79 & 2.59 & 2.71 & 14.14 & 0.67 &  0.86 & 4.87 \\ 
  %\revise{ALT2022} 
  %%& 102.03 & 5.03 & -7.06 & 72.42 
  %& 17.08 & 22.09 & 14.73 & 29.45 & 7.36 & 5.01 & 16.92 & 0.81   & 1.37 & 16.71 \\ 
  \revise{2021} 
  %& 77.56 & 4.15 & 18.29 & 77.00 
  & 12.49 
  & 14.51 & 12.43 & 16.59 & 2.08 & 2.02 & 14.14 & 0.63 &  1.08 & 3.26 \\ 
 \revise{2022} 
 %& 102.03 & 5.03 & -7.06 & 72.42 
 & 17.08 &   16.11 & 13.40 & 18.83 & 2.72 & -0.97 & 16.92 & 0.58 &  2.95 & 2.54 \\ 
  \revise{2023} 
  %& 99.15 & 4.81 & -3.95 & 74.76 
  & 15.42 &   14.77 & 12.10 & 17.44 & 2.67 & -0.64 & 15.33 & 0.58 & 2.64 & 2.82 \\ 
   \end{matrix}
\end{align*}
\vspace{2ex}
\caption{Allianz Lebensversicherung~AG vs.\ estimated values. Numbers in the top are in billion Euro, numbers in the bottom table are in percent of $MV_0$. $FDB_{CO}$ as in Table~\ref{table:values}. Quantities with a hat are calculated according to Section~\ref{sec:est}. $\delta = \widehat{FDB} - FDB_{CO}$ and $\eps = (\widehat{UB}-\widehat{LB})/2$. \revise{The estimation is successful, i.e.\ $|\delta|<\eps$, in all cases.}
%, except $2021$.
} 
\label{tab:0} 
\end{table}

\revise{It can be seen that the estimations in Table~\ref{tab:0} are successful, i.e.\ $|\delta|<\eps$, in all years. 
In this regard we note that the economic conditions in $2022$ are characterized by a steep rise in the term structure in comparison to the previous years. This leads to a negative value of unrealized gains as recorded in Table~\ref{table:values}. Nevertheless the estimations continue to hold also under the new economic circumstances in $2022$ and $2023$. Moreover, the estimation intervals are of comparable magnitude as can be observed from the relative values of $\eps$ in Table~\ref{tab:0}. 
}

\section{Conclusions}
In the present paper we have concerned ourselves with three strands of thought all connected to the market consistent valuation of realistic life insurance portfolios: (1) interest rate scenario  generation suitable for long term modelling; (2) numerical ALM modelling; (3) algebraic formulae to estimate numerical results. 

From a practical point of view we may draw the following conclusions concerning the calculation of best estimates or future discretionary benefits: 

\begin{enumerate}
\item 
Theorem~\ref{thm:mflmm} provides sufficient conditions for the existence and uniqueness of solutions to the mean-field Libor market model. If the MF-LMM is specified by \eqref{e:LMM-mf-var} these assumptions are straightforward to check. This allows to extend the Libor market model, which is very popular in  the insurance industry, to the mean-field setting, and thus reduce the probability of explosion (cf.\ \cite{MFLMM}). 
\item 
The representation formula~\eqref{e:fdb-rep} allows to represent the $FDB$ without any policyholder cash-flows. Since $FDB_0$ is known, one expects that this formula should lead to a reduced Monte Carlo error in any numerical simulation. To apply this formula it should be checked that the evolution equation \eqref{e:evol_pot} and the no-leakage principle are complete, i.e.\ contain all relevant flows, else these should be amended to arrive at an augmented representation. Numerical evidence for the reduced Monte Carlo error  is provided in Table~\ref{tab:num_res} where it can be seen that 
$FDB_{rep}^{mce}$, corresponding to representation~\eqref{e:fdb-rep}, is significantly smaller than $FDB_{CF}^{mce}$. 
\item 
The representation formula also gives conditions on the suitable market consistency  of an interest rate scenario generator: the expressions $E[B_t^{-1}cog_t]$ should be priced as accurately as possible.  To connect this to market data we can use the simplified model \eqref{e:gshat}, acting as an over-estimate (cf.~\eqref{e:ass8test}), to arrive at the caplet prices $\mathcal{O}_t^-$ defined in \eqref{e:O}. The market implied value, $\widehat{COG}$, can now be compared to the model implied value, $\widehat{COG}_{MC}$. These values are contained in Table~\ref{tab:num_res}, and hence it can be observed that the MF-LMM provides an acceptable level of market consistency (compared to $MV_0$) but certainly not an exact fit.   
\item 
Section~\ref{sec:numALM} contains a detailed description of management rules. These rules are designed to maximize shareholder value while meeting certain constraints (most prominently the surplus fund constraint \eqref{e:theta-con}) to realistically remain a competitive insurance provider. However, this maximization is only speculative and not proven in any formal sense. In this regard it would be interesting to analyze the connection to the optimal control problem stated in Section~\ref{sec:con}, this is left as an open problem for future investigation.
\item 
Section~\ref{sec:assump} contains the assumptions and numerical study providing evidence for their validity so that we may derive lower and upper bound estimates in Section~\ref{sec:est}. The relevant formulae are \eqref{e:LB}, \eqref{e:UB} and \eqref{FDBhat} for $\widehat{LB}$, $\widehat{UB}$ and $\widehat{FDB}$, respectively. The crucial point of these formulae is that their calculation is very easy, it is purely algebraic and depends only on a few numbers. Table~\ref{tab:num_res} shows that the estimates hold over a relatively wide range of parameters. We have  considered $9$ scenarios corresponding to different levels of initial unrealized gains and scalings of premium payments. The latter is equivalent to varying the relationship between prevailing yield curve and level of minimum guarantee rate. 
\item 
Section~\ref{sec:All_est} contains an application of the estimation formulae to public data of a real life insurance company for \revise{$7$} different accounting years. These results are summarized in Table~\ref{tab:0}, and it can be observed that the estimation is successful to a remarkable degree of accuracy -- given that the \revise{$7$} accounting years represent quite different economic conditions and that all the relevant information is taken from publicly available records. These sources are provided en d\'etail in Tables~\ref{table:values} and \ref{tbl:source}, and have to be complimented by the prevailing yield curve (\cite{RFR}). 
\end{enumerate}

%>> Please note that all manuscripts must contain all the following sections under the heading 'Declarations'. The Declarations should follow the Conclusions section, and be before the References. 

\section*{Declarations}
\subsection*{Funding} 
%(information that explains whether and by whom the research was supported)
This work is not supported by external funding. 
\subsection*{Conflicts of interest/Competing interests} 
%(include appropriate disclosures)
The authors declare no conflicts of interest or competing interests.
\subsection*{Availability of data and material} 
%(data transparency)
All data used in this work is either publicly available (sources are given in the References) or aggregated and anonymized. Data can be provided upon request to the corresponding author.  
\subsection*{Authors' contributions} 
%(please ensure that all authors are individually mentioned in the author contribution statement using their full names.)
Florian Gach analyzed most of the public data. Simon Hochgerner designed the project and wrote most of the manuscript. Eva Kienbacher and Gabriel Schachinger carried out most of the programming. All authors reviewed the manuscript.

%NOTE: All subsections should be present, and if any sections are not relevant to your manuscript, please state “Not applicable” beneath the subheading for the appropriate section. 

%>> Please clearly indicate the identity of the corresponding author on the title page.


\begin{thebibliography}{99}

\bibitem{Albrecher_etal18}
Albrecher, H., Bauer, D., Embrechts, P. et al. Asset-liability management for long-term insurance business. Eur. Actuar. J. 8, 9–25 (2018). \href{https://doi.org/10.1007/s13385-018-0167-5}{https://doi.org/10.1007/s13385-018-0167-5}

\bibitem{Bacin21}
Anna Rita Bacinello, Thorsten Sehner and Pietro Millossovich,
\emph{On the Market-Consistent Valuation of Participating Life Insurance Heterogeneous Contracts under Longevity Risk},
Risks \textbf{2021}, 9(1), 20. 
\href{https://doi.org/10.3390/risks9010020}{https://doi.org/10.3390/risks9010020}

\bibitem{BRW21} 
Bao, J, Ren, P \& Wang, F-Y, \emph{Bismut formula for Lions derivative of distribution-path dependent SDEs}, Journal of Differential Equations, vol. \textbf{282}, pp. 285-329. 
\href{https://doi.org/10.1016/j.jde.2021.02.019}{https://doi.org/10.1016/j.jde.2021.02.019}

\bibitem{Black76}
F. Black,
\emph{The pricing of commodity contracts}, J. Financial Economics \textbf{3} (1976), pp. 167-179.
\href{https://doi.org/10.1016/0304-405X(76)90024-6}{https://doi.org/10.1016/0304-405X(76)90024-6}

\bibitem{BM06}
D. Brigo, F. Mercurio, \emph{Interest rate models -- theory and practice}, Springer 2006. 

\bibitem{Car12}
P. Cardaliaguet, \emph{Notes on mean-field games}, P.-L. Lions' lecture at Coll\`ege de France (2012).


\bibitem{MFLMM}
S. Desmettre, S. Hochgerner, S. Omerovic, S.Thonhauser, 
\emph{A mean-field extension of the Libor market model}
International J. Theoretical Applied Finance, Vol. \textbf{25} No. 01 (2022), 2250005.
\href{https://doi.org/10.1142/S0219024922500054}{https://doi.org/10.1142/S0219024922500054}

\bibitem{D_etal17}
J. Dhaene, B. Stassen, K. Barigou, D. Linders, Z. Chen,
\emph{Fair valuation of insurance liabilities: Merging actuarial judgement and market-consistency},
Insurance: Mathematics and Economics \textbf{76} (2017),
pp 14-27. \href{https://doi.org/10.1016/j.insmatheco.2017.06.003}{https://doi.org/10.1016/j.insmatheco.2017.06.003}.

\bibitem{Delong}
L.~Delong, 
\emph{Practical and theoretical aspects of market-consistent valuation and hedging of insurance liabilities}, Bank i Kredyt, Narodowy Bank Polski, vol. 42(1) (2011), pages 49-78.


\bibitem{Dorobantu_etal20}
D. Dorobantu, Y. Salhi, P.-E. Thérond,  
\emph{Modelling Net Carrying Amount of Shares for Market Consistent Valuation of Life Insurance Liabilities},
Meth. Comput. Appl. Probab. \textbf{22}, pp. 711–745 (2020). \href{https://doi.org/10.1007/s11009-019-09729-1}{https://doi.org/10.1007/s11009-019-09729-1}

\bibitem{Engsner_etal17}
H. Engsner, M. Lindholm, F. Lindskog,
\emph{Insurance valuation: A computable multi-period cost-of-capital approach},
Insurance: Mathematics and Economics  \textbf{72}
(2017),
pp. 250-264.
\href{https://doi.org/10.1016/j.insmatheco.2016.12.002}{https://doi.org/10.1016/j.insmatheco.2016.12.002}

\bibitem{FN21}
D.K. Falden, A.K. Nyegaard, \emph{Retrospective Reserves and Bonus with policyholder Behavior}, Risks (2021) \textbf{9}(15). \href{https://doi.org/10.3390/risks9010015}{https://doi.org/10.3390/risks9010015}

\bibitem{Fil}
D. Filipovic, \emph{Term-Structure Models}, Springer Finance 2009. 

\bibitem{GH22}
F. Gach, S. Hochgerner, 
\emph{Estimation of future discretionary benefits in traditional life insurance}, 
ASTIN Bulletin \textbf{52}(3), 835-876 (2022). 
\href{https://doi.org/10.1017/asb.2022.16}{https://doi.org/10.1017/asb.2022.16}

\bibitem{Gerber}
H.U. Gerber, \emph{Life Insurance Mathematics}, Springer Berlin, Heidelberg 1997. 
\href{https://doi.org/10.1007/978-3-662-03460-6}{https://doi.org/10.1007/978-3-662-03460-6}

\bibitem{G11}
S. Gerhold,
\emph{Moment explosion in the LIBOR market model},
Statistics \& Probability Letters,
\textbf{81}, Issue 5 (2011).
\href{https://doi.org/10.1016/j.spl.2011.01.009}{https://doi.org/10.1016/j.spl.2011.01.009}

\bibitem{Gerstner08}
Thomas Gerstner, Michael Griebel, Markus Holtz, Ralf Goschnick and Marcus Haep,
\emph{A general asset-liability management model for the efficient simulation of portfolios of life insurance policies},
Insurance: Mathematics and Economics, 2008, vol. 42, issue 2, 704-716
\href{https://doi.org/10.1016/j.insmatheco.2007.07.007}{https://doi.org/10.1016/j.insmatheco.2007.07.007}

\bibitem{Gerstner09}
Thomas Gerstner, Michael Griebel, Markus Holtz, 
\emph{Efficient deterministic numerical simulation of stochastic asset-liability management models in life insurance},
Insurance: Mathematics and Economics
Volume 44, Issue 3, June 2009, Pages 434-446.
\href{https://doi.org/10.1016/j.insmatheco.2008.12.003}{https://doi.org/10.1016/j.insmatheco.2008.12.003}

\bibitem{HG19}
S. Hochgerner, F. Gach,
\emph{Analytical validation formulas for best estimate calculation in traditional life insurance}, 
Eur. Actuar. J. \textbf{9}, pp. 423–443 (2019). 
\href{https://doi.org/10.1007/s13385-019-00212-2}{https://doi.org/10.1007/s13385-019-00212-2}

\bibitem{JW17}
Jabin, PE., Wang, Z. (2017), \emph{mean-field Limit for Stochastic Particle Systems}, In: Bellomo, N., Degond, P., Tadmor, E. (eds) Active Particles, Volume 1 . Modeling and Simulation in Science, Engineering and Technology. Birkhäuser, Cham. \href{https://doi.org/10.1007/978-3-319-49996-3\_10}{https://doi.org/10.1007/978-3-319-49996-3\_10}

\bibitem{JMW08}
B. Jourdain, S. M\'el\'eard, W.A. Woyczynski,
\emph{Nonlinear SDEs driven by Lévy processes and related PDEs},
Alea 4:1-29, 2008.

\bibitem{Laurent16}
Laurent, J.-P., Norberg, R., and Planchet, F.,  
\emph{Modelling in Life Insurance –
A Management Perspective}, EAA Series (2016), Springer International Publishing.

\bibitem{McKean66}
    H.P.\ McKean, \emph{A class of Markov processes associated with nonlinear parabolic equations}, PNAS
\textbf{56} (1966)
 
\bibitem{OBrien09}
C. O'Brien,
\emph{Valuation of Life Insurance Liabilities on a Market-Consistent Basis: Experience from the United Kingdom},
Actuarial Practice Forum (2009). 

\bibitem{RMW} R. Rebonato, K. McKay \& R. White (2009), \emph{The SABR/LIBOR Market Model}, Wiley, 1. Edition (2009)

\bibitem{SS04}
T. Sheldon, A. Smith,  
\emph{Market Consistent Valuation of Life Assurance Business}, 
British Actuarial Journal \textbf{10}(3) (2004), 543-605.

\bibitem{Sni91} 
A.-S.\ Sznitman, \emph{Topics in propagation of chaos}, Springer Lecture Notes 1991

%\bibitem{T16}
%Pierre-Emmanuel Therond, 
%\emph{About Market Consistent Valuation in Insurance} Jean-Paul Laurent; Ragnar Norberg; Frédéric Planchet. Modelling in Life Insurance – A Management Perspective, Springer, 2016, 978-3-319-29774-3. ⟨hal-01296792⟩

\bibitem{VELP17}
J. Vedani, N. El Karoui, S. Loisel, J.-L. Prigent, \emph{Market inconsistencies of market-consistent European
life insurance economic valuations: pitfalls and practical solutions}, Eur. Actuar. J. \textbf{7} (2017). \href{https://doi.org/10.1007/s13385-016-0141-z}{https://doi.org/10.1007/s13385-016-0141-z}

\bibitem{W16}
Mario V. W\"uthrich,
\emph{Market-Consistent Actuarial Valuation}, Springer EAA Series 3rd Ed. 2016. 


%%%%%%%%%%
%%%%%%%%%%
%%%%%%%%%%

%\bibitem{eiopa_is}
%EIOPA Statistics, Balance sheet EEA, 
%\url{https://register.eiopa.europa.eu/\_layouts/15/download.aspx?SourceUrl=https://register.eiopa.europa.eu/Publications/Insurance\%20Statistics/SA_Balance_Sheet.xlsx}.

\bibitem{MindZV}
Bundesministerium der Finanzen (BMF), Verordnung über die Mindestbeitragsrückerstattung in der Lebensversicherung.

%\bibitem{BMFtax}
%Bundesministerium der Finanzen (BMF), Die wichtigsten Steuern im internationalen Vergleich (2019). \href{https://www.bundesfinanzministerium.de/Monatsberichte/2019/08/Inhalte/Kapitel-3-Analysen/3-2-wichtigste-steuern-im-internationalen-vergleich.html}{https://www.bundesfinanzministerium.de/Monatsberichte/2019/08/Inhalte/Kapitel-3-Analysen/3-2-wichtigste-steuern-im-internationalen-vergleich.html}

\bibitem{GBVVU}
Verordnung der Finanzmarktaufsichtsbeh\"{o}rde (FMA) \"{u}ber die Gewinnbeteiligung in der Lebensversicherung.

\bibitem{L1}
Directive 2009/138/EC of the European Parliament and of the Council of 25 November 2009 on the taking-up and pursuit of the business of Insurance and Reinsurance (Solvency II).

\bibitem{L2}
Commission Delegated Regulation (EU) 2015/35 of 10 October 2014 supplementing Directive 2009/138/EC of the European Parliament and of the Council on the taking-up and pursuit of the business of Insurance and Reinsurance (Solvency II).

\bibitem{L3templ}
Commission Implementing Regulation (EU) 2015/2450 of 2 December 2015 laying down implementing technical standards with regard to the templates for the submission of information to the supervisory authorities according to Directive 2009/138/EC of the European Parliament and of the Council

%\bibitem{VAG}
%Bundesgesetz \"{u}ber den Betrieb und die Beaufsichtigung der Vertragsversicherung (Versicherungsaufsichtsgesetz 2016 – VAG 2016)

\bibitem{RFR}
Risk-free interest rate term structure: 
\href{https://eiopa.europa.eu/regulation-supervision/insurance/solvency-ii-technical-information/risk-free-interest-rate-term-structures}{https://eiopa.europa.eu/regulation-supervision/insurance/solvency-ii-technical-information/risk-free-interest-rate-term-structures}.  Accessed 14 Sep 2023.


\bibitem{SFCR}
Allianz Lebensversicherungs-AG Bericht \"{u}ber Solvabilit\"{a}t und Finanzlage. \href{https://www.allianzdeutschland.de/berichte-ueber-solvabilitaet-und-finanzlage}{https://www.allianzdeutschland.de/berichte-ueber-solvabilitaet-und-finanzlage}. Accessed 14 Sep 2023

\bibitem{GB}
Allianz Lebensversicherungs-AG Gesch\"{a}ftsbericht. 
\href{https://www.allianzdeutschland.de/geschaeftsberichte-der-allianz-deutschland-ag}{https://www.allianzdeutschland.de/geschaeftsberichte-der-allianz-deutschland-ag}.
Accessed 14 Sep 2023


  
%\bibitem{BaFin}
%Bundesanstalt f\"{u}r Finanzdienstleistungsaufsicht (2019) Statistik der BaFin - Erstversicherungsunternehmen - Lebensversicherung. \url{https://www.bafin.de/SharedDocs/Downloads/DE/Statistik/Erstversicherer/dl_st_19_erstvu_lv_va_xls.xlsm?__blob=publicationFile&v=2}

%\bibitem{Deloitte18}
%Deloitte report, \emph{The perfect smile},
%\href{https://www2.deloitte.com/content/dam/Deloitte/be/Documents/risk/deloitte-be-the-perfect-smile.pdf}{https://www2.deloitte.com/content/dam/Deloitte/be/Documents/risk/deloitte-be-the-perfect-smile.pdf} (downloaded 2.1.2021)

\end{thebibliography}
\end{document}